\newcommand{\mathcmd}[1]{\ensuremath{#1}\xspace}
\newcommand{\dlfont}{\mathcal}
\newcommand{\dl}[1]{\mathcmd{\dlfont{#1}}}
\newcommand{\idRole}{\mathcmd{\textsf{\upshape\mdseries id}}}
\newcommand{\ALBOid}{\dl{ALBO\smash[t]{\mbox{${}^{\idRole}$}}}}
\newcommand{\ALC}{\dl{ALC}}
\def\And{\sqcap}
\def\Not{\neg}
\newcommand{\ALB}{\dl{ALB}}
\newcommand{\ALBO}{\dl{ALBO}}
\def\Or{\sqcup}
\newcommand{\subsumed}{\sqsubseteq}
\def\bigAnd{\bigsqcap}
\newcommand{\ALCO}{\dl{ALCO}}
\def\lAnd{\wedge}
\def\lOr{\vee}
\def\lNot{\neg}
\newcommand{\sub}{\mathcmd{\textsf{\upshape sub}}}
\newcommand{\N}{\mathbb{N}}
\newcommand{\metasmbfont}{\mathsf}
\newcommand{\Card}{\mathcmd{\metasmbfont{Card}}}
\def\@define#1{%
    \mathcmd{%
        \stackrel%
            {\mbox{%
                \tiny\ensuremath{\metasmbfont{def}}%
                  }%
            }%
            {{#1}}%
            }%
              }
\def\sm@shdefine#1{%
\smash[t]{\@define{#1}}%
}
\newcommand{\define}{%
        \mathchoice{\@define{\ =\ }}{\sm@shdefine{=}}{\sm@shdefine{=}}{\sm@shdefine{=}}%
}
\newcommand{\defiff}{%
        \mathchoice%
            {\@define{\ \Longleftrightarrow\ }}%
            {\sm@shdefine{\Longleftrightarrow}}%
            {\sm@shdefine{\Leftrightarrow}}%
            {\sm@shdefine{\Leftrightarrow}}%
}
\newcommand{\ecl}[1]{\|#1\|}
\newcommand{\branch}[1]{\seg{#1}}
\newcommand{\seg}[1]{\mathcmd{\mathcal{#1}}}
\newcommand{\tor}{\,\mid\,}
\newcommand{\tand}{\,\,\ \ \,}
\newcommand{\tableaulblfont}{\sffamily}
\def\tableauleftlbldelim{}
\def\tableaurightlbldelim{:\ }
\def\tableauleftlbldelim{\ :}
\def\tableaurightlbldelim{}
\def\tableaulblfmt#1{\text{\tableaulblfont\tableauleftlbldelim #1\tableaurightlbldelim}}
\def\@xtableaurule#1#2{%
\genfrac{}{}{}{0}{#1}{#2}%
}
\def\@ytableaurule#1#2[#3]{%
\def\lbl@{#3}
\iftagsleft@%
\tableaulblfmt{#3}\@xtableaurule{#1}{#2}%
\else%
\@xtableaurule{#1}{#2}\tableaulblfmt{#3}%
\fi%
\let\label\ltx@label
\let\@currentlabel\lbl@
}
\def\tableaurule#1#2{%
\@ifnextchar[{\@ytableaurule{#1}{#2}}{\@xtableaurule{#1}{#2}}%
}
\def\@xinlinetableaurule#1#2{%
{#1}/{#2}%
}
\def\@yinlinetableaurule#1#2[#3]{%
\def\lbl@{#3}
\iftagsleft@%
\tableaulblfmt{#3}\@xinlinetableaurule{#1}{#2}%
\else%
\@xinlinetableaurule{#1}{#2}\tableaulblfmt{#3}%
\fi%
\let\label\ltx@label
\let\@currentlabel\lbl@
}
\def\inlinetableaurule#1#2{%
\@ifnextchar[{\@yinlinetableaurule{#1}{#2}}{\@xinlinetableaurule{#1}{#2}}%
}
\def\tfillsymbol{\mbox{\fontsize{3}{4}\selectfont.}}
\def\tfill{%
  \leavevmode
  \cleaders \hb@xt@ .44em{\hss{\tfillsymbol}\hss}\hfill
  \kern\z@}
\newcommand{\titem}[2]{$#1$\tfill #2}
\newcommand{\tbranch}{{\blacktriangleright}}
\newcommand{\unsat}{\text{Unsatisfiable}}
\newcommand{\sat}{\text{Satisfiable}}
\def\@@tskip{\phantom{\tbranch}}%
\def\@@@tskip{\@@tskip\advance\@tskipcnt\m@ne}%
\def\@tskip[#1]{%
\@tskipcnt#1
\loop\ifnum\@tskipcnt>\z@ \@@@tskip\repeat}%
\def\tskip{%
\@ifnextchar[{\@tskip}{\@@tskip}}
\def\complexityfont{\sffamily\upshape}
\def\complexity#1{\mathcmd{\text{\complexityfont #1}}}
\def\NExpTime{\complexity{NExpTime}}
\newcommand{\inlineproverfont}{\scshape\mdseries}
\DeclareTextFontCommand{\proverfont}{\inlineproverfont}
\newcommand{\@mettel}{{\inlineproverfont Met\kern-.063emTeL}}
\DeclareRobustCommand{\mettel}{\mathcmd{\text{\@mettel}}}
\def\@mettelII{{\inlineproverfont\@mettel{}\kern-.33em\raise.7ex\hbox{2}}}
\def\f@smash{\ht\z@\z@ \box\z@}
\newcommand{\@@mettelII}{\makesm@sh{\@mettelII}\f@smash}
\DeclareRobustCommand{\mettelII}{\mathcmd{\text{\@@mettelII}}}
\newcommand{\spass}{\mathcmd{\proverfont{SPASS}}}
\newcommand{\mspass}{\mathcmd{\proverfont{MSPASS}}}
\newcommand{\vampire}{\mathcmd{\proverfont{Vampire}}}
\newcommand{\E}{\mathcmd{\proverfont{E}}}
\newcommand{\ALCQIb}{$\dlfont{ALCQI}b$\xspace}
\newcommand{\indiv}{a}
\newcommand{\cname}{A}
\newcommand{\rname}{Q}
\newcommand{\ctop}{\mathord{\top}}
\newcommand{\cbot}{\mathord{\bot}}
\newcommand{\roletop}{\mathord{\triangledown}}
\newcommand{\rolebot}{\mathord{\vartriangle}}
\newcommand{\Objects}{\textsf{\slshape O}\xspace\/}
\newcommand{\Concepts}{\textsf{\slshape C}\xspace\/}
\newcommand{\Roles}{\textsf{\slshape R}\xspace\/}
\newcommand{\DomR}{{\upharpoonright}}
\newcommand{\RngR}{{\downharpoonleft}}
\newcommand{\lcyl}[1]{\ensuremath{#1^{c}}}
\newcommand{\rcyl}[1]{\ensuremath{{}^{c}\!#1}}
\newcommand{\I}{\mathcal{I}}
\newcommand{\J}{\mathcal{J}}
\newcommand{\TALBOidub}{\mathcmd{T_\ALBOid\eqref{rule: unrestricted blocking}}}
\newcommand{\napplE}{\#^\exists}
\newcommand{\simB}{\mathop{\sim_\branch{B}}}
\newcommand{\notsimB}{\mathop{\not\sim_\branch{B}}}
\newcommand{\IB}{{\I(\branch{B})}}
\renewcommand{\tableauleftlbldelim}{(}
\renewcommand{\tableaurightlbldelim}{):\ }
\renewcommand{\tand}{,\ \ }
\renewcommand{\tbranch}{{\qquad\blacktriangleright}}
\newcommand{\ST}{\mathcmd{\textsf{\upshape ST}}}
\newlength{\mylength}
\title{Using Tableau to Decide Description Logics\\ with Full Role Negation and Identity}
\author{RENATE A. SCHMIDT and DMITRY TISHKOVSKY 
\affil{%
The University of Manchester, UK}}
\begin{abstract}
This paper presents a tableau approach for deciding expressive
description logics with full role negation and role identity.
We consider the description logic \ALBOid, which
is the extension of \ALC with the Boolean role operators,
inverse of roles, the identity role, and includes full support
for individuals and singleton concepts.
\ALBOid is expressively equivalent to the two-variable fragment of
first-order logic with equality and subsumes Boolean modal
logic.
In this paper we define a sound and complete tableau calculus for
the \ALBOid that provides a basis for
decision procedures for this logic and all its sublogics.
An important novelty of our approach is the use of a generic
unrestricted blocking mechanism.
Being based on a conceptually simple rule, unrestricted blocking 
performs case distinctions over whether two individuals are equal
or not and equality reasoning to find finite models.
The blocking mechanism ties the proof of termination of tableau
derivations to the %
finite model property of \ALBOid.
\end{abstract}
  \def\@journalName{}
  \def\@journalNameShort{}
  \def\@permissionCodeOne{}
\begin{document}

\maketitle

\section{Introduction}

Mainstream description logics languages and ontology web languages
are equipped with a rich supply of syntactic constructs and operators
for supporting the needs of various applications.
In these languages the support of concepts and roles, the main
syntactic entities in description logic languages, is 
is slightly uneven however. A notable absence is the negation operator for roles.
In the description logic \ALC and other popular extensions
of \ALC it is possible to define
a \emph{spam filter} as a mechanism for filtering
out spam emails, and
a \emph{sound spam filter} as a spam filter that
filters out only spam emails,
by specifying
\begin{align*}
\textsf{spam-filter} & \define \textsf{mechanism}
\And \exists\textsf{filter-out}.\textsf{spam-email} \quad \text{and}
\\
\textsf{sound-spam-filter} & \define \textsf{spam-filter} \And
\Not\exists\textsf{filter-out}.\Not\textsf{spam-email}.
\end{align*}
It is not possible however to define \emph{a complete spam filter}
as a spam filter that filters out every spam email.
With role negation this can be expressed by the following. 
\begin{align*}
\textsf{complete-spam-filter} 
& \define \textsf{spam-filter} \And
\Not\exists\Not\textsf{filter-out}.\textsf{spam-email}.
\end{align*} 
The first occurrence of negation is a concept negation operator, which
almost all description logics support, while the second occurrence is a
role negation operator.
The role negation operator is
not available in \ALC or other current description logics that form
the basis of OWL
DL/1.1/2.0~\cite{BaaderCalvaneseEtal03,owl,owl2}.

The three examples can be expressed in first-order logic as follows.
\begin{align*}
\forall x [\textsf{spam-filter}(x) & \leftrightarrow .\
\textsf{mechanism}(x) \land \exists y [ \textsf{filter-out}(x,y)
\land \textsf{spam-email}(y)]]
\\
\tag{$\dagger$} 
\forall x [\textsf{sound-spam-filter}(x) & \leftrightarrow .\
\textsf{spam-filter}(x) \land \forall y [ \textsf{filter-out}(x,y) 
\rightarrow \textsf{spam-email}(y)]]
\\
\tag{$\ddagger$}
\forall x [\textsf{complete-spam-filter}(x) & \leftrightarrow .\
\textsf{spam-filter}(x) \land \forall y [ \textsf{spam-email}(y) 
\rightarrow  \textsf{filter-out}(x,y)]].
\end{align*} 
The right-hand-sides of the equivalences~($\dagger$) and ($\ddagger$) involve
universal quantification but of a different kind.
In~($\dagger$) it is the image elements of the role \textsf{filter-out}
that are universally quantified, while in ($\ddagger$) it is the
elements in the concept \textsf{spam-email} that are universally quantified.
($\dagger$)~expresses the \emph{necessity} of a property (what is
filtered out is necessarily a spam message), while ($\ddagger$)
expresses the \emph{sufficiency} of a property (being a spam message
is sufficient to be filtered out).
Natural examples of both kinds of universal quantification can
be found in many domains and are common in every-day language.

Motivated by pushing the limits of description logics, in this paper
we are interested in description logics that
allow role negation, and can therefore accommodate examples such as the
above,
but also provide a range of other role operators not typically
supported in mainstream description logics.
In particular, we focus on a description logic, called \ALBOid.
\ALBOid is the description logic \ALC extended by union
of roles, negation of roles, inverse of roles, and the identity role.
In addition, it provides full support for ABox individuals and singleton
concepts.
\ALBOid is an extension of the description logic
\ALB, introduced in~\cite{HustadtSchmidt00a}, with individuals and singleton concepts,
called nominals in modal logic, and the identity role.
\ALB is the extension of \ALC, in which concepts and roles form
Boolean algebras, and additional operators include inverse of roles
and a domain restriction operator.
It is shown in this paper (Section~\ref{section_definition_ALBO})
that the domain restriction operator
can be linearly encoded using the other operators of \ALB.

\ALBOid is a very expressive description logic.
It subsumes Boolean modal
logic~\cite{GargovPassy90,GargovPassyTinchev87} and tense, hybrid
versions of Boolean modal logic with the $@$~operator and
nominals.
It can also be shown that \ALBOid is expressively
equivalent to the two-variable fragment of first-order
logic with equality. %
\ALBOid is in fact very close to the brink of undecidability, because
it is known that adding role composition to \ALB
takes us into undecidable territory.

Since \ALBOid subsumes Boolean modal logic it follows
from~\citeN{LutzSattler02} that the satisfiability problem in \ALBOid
is \NExpTime-hard.
\citeN{GraedelKolaitisVardi97} showed that satisfiability in the
two-variable first-order fragment with equality is \NExpTime-complete.
It follows therefore that the computational complexity of
\ALBOid-satisfiability is \NExpTime-complete.

Description logics with full role negation can
be decided by translation to first-order logic and first-order
resolution theorem provers such as \mspass~\cite{HustadtSchmidtWeidenbach99}, 
\spass~\cite{WeidenbachSchmidtEtAl07}, \E~\cite{Schulz02} and
\vampire~\cite{RiazanovVoronkov99}.
The paper~\cite{HustadtSchmidt00a} shows that the logic \ALB can be
decided by translation to first-order logic and ordered resolution.
This result is extended by~\citeN{DeNivelleSchmidtHustadt00} to \ALB
with positive occurrences of composition of roles.
\ALBOid can be embedded into the two-variable fragment of first-order
logic with equality which can be decided with first-order resolution
methods~\cite{DeNivellePrattHartmann01}.
This means that \ALBOid %
can be decided using first-order
resolution methods.

None of the current tableau-based description
logic systems are able to handle \ALBOid or \ALB because they
do not support full role negation.
In fact, few tableau calculi or tableau procedures have been described for
description logics with complex role operators, or equivalent dynamic
modal logic versions.
Ground semantic tableau calculi and tableau decision procedures
are presented by~\citeN{DeNivelleSchmidtHustadt00} for the modal
versions of $\ALC(\Or,\And,{}^{-1})$, that is, \ALC with role union, role
intersection and role inverse. These are extended with the domain restriction
operator to $\ALC(\Or,\And,{}^{-1}, \DomR)$ by~\citeN{Schmidt06b}.
A semantic tableau decision procedure for \ALC with role intersection,
role inverse, role identity and role composition is described
by~\citeN{Massacci01}.
None of these tableaux make provision for the role negation operator however.

A tableau decision procedure for the description logic \ALCQIb which
allows for Boolean combinations of `safe' occurrences of negated
roles is discussed by~\citeN{Tobies01}.
From the clausal form of the first-order translation it can be seen that
safeness ensures guardedness which is
violated by unsafe occurrences of role negation.
Guardedness is also important in the definition of a generalisation 
with relational properties of the dynamic modal logic corresponding to
$\ALC(\Or,\And,{}^{-1})$ in~\citeN{DeNivelleSchmidtHustadt00}, where it
is shown that hyperresolution and splitting decides this logic.
The hyperresolution decision procedure for this logic is essentially a
hypertableau approach with similarities to~\citeN{MotikShearerHorrocks09}.
Decidability of propositional dynamic logic with negation of atomic
relations using B\"uchi automata is shown by~\citeN{LutzWalther04}.
\citeN{SchmidtOrlowskaHustadt04a} presented a sound and complete ground
semantic tableau calculus for Peirce logic, %
which is equivalent to the extension of \ALB with role composition and
role identity.
However the tableau is not terminating because reasoning in Peirce
logic is not decidable.

In this paper we present a ground semantic tableau approach that
decides the description logic \ALBOid.
In order to limit the number of individuals in the tableau derivation and guarantee
termination we need a
mechanism for finding finite models.
\emph{Standard loop checking} mechanisms are based on
comparing sets of (labelled or unlabelled)
concept expressions such as subset blocking or equality blocking
in order to detect periodicity in the underlying models.
Instead of using the standard loop checking mechanisms,
our approach is based on a new inference rule, called the \emph{unrestricted
blocking} rule, and equality reasoning. 
Our approach has the following advantages over standard loop checking.
\begin{longitem}
 \item It is conceptually simple and easy to implement.
 It does not require specialised blocking tests and 
procedural descriptions in terms of status variables.
All individuals are blockable and once blocked remain blocked.
 \item The blocking mechanism is generally sound and complete. 
    This means that it is applicable also to other deduction
    methods and other
    logics~\cite{SchmidtTishkovsky-GTM+-2008}
    including full first-order logic, where it can be used
    to find finite models. 
 \item It provides greater flexibility in constructing models.
       For instance, it can be used to construct small models for a satisfiable concept,
       including domain minimal models.
 \item
The approach has the advantage that it constructs
real models, whereas existing tableau procedures for many OWL 
description logics construct only pseudo-models that are not always
real models but can be completed by post-processing to real
models (which may be infinite).
\item
It can be implemented and simulated in first-order logic
provers~\cite{BaumgartnerSchmidt08}.
\end{longitem}

The style of presentation of our tableau calculus is similar to
presentations in modal and hybrid logic,
for example~\cite{Fitting-TMF+-1972,Blackburn-ILD-2000,DeNivelleSchmidtHustadt00,Schmidt06b,SchmidtOrlowskaHustadt04a,delCerroGasquet-GF+-2002}.
Notable about our calculus compared, for example,
with~\cite{Schmidt06b,SchmidtOrlowskaHustadt04a} is that it operates
only on ground labelled \emph{concept} expressions.
This makes it easier in principle to implement the calculus as extensions
of tableau-based description logic systems that can handle
singleton concepts and include equality reasoning.

The structure of the paper is as follows.
The syntax and semantics of \ALBOid are defined in
Section~\ref{section_definition_ALBO}.
In Section~\ref{section: efmp} we prove that \ALBOid has
the effective finite model property by reducing \ALBOid-satisfiability to
the two-variable fragment of the first-order logic.
We define a tableau calculus for \ALBOid in Section~\ref{section_tableau},
and in Section~\ref{section: Soundness and completeness}, we prove that it is sound and complete without the unrestricted
blocking rule.
Sections~\ref{section_blocking} and~\ref{section_termination} introduce
our blocking mechanism and prove soundness, completeness and termination of the
tableau calculus extended with this mechanism.
In Section~\ref{section: Complexity}
we prove that an appropriate strategy for the application of the unrestricted
blocking rule yields a tableau procedure with optimal complexity.
We define general criteria for deterministic decision
procedures for \ALBOid (and any sublogics) in
Section~\ref{section_implementation}.
Various issues concerning blocking and our results are discussed in
Section~\ref{section_discussion}.

The paper is an extended and improved version of the conference
paper~\cite{SchmidtTishkovsky-UTD+-2007} and builds on results in
\citeN{SchmidtTishkovsky-GTM+-2008}.

\section{Syntax and Semantics of \protect\ALBOid}
\label{section_definition_ALBO}

The syntax of \ALBOid is defined over a signature, denoted by
$\Sigma=(\Objects,\Concepts,\Roles)$, consisting of three disjoint alphabets:
$\Objects$, the alphabet of individuals or object names,
$\Concepts$, the alphabet of concept symbols,
and $\Roles$, the alphabet of role symbols.
For individuals we use the notation $\indiv, \indiv', \indiv'', \indiv_i$, 
for concept symbols $\cname, \cname', \cname_i$, and 
for role symbols $\rname, \rname', \rname'', \rname_i$. 
The language includes a special role symbol~$\idRole$ for
the \emph{identity} role which is regarded as a constant.
The logical connectives are:
$\Not$ (\emph{negation}), $\Or$ (\emph{union} for both concepts and roles), $\exists$
(\emph{existential concept restriction}), and ${}^{-1}$ (role \emph{inverse}).
\emph{Concept expressions} (or \emph{concepts}) $C, D$ and \emph{role
expressions}
(or \emph{roles}) $R, S$ are defined by the following grammar rules:
\begin{align*}
 C, D &\define \cname\ \mid\ \{\indiv\}\ \mid\ \Not C\ \mid\ C\Or D\ \mid\ \exists R.C,\\
 R, S &\define \rname\ \mid\ R\Or S\ \mid\ R^{-1}\ \mid\ \Not R\ \mid\ \idRole.
\end{align*}
We refer to $\{\indiv\}$ as a %
\emph{singleton}
concept.
By the \emph{length} of an \ALBOid-expression $E$ we understand
the length in symbols of a word which represents $E$
in the above language.

A TBox is a finite set of concept inclusion statements of the
form $C\subsumed D$ and an RBox is a finite set of
role inclusion statements of the form $R\subsumed S$. 
An ABox is a finite set of statements of the form $\indiv:C$
and $(\indiv,\indiv'):R$, which are called \emph{concept assertion} and 
\emph{role assertions}, respectively.
A \emph{knowledge base} is a tuple~$(\mathcal{T},\mathcal{R},\mathcal{A})$
consisting of a TBox $\mathcal{T}$, an RBox $\mathcal{R}$,
and an ABox $\mathcal{A}$.

Next, we define the semantics of \ALBOid.
A \emph{model} (or an \emph{interpretation}) $\I$ of \ALBOid is a tuple
\[\I=(\Delta^\I,\cname^\I,\ldots,\indiv^\I,\ldots,\rname^\I,\ldots),\]
where $\Delta^\I$ is a non-empty set, and for any concept symbol $\cname$, any role
symbol $\rname$ and any individual~$\indiv$: 
\[
\text{$\cname^\I$ is a subset of $\Delta^\I$,} \quad
\indiv^\I\in\Delta^\I \quad \text{and}\quad
\text{$\rname^\I$ is a binary relation over $\Delta^\I$}.
\]
We expand the interpretation $\I$ to all concepts and roles with the use of the following definitions
(for
any concepts $C$, $D$, roles $R$, $S$, and individuals~$\indiv$):
\begin{align*}
 \{\indiv\}^\I &\define\{\indiv^\I\},&
 (R\Or S)^\I &\define R^\I\cup S^\I,\\
 (\Not C)^\I &\define\Delta^\I\setminus C^\I,&
 (R^{-1})^\I &\define(R^\I)^{-1}\ =\ \{(x,y)\mid(y,x)\in R^\I\},\\
 (C\Or D)^\I &\define C^\I\cup D^\I,&
 (\Not R)^\I &\define(\Delta^\I\times\Delta^\I)\setminus R^\I,\\
 (\exists R.C)^\I &\define\{x\mid\exists y\in C^\I\ (x,y)\in R^\I\},&
 \idRole^\I&\define\{(x,x)\mid x\in\Delta^\I\}.
\end{align*}

A concept $C$ is \emph{satisfied} in a model $\I$ iff
$C^\I\neq\emptyset$.
A set of concepts is satisfied in a model $\I$ iff
all the concepts in the set are satisfied in $\I$.
A concept is \emph{satisfiable} %
iff there is a model where it is satisfied.

Slightly departing from description logic conventions, we say 
a concept $C$ is \emph{valid} in a model $\I$ iff
$C^\I=\Delta^\I$.
A set of concepts is valid in a model $\I$ iff
all the concepts in the set are valid in $\I$.
If $E$ is a concept expression, a concept inclusion statement, a role
inclusion statement, a concept assertion or a role assertion,
we indicate by $\I\models E$ that~$E$ is valid in the model $\I$.\footnote{In the literature $\I\models E$ is
normally only defined when $E$ is an inclusion, equivalence or
assertion, and then it is said that $E$ is true in $\I$.}
In particular, shortening the definition of concept validity, we can write that,
in any model $\I$ and for any concept $C$, 
\[\I\models C\defiff C^\I=\Delta^\I.\]
Concept and role inclusion statements are interpreted as subset relationships, and
concept and role assertions are interpreted as element-of
relationships. In particular, in any model $\I$ and
for any concepts~$C$,~$D$, any roles $R$, $S$, and any individuals $\indiv$, $\indiv'$:
\begin{align*}
\I \models C\subsumed D& \defiff C^\I\subseteq D^\I,&
\I \models \indiv:C& \defiff \indiv^\I\in C^\I,\\
\I \models R\subsumed S& \defiff R^\I\subseteq S^\I,&
\I \models (\indiv,\indiv'):R& \defiff (\indiv^\I,\indiv'^\I)\in R^\I.
\end{align*}

A concept $C$ is \emph{satisfiable with respect to a knowledge base
$(\mathcal{T},\mathcal{R},\mathcal{A})$} iff $C$ is satisfied in a
model~$\I$ validating all the statements from the knowledge base,
that is, $\I\models E$ for every $E\in \mathcal{T}\cup \mathcal{R}\cup
\mathcal{A}$.

\ALBOid has considerable expressive power
and many useful additional operators can be defined in it.
These include the following standard operators:
\begin{align*}
& \text{Top concept:} &\ctop & \define \cname\Or\Not \cname\text{ (for some concept symbol $\cname$)}\\
& \text{Bottom concept:} &\cbot & \define \Not\ctop\\
& \text{Concept intersection:} & C \And D& \define \Not(\Not C\Or\Not D)\\
& \text{Universal restriction:}& \forall R.C & \define \Not\exists R.\Not C\\
& \text{Image operator:} & \exists^{-1}R.C & \define \exists R^{-1}.C\\
\intertext{
plus these operators, which can be defined using role negation:
}
& \text{Top/universal role:} &\roletop & \define \rname\Or\Not \rname \text{ (for some role symbol $\rname$)}\\
& \text{Bottom/empty role:} &\rolebot & \define \Not\roletop \\
& \text{Role intersection:} & R \And S& \define \Not(\Not R\Or\Not S)\\
& \text{Diversity role:} &\textsf{div} & \define \Not\idRole \\
& \text{Universal modality:} &\Box C & \define \forall\roletop.C\\
& \makebox[\mylength][l]{\text{Sufficiency operator:}} & 
    \makebox[.3\mylength][r]{$\overline\forall R.C$} & \define \makebox[1.8\mylength][l]{$\neg \exists\neg R.C$}
\end{align*}
The bottom role can be expressed by using a concept inclusion stating
that the domain of the role is empty.
The sufficiency operator is also known as the window operator, see for example~\cite{GargovPassyTinchev87}.

Concept assertions can be internalised as concept expressions
as follows:
\begin{align*}
& \text{Concept assertions:}&
\indiv:C & \define\exists\roletop.(\{\indiv\}\And C).
\intertext{%
A role assertion $(\indiv,\indiv'):R$ can be expressed as a
concept assertion, namely}
& \text{Role assertions:}&
(\indiv,\indiv'):R & \define \indiv:\exists R.\{\indiv'\},
\intertext{%
or, using the above, it can be internalised as a concept expression.
In addition, concept and role inclusion axioms can be internalised
as concept expressions.
}
& \makebox[\mylength][l]{\text{Concept inclusion:}}&
    \makebox[.3\mylength][r]{$C\subsumed D$}& 
    \define \makebox[1.8\mylength][l]{$\Box(\Not C\Or D)$}\\
& \text{Role inclusion:}&
R\subsumed S & \define \Box\forall\Not(\Not R\Or S).\cbot
\end{align*}

Concept assertions, role assertions, concept inclusions and role
inclusions in \ALBOid are therefore all representable as concept
expressions.
This means that in \ALBOid the distinction between TBoxes, ABoxes and
RBoxes is strictly no longer necessary, and concept satisfiability in
\ALBOid with respect to any knowledge base can be reduced to concept
satisfiability with respect to a knowledge base where the TBox, the RBox,
and the ABox are all empty.
More precisely, we have that a concept $C$ is satisfiable with respect to a knowledge
base $(\mathcal{T},\mathcal{R},\mathcal{A})$ iff\label{page: eliminating knowledge base} the conjunction 
\[
C\And\bigAnd (\mathcal{T}\cup\mathcal{R}\cup\mathcal{A})
\]
of the concept
$C$ and all the concept expressions from the set $\mathcal{T}\cup\mathcal{R}\cup\mathcal{A}$
is satisfiable. 
Similarly, a set $\mathcal{S}$ of concepts is satisfiable with
respect to a knowledge base  $(\mathcal{T},\mathcal{R},\mathcal{A})$ iff the set $\mathcal{S}\cup
\mathcal{T}\cup\mathcal{R}\cup\mathcal{A}$ is satisfiable.
Without loss of generality, in this paper, we therefore focus on the
problem of concept satisfiability in \ALBOid.

Boolean combinations of inclusion and assertion statements
of concepts and roles are expressible in \ALBOid, as the
corresponding Boolean combinations of the concept expressions which represent
these statements.
In fact,  
we are allowed to use such statements in concept and role expressions
within the language of \ALBOid provided that 
a well-formed \ALBOid expression is obtained
after replacing  
all these statements 
in the original expression
for well-formed \ALBOid expressions which define them.
For example, the following expression is a well-formed concept expression in \ALBOid:
\begin{eqnarray*}
& (\left(\textsf{Alice}\!:\!\Not\exists\textsf{likes}.\textsf{football\_fan}\right)\And
\left(\textsf{Bob}\!:\!\textsf{football\_fan}\right)\And
\left(\textsf{hasFriend}\subsumed\textsf{likes}\right))\\
& \subsumed
((\textsf{Alice},\textsf{Bob})\!:\!\Not\textsf{hasFriend}).
\end{eqnarray*}
It says that Alice and Bob cannot be friends because Bob is fanatical
about football and Alice does not like football fans.
The sentence uses an assumption expressed as a role inclusion that
somebody's friend is necessarily a person who he or she likes.
Notice that the role inclusion statement denotes an \ALBOid
expression which essentially uses the role negation operator.

The encodings of the operators mentioned above
preserve logical equivalence. 
Additionally,
\ALBOid also has enough expressive power to
represent concept satisfiability problems 
which involve the operators defined below: 
\begin{align*}
& \text{Test operator:} & (C\text{?})^\I & \define \{(x,x)\in\Delta^\I\times\Delta^\I\mid x\in C^\I\},\\
& \makebox[\mylength][l]{\text{Domain restriction:}} &
    \makebox[.3\mylength][r]{$(R \DomR C)^\I$} &
    \define \makebox[1.8\mylength][l]{$\{(x,y)\in R\mid x\in C^\I\},$}\\
& \text{Range restriction:} &
 (R \RngR C)^\I &\define\{(x,y)\in R\mid y\in C^\I\}, \\
& \text{Left cylindrification:} &
(\lcyl{D})^\I &\define\{(x,y)\in\Delta^\I\times\Delta^\I\mid x\in D^\I\},\\    
& \text{Right cylindrification:} &
 (\rcyl{D})^\I &\define\{(x,y)\in\Delta^\I\times\Delta^\I\mid y\in D^\I\},\\
& \text{Cross product:} &
 (C\times D)^\I &\define C^\I\times D^\I\ =\ \{(x,y)\mid x\in C^\I, y\in D^\I\}.
\end{align*}
The test operator can be represented using the range restriction operator and the identity role:
$(C\text{?})^\I=(\idRole\RngR C)^\I$.
In the presence of role inverse, the universal role and role intersection
any single operator from $\{\DomR,
\RngR, {\cdot}^c, {}^c\!{\cdot}, \times\}$ is enough to define the
remaining operators from this set. 
In any model $\I$, the following equalities are
true:
\begin{align*}
(R \DomR C)^\I &\ =\ \left((R^{-1} \RngR C)^{-1}\right)^\I &
(R \RngR C)^\I &\ =\ \left((R^{-1} \DomR C)^{-1}\right)^\I \\ 
(\lcyl{D})^\I &\ =\ \left((\rcyl{D})^{-1}\right)^\I & 
(\rcyl{D})^\I &\ =\ \left(\lcyl{D}{}^{-1}\right)^\I \\
(\lcyl{D})^\I &\ =\ (\roletop\DomR D)^\I&
(\rcyl{D})^\I &\ =\ (\roletop\RngR D)^\I\\
(R \DomR D)^\I &\ =\ (R\And D^c)^\I&
(R \RngR D)^\I &\ =\ (R\And{}^c\!{D})^\I\\ 
(\lcyl{D})^\I &\ =\ (D\times\top)^\I&
(\rcyl{D})^\I &\ =\ (\top\times D)^\I\\ 
(C\times D)^\I &\ =\ (\lcyl{C}\And \rcyl{D})^\I &
\end{align*}
Now it is enough to show that one of these operators can be encoded in \ALBOid.
For example, expressions involving left cylindrification can be
\emph{linearly} encoded in \ALBOid by replacing all occurrences of
$\lcyl{D}$ in $C$ by a new role symbol $\rname_{D}$ uniquely associated with $D$
and adding the definitions
$\Not D\subsumed\forall \rname_{D}.\bot$ and
$D\subsumed\Not\exists\Not \rname_{D}.\top$
to the knowledge base.
This encoding preserves satisfiability equivalence.
In a similar way, the other operators from 
$\{\DomR,\RngR, {\cdot}^c, {}^c\!{\cdot}, \times\}$ can be linearly encoded in \ALBOid.

Often description logics are required to satisfy the unique name
assumption.
We do not assume it for \ALBOid.
However, the
unique name assumption can be enforced by adding disjointness
statements of the form $\{\indiv\} \And \{\indiv'\} \subsumed \bot$, for
every distinct pair of individuals that occur in the given knowledge base, to the TBox.

In this paper we do not use role assertions in the form
$(\indiv,\indiv'):R$ but use them in the form $\indiv:\exists R.\{\indiv'\}$.
We refer to $\indiv:\exists R.\{\indiv'\}$
as a \emph{link} (between the individuals $\indiv$
and $\indiv'$).

Later we refer to the description logics $\ALCO$ and $\ALCO^{\Box}$.
\ALCO is a sublogic of \ALBOid without all the role operators and the identity role.
$\ALCO^{\Box}$ extends \ALCO with the universal modality operator~$\Box$.

\section{Effective Finite Model Property}
\label{section: efmp}

In this section we show that \ALBOid has the effective finite model
property. In order to make use of the well-known result of~\citeN{GraedelKolaitisVardi97}
that the two-variable fragment of the first-order logic with equality
has the finite model property,
we encode \ALBOid-satisfiability of concepts in the
two-variable fragment of the first-order logic with equality.
Additionally, we define a mapping of models for the two-variable fragment back to \ALBOid-models
and prove that it preserves satisfiability modulo the mentioned encoding.

For every concept symbol $\cname$, role symbol $\rname$, and individual $\indiv$,
let $\underline{\cname}$ be a unary predicate symbol, $\underline{\rname}$
a binary predicate symbol, and $\underline{\indiv}$  a constant which
are uniquely associated with $\cname$, $\rname$, and $\indiv$, respectively.
The following defines a mapping $\ST$ as the standard translation of \ALBOid
parametrised by the variables $x$ and~$y$.
The standard translation of concepts is:
\begin{align*}%
 \ST_x(\cname) & \define \underline{\cname}(x) & \ST_x(\{\indiv\}) &\define x\approx\underline{\indiv}\\
 \ST_x(\Not C) &\define \lNot\ST_x(C) & \ST_x(C\Or D) &\define \ST_x(C)\lOr \ST_x(D)\\
 \ST_x(\exists R.C) &\define\exists y\,(\ST_{xy}(R)\lAnd \ST_y(C))\\
\intertext{The standard translation of roles is:}
 \ST_{xy}(\rname) & \define \underline{\rname}(x,y) & \ST_{xy}(\idRole) &\define x\approx y\\
 \ST_{xy}(R^{-1}) &\define \ST_{yx}(R) & \ST_{xy}(R\Or S) &\define \ST_{xy}(R)\lOr \ST_{xy}(S)\\
 \ST_{xy}(\Not R) &\define \lNot\ST_{xy}(R)
\end{align*}
Clearly, only two variables occur in $\ST_x(C)$ for every concept~$C$.
Given a concept $C$, the translation $\ST_x(C)$ of~$C$
can be computed in linear time with respect to the length of~$C$.

As the standard translation just follows the definition of the semantics
of \ALBOid, every \ALBOid-model~$\I$ can be viewed as a first-order model
over the signature $\underline{\Sigma}=(\{\underline{\indiv}\mid\indiv\in\Objects\},
                                       \{\underline{\cname}\mid \cname\in\Concepts\},
                                       \{\underline{\rname}\mid \rname\in\Roles\})$,
where the interpretations of the symbols
$\underline{\cname}$, $\underline{\rname}$, and $\underline{\indiv}$
are $\cname^\I$, $\rname^\I$, and $\indiv^\I$ respectively.
Similarly, every first-order model over a signature $\underline{\Sigma}$
can be seen as an \ALBOid-model.

We write $\I\models\phi[x_1\mapsto a_1,\ldots,x_n\mapsto a_n]$
to indicate that all free variables of 
a first-order formula~$\phi$ 
are contained in the set $\{x_1,\ldots,x_n\}$
and $\phi$ is true in $\I$ where the variables 
$x_1,\ldots,x_n$ are assigned the domain elements $a_1,\ldots,a_n$, respectively.

The following theorem can be proved by simultaneous induction on the
lengths of concepts and roles.
\begin{theorem}\label{theorem: embedding into FO2}
Let $\I$ be any \ALBOid-model. Then, for any concept $C$, any role $R$ and any $a,b\in\Delta^\I$,
\begin{itemize}
 \item $a\in C^\I$ iff $\I\models\ST_x(C)[x\mapsto a]$, and
 \item $(a,b)\in R^\I$ iff $\I\models\ST_{xy}(R)[x\mapsto a,y\mapsto
b]$.
\end{itemize}
\end{theorem}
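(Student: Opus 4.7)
The plan is to proceed by simultaneous induction on the structure (or equivalently, length) of the concept $C$ and the role $R$, which is precisely the recursion scheme employed in the definition of $\ST$. The two claims must be established together, because the clause $\ST_x(\exists R.C) \define \exists y\,(\ST_{xy}(R)\lAnd \ST_y(C))$ invokes the role-translation of $R$ and the concept-translation of $C$ simultaneously, and no other inductive pairing would be well-founded.

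For the base cases I would unfold the semantics of $\ALBOid$ and the definition of $\underline{\Sigma}$: for an atomic concept, $a\in\cname^\I$ iff $\underline{\cname}^\I(a)$ holds, which is exactly $\I\models\underline{\cname}(x)[x\mapsto a]$; for a singleton, $a\in\{\indiv\}^\I$ iff $a=\indiv^\I=\underline{\indiv}^\I$, matching $\I\models (x\approx\underline{\indiv})[x\mapsto a]$; for an atomic role, $(a,b)\in\rname^\I$ iff $\underline{\rname}^\I(a,b)$ iff $\I\models\underline{\rname}(x,y)[x\mapsto a, y\mapsto b]$; and for $\idRole$, $(a,b)\in\idRole^\I$ iff $a=b$ iff $\I\models (x\approx y)[x\mapsto a, y\mapsto b]$. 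All of these are immediate from the definitions.

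For the inductive steps on concepts, the cases $\Not C$ and $C\Or D$ reduce directly to the semantic clauses for complement and union together with the concept-part of the induction hypothesis. The case $\exists R.C$ is the only one where both parts of the induction hypothesis are required: $a\in(\exists R.C)^\I$ holds iff there exists $b\in\Delta^\I$ with $(a,b)\in R^\I$ and $b\in C^\I$, which by the induction hypothesis on $R$ (at argument pair $(a,b)$) and on $C$ (at argument $b$) is equivalent to $\I\models \exists y\,(\ST_{xy}(R)\lAnd \ST_y(C))[x\mapsto a]$. For the inductive steps on roles, the cases $R\Or S$ and $\Not R$ are handled by the Boolean semantics plus the role-part of the induction hypothesis, and $R^{-1}$ reduces to the induction hypothesis on $R$ with the roles of $x$ and $y$ interchanged, which matches $\ST_{xy}(R^{-1})\define\ST_{yx}(R)$.

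There is no significant obstacle: the one subtle point is ensuring that the induction is genuinely well-founded, which it is if one takes the induction measure to be the combined length of the expression being translated (so that in the $\exists R.C$ case, both $R$ and $C$ are strictly shorter than $\exists R.C$, and in the $R^{-1}$ case the translation on $R$ with swapped variables is still a strictly shorter expression). Once this is noted, every case is a routine unfolding and the theorem follows.
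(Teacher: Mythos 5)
Your proposal is correct and follows essentially the same route as the paper, which establishes the theorem by simultaneous induction on the lengths of concepts and roles, with each case being a routine unfolding of the semantics and the definition of $\ST$. Your treatment of the $\exists R.C$ and $R^{-1}$ cases, and the well-foundedness remark, fill in exactly the details the paper leaves implicit.
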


The effective finite model property
 for description logics can be stated as follows.
 A description logic~$L$ has the \emph{effective finite model property} iff
 there is a computable 
 function $\mu:\N\to\N$,
 such that the following holds.
    \begin{itemize}
     \setlength{\topskip}{0pt}
     \setlength{\itemsep}{0pt}
     \setlength{\itemindent}{0pt}
     \item[]
           For every concept $C$,  
           if $C$ is satisfiable in an $L$-model
           then there is a finite $L$-model for $C$ 
           with the number of elements in the domain not exceeding $\mu(n)$,
           where $n$ is the length of $C$. %
     \end{itemize}
We call $\mu$ the \emph{(model) bounding function} for $L$.
 
 The notion of the effective finite model property for 
 fragments of the first-order logic can be obtained
 from the above definition replacing the phrases `logic $L$'
 and `concept~$C$' with `fragment $\mathcal{F}$' and `formula $\phi$', respectively.

The effective finite model property is sometimes called
the bounded model property or the small model property
and is stronger than the finite
model property.
A logic has the finite model property if any satisfiable concept (formula) has
a finite model. 
With the effective finite model property there is a bound on the size
of the model that is given a priori based on the given concept (or formula).
This means there is a naive procedure for deciding satisfiability,
which is based on successively computing larger and larger models within
the bound and model checking the given concept with respect to the obtained models. 

The effective finite model property for the two-variable fragment 
of the first-order logic with equality is proved by~\citeN{Mortimer75}.
An exponential model bounding function $\mu$ is given
by~\citeN{GraedelKolaitisVardi97}. 
Because we will use this function in Sections~\ref{section_termination} and~\ref{section: Complexity},
we define it explicitly:
\[\mu(n)\define3(n\lfloor\log(n+1)\rfloor)\cdot 2^n.\]

\begin{theorem}[\citeN{GraedelKolaitisVardi97}]\label{theorem: efmp for FO2}
The two variable fragment of the first order logic with equality
has the effective finite model property with the model bounding
function $\mu(n)$.
\end{theorem}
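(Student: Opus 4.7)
The plan is to follow the king-and-court strategy of Gr\"adel, Kolaitis and Vardi. First I would put the input formula $\phi$ into a Scott-style normal form: every FO$^2$ formula with equality is equisatisfiable, after introducing linearly many auxiliary unary predicates, with a conjunction
\[
\forall x\,\forall y\,\alpha(x,y)\;\lAnd\;\bigwedge_{i=1}^{m}\forall x\,\exists y\,\beta_{i}(x,y),
\]
where $\alpha$ and the $\beta_{i}$ are quantifier-free and both $m$ and the total length remain linear in $n=|\phi|$. Equality is left as a first-class predicate and handled uniformly alongside the other atoms throughout.

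Given an arbitrary model $\I\models\phi$, I would consider all 1-types over the unary vocabulary of the normal form. There are at most $2^{n}$ such types, so choosing for each realised 1-type exactly one representative in $\I$ yields a set $K$ of \emph{kings} with $|K|\le 2^{n}$. For each king $a\in K$ and each existential conjunct $\forall x\,\exists y\,\beta_{i}(x,y)$, I would try to reuse a king as witness: if some king $b$ has a 1-type and a 2-type with $a$ in $\I$ that makes $\beta_{i}(a,b)$ true, keep that pair; otherwise introduce one fresh \emph{court element} for $a$ whose 1-type and whose 2-type with $a$ imitate an arbitrary $\I$-witness for $\beta_{i}$ at $a$. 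A careful accounting of the predicates added in Scott normalisation keeps the court of each king bounded by $3n\lfloor\log(n+1)\rfloor$, so the finite structure $\J$ whose universe is $K$ together with all courts has $|\J|\le 2^{n}(1+3n\lfloor\log(n+1)\rfloor)\le\mu(n)$.

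To finish, every remaining pair of distinct elements in $\J$ that is not already a king--witness pair must be assigned a 2-type; I would pick one that is realised in $\I$ at the matching pair of 1-types. Because Scott normal form confines the universal constraint to a single pair of variables, $\forall x\forall y\,\alpha$ transfers automatically from $\I$ to $\J$ provided every 2-type occurring in $\J$ is one that occurs in $\I$, which is true by construction; the existential conjuncts hold by the choice of witnesses. The main obstacle, and the step demanding the most care, is this final 2-type assignment between pairs of court elements belonging to different (or the same) kings: one has to witness every $\beta_{i}$ without simultaneously creating a pair that falsifies $\alpha$, and the Scott form is precisely what keeps these demands local and mutually compatible. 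The resulting bounded model establishes the effective finite model property with bounding function $\mu(n)=3(n\lfloor\log(n+1)\rfloor)\cdot 2^{n}$.
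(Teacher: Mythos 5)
The paper itself does not prove this statement: Theorem~\ref{theorem: efmp for FO2} is imported by citation from Gr\"adel, Kolaitis and Vardi (with Mortimer credited for the finite model property as such), so the only meaningful yardstick is their argument, which is what you are trying to reconstruct.

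Measured against that argument there is a genuine gap. Your construction supplies witnesses for the conjuncts $\forall x\,\exists y\,\beta_i$ only at the kings; the court elements you add are themselves subject to these conjuncts, and your closing step --- giving every remaining pair an arbitrary 2-type that is realised in $\I$ over the matching 1-types --- preserves $\forall x\,\forall y\,\alpha$ but does nothing to provide a court element with its witnesses. You name this as ``the main obstacle'', but the sketch contains no mechanism that overcomes it, and this is exactly where the Gr\"adel--Kolaitis--Vardi proof does its real work: their kings are not one representative per realised 1-type but the realisers of \emph{royal} 1-types (those realised exactly once, which equality forbids duplicating); every non-royal 1-type is kept in \emph{three} copies of a set of representatives (roughly one per existential conjunct), and witnesses for an element lying in copy $j$ are drawn from copy $j+1 \bmod 3$ or from the kings and their court, so that no ordered pair ever has its 2-type assigned twice and every element --- not just the kings --- obtains witnesses without ever creating a pair falsifying $\alpha$. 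The factor $3$ in $\mu(n)$ comes from precisely this circular scheme, and the $n\lfloor\log(n+1)\rfloor$ factor from the $O(n\log n)$ blow-up of the Scott-style normal form, not from ``careful accounting'' of a per-king court; as written, your size bound is asserted rather than derived, and a structure built exactly as you describe need not satisfy the existential conjuncts at all, so the effective bound $\mu(n)$ is not established.
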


Theorems~\ref{theorem: embedding into FO2} and~\ref{theorem: efmp for FO2}
imply that \ALBOid has the effective finite model property:
\begin{theorem}[Effective Finite Model Property of \ALBOid]\label{theorem: EFMP}
\ALBOid has the effective finite model property
with the model bounding function $\mu(n)$.
\end{theorem}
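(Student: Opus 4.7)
The plan is to chain together Theorems~\ref{theorem: embedding into FO2} and~\ref{theorem: efmp for FO2}: reduce \ALBOid-concept satisfiability to FO2-sentence satisfiability, invoke the FO2 bound, and pull the resulting finite model back across the translation. Given a satisfiable concept $C$ of length $n$, I would form the first-order sentence $\phi_C\define\exists x\,\ST_x(C)$, which lies in the two-variable fragment with equality since, as observed after the definition of $\ST$, $\ST_x(C)$ uses only the variables $x$ and $y$.

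For the forward direction, if $C$ is satisfied in some \ALBOid-model $\I$, pick any $a\in C^\I$; Theorem~\ref{theorem: embedding into FO2} gives $\I\models\ST_x(C)[x\mapsto a]$, and viewing $\I$ as a first-order model over $\underline{\Sigma}$ (via the correspondence noted after Theorem~\ref{theorem: embedding into FO2}) this means $\phi_C$ is FO2-satisfiable. Theorem~\ref{theorem: efmp for FO2} then yields a finite FO2-model $\J$ of $\phi_C$ with $|\Delta^\J|\leq\mu(|\phi_C|)$. Re-viewing $\J$ as an \ALBOid-model and applying Theorem~\ref{theorem: embedding into FO2} in the converse direction to the element $b\in\Delta^\J$ witnessing the outer $\exists x$, I obtain $b\in C^\J$, so $\J$ is a finite \ALBOid-model of $C$. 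Both directions of the correspondence between \ALBOid-models and first-order models over $\underline{\Sigma}$ are already in place in the excerpt, so this step is just a matter of assembling the pieces.

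The one nontrivial detail, and the place where I would expect to spend the most care, is matching $\mu(|\phi_C|)$ against the claimed bound $\mu(n)$. A structural inspection of the clauses defining $\ST_x$ and $\ST_{xy}$ shows that each concept or role constructor is replaced by a bounded number of first-order symbols, so $|\phi_C|$ is at most linear in $n$. Either one fixes the counting convention for \ALBOid-expressions and FO2-formulas so that $|\phi_C|\leq n$ outright, or one absorbs the constant factor into $\mu$, using that $\mu$ is monotone and remains of the same exponential shape after any linear rescaling of its argument. Either way the effective finite model property is inherited with a computable bound of the form stated; the rest of the proof is purely mechanical.
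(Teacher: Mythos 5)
Your proposal is correct and follows exactly the route the paper takes: the paper derives Theorem~\ref{theorem: EFMP} directly by combining Theorem~\ref{theorem: embedding into FO2} (the standard translation) with Theorem~\ref{theorem: efmp for FO2} (the Gr\"adel--Kolaitis--Vardi bound for FO$^2$), which is precisely your chain of reductions. Your extra care about matching $\mu(|\phi_C|)$ with $\mu(n)$ is a reasonable elaboration of a point the paper leaves implicit, not a departure from its argument.
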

This provides an upper bound for the worst-case complexity of testing 
concept satisfiability
for \ALBOid. 
The complexity result for Boolean modal
logic of~\citeN{LutzSattler02} gives the needed lower bound for concluding:
\begin{theorem}
The concept satisfiability problem for \ALBOid is \NExpTime-complete.
\end{theorem}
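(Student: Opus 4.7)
The plan is to establish the upper and lower bounds separately; both follow almost directly from results already assembled earlier in the paper.

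For the upper bound I would invoke the effective finite model property of \ALBOid (Theorem~\ref{theorem: EFMP}) with its explicit bounding function $\mu(n) = 3(n\lfloor\log(n+1)\rfloor)\cdot 2^n$, which is singly exponential in $n$. Given a concept $C$ of length $n$, a non-deterministic procedure guesses a candidate interpretation $\I$ with $|\Delta^\I| \leq \mu(n)$, specified by giving a finite domain together with interpretations of the concept symbols, role symbols, and individual names occurring in $C$. Representing such an $\I$ takes space exponential in $n$. It then remains to verify that $C$ is satisfied in $\I$, which is performed by straightforward model checking along the inductive definition of $\cdot^\I$. Equivalently, by Theorem~\ref{theorem: embedding into FO2} the task reduces to evaluating the two-variable formula $\ST_x(C)$, of size linear in $n$, in the first-order view of $\I$, which can be done in time polynomial in $|\Delta^\I|$ and $n$. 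The overall procedure therefore runs in non-deterministic exponential time.

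For the lower bound I would appeal to the embedding of Boolean modal logic into \ALBOid already noted in the introduction together with the \NExpTime-hardness of Boolean modal logic due to \citeN{LutzSattler02}. Boolean modal logic uses only $\Or$, $\Not$, $\exists$ on concepts and the Boolean operators on roles, so every Boolean modal logic formula is already a syntactically valid \ALBOid-concept, and its intended semantics coincides with the \ALBOid-semantics. Consequently, satisfiability in Boolean modal logic reduces in linear time to concept satisfiability in \ALBOid, transferring the \NExpTime lower bound.

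Putting the two bounds together yields \NExpTime-completeness. There is no substantive obstacle; the only point that deserves care is confirming that the verification step of the guess-and-check procedure is polynomial-time in the size of the guessed interpretation, which is immediate because $\ST_x(C)$ has length $O(n)$ and truth in a two-variable structure of size $\mu(n)$ is checkable in time $O(n\cdot\mu(n)^2)$, still within \NExpTime.
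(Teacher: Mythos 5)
Your proposal is correct and follows essentially the same route as the paper: the exponential bounding function $\mu(n)$ from the effective finite model property (Theorem~\ref{theorem: EFMP}) yields the \NExpTime upper bound via guess-and-check model checking (the paper's ``naive procedure''), and the \NExpTime lower bound is inherited from Boolean modal logic via \citeN{LutzSattler02}, since \ALBOid syntactically and semantically subsumes it. The extra detail you supply on representing the guessed interpretation and evaluating $\ST_x(C)$ in time polynomial in the model size merely fleshes out what the paper leaves implicit.
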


\section{Tableau Calculus}
\label{section_tableau}

Let $T$ denote a tableau calculus comprising of a set of inference
rules. 
A \emph{derivation} or \emph{tableau} for $T$ is a finitely branching,
ordered tree whose nodes are sets of labelled concept expressions.
Assuming that~$\mathcal S$ is the input set of concept expressions
to be tested for satisfiability, the root node of the tableau is the
set $\{\indiv:C \mid C \in \mathcal S\}$, where $\indiv$ denotes a fresh individual.
Successor nodes are constructed in accordance with a set of inference
rules in the calculus.
The inference rules have the general form 
\[
\tableaurule{X_0}{X_1 \tor \ldots \tor X_n},
\]
where $X_0$ is the set of premises and the $X_i$
are the sets of conclusions.
If $n=0$, the rule is called \emph{closure rule} and written 
$\inlinetableaurule{X_0}{\bot}$.
An inference rule is applicable to a selected labelled concept
expression~$E$ in a
node of the tableau, if $E$ together with possibly other
labelled concept expressions in the node, are simultaneous
instantiations of all the premises of the rule.
Then $n$~successor nodes are created which contain the formulae of
the current node and the appropriate instances of $X_1, \ldots, X_n$.
We assume that \emph{any rule is applied at most once to the same set
of premises}, which is a standard assumption for tableau derivations.

We use the notation $T(\mathcal S)$ for a fully expanded tableau built
by applying the rules of the calculus~$T$ starting with the set
$\mathcal S$ as input.
That is, we assume that all branches in the tableau are fully expanded and
all applicable rules have been applied in $T(\mathcal S)$.
For any concepts $C_1,\ldots, C_n$, the notation $T(C_1,\ldots,C_n)$
is used instead of $T(\{C_1,\ldots,C_n\})$.

In a tableau, a maximal path from the root node is called a
\emph{branch}.
For $\branch{B}$ a branch of a tableau we write $D\in\branch{B}$
to indicate that the concept $D$ has
been derived in $\branch{B}$, that is, $D$ belongs to a node of the
branch~$\branch{B}$.
The notion of a tableau branch we use in this paper can be viewed
in two ways.
On one hand it has a procedural flavour as a path of nodes in the
tableau.
On the other hand a branch can be identified with the
set-theoretical union of the nodes in it.
A more careful distinction between these perspectives leads to the classical
notion of a Hintikka set but this is not essential for the paper.

A branch of a tableau is \emph{closed} if a closure rule has been applied
in this branch, otherwise the branch is called \emph{open}.
Clearly, expansion of any closed branch can be stopped immediately after
the first application of a closure rule in the branch.
The tableau $T(\mathcal{S})$ is \emph{closed} if all its branches are
closed and $T(\mathcal{S})$ is \emph{open} otherwise.
The calculus $T$ is \emph{sound} iff
for any (possibly infinite)
set of concepts
$\mathcal{S}$, each $T(\mathcal{S})$ is open whenever $\mathcal{S}$ is satisfiable.
$T$ is \emph{complete} iff for any (possibly infinite) unsatisfiable set of concepts
$\mathcal{S}$ there is a tableau $T(\mathcal{S})$ which is closed.
$T$ is said to be \emph{terminating} (for satisfiability) iff 
for every \emph{finite} set of concepts~$\mathcal{S}$
every closed tableau $T(\mathcal{S})$ is finite and
every open tableau $T(\mathcal{S})$ has a finite open branch.

\begin{figure}[!tb]
\begin{align*}
\intertext{Rules for \ALCO:}
&\tableaurule{\indiv: C\tand \indiv:\Not C}{\bot}[$\bot$]\label{rule: clash}
&&
\tableaurule{\indiv:\Not\Not C}{\indiv:C}[$\Not\Not$]\label{rule: not not}
\\
&\tableaurule{\indiv:\Not(C\Or D)}{\indiv:\Not C\tand \indiv:\Not D}[$\Not\Or$]\label{rule: not or}
&&
\tableaurule{\indiv:(C\Or D)}{\indiv:C\tor \indiv:D}[$\Or$]\label{rule: or}
\\
&\tableaurule{\indiv:\exists R.C}{\indiv:\exists R.\{\indiv'\}\tand \indiv':C}[$\exists$]%
\label{rule: exists} \ \text{($\indiv'$ is new)}
&&
\tableaurule{\indiv:\Not\exists R.C\tand \indiv:\exists R.\{\indiv'\}}{\indiv':\Not C}[$\Not\exists$]\label{rule: not exists}
\\
&\tableaurule{\indiv:\{\indiv'\}}{\indiv':\{\indiv\}}[sym]\label{rule: sym}
&&
\tableaurule{\indiv:\Not\{\indiv'\}}{\indiv':\Not\{\indiv\}}[$\neg$sym]\label{rule: nsym}
\\
&
\tableaurule{\indiv:\{\indiv'\}\tand \indiv':C}{\indiv:C}[mon]\label{rule: mon}
&&
\tableaurule{\indiv:C}{\indiv:\{\indiv\}}[refl]\label{rule: id}
\intertext{Rules for complex roles:}
&\tableaurule{\indiv:\exists (R\Or S).\{\indiv'\}}{\indiv:\exists R.\{\indiv'\}\tor \indiv:\exists S.\{\indiv'\}}[$\exists\Or$]%
\label{rule: exists or}
&&
\tableaurule{\indiv:\Not\exists (R\Or S).C}{\indiv:\Not\exists R.C\tand\indiv:\Not\exists S.C}[$\Not\exists\Or$]%
\label{rule: not exists or}
\\
&\tableaurule{\indiv:\exists R^{-1}.\{\indiv'\}}{\indiv':\exists R.\{\indiv\}}[$\exists{}^{-1}$]\label{rule: exists inverse}
&&
\tableaurule{\indiv:\Not\exists R^{-1}.C\tand\indiv':\exists R.\{\indiv\}}{\indiv':\Not C}[$\Not\exists{}^{-1}$]%
\label{rule: not exists inverse}
\\
&\tableaurule{\indiv:\exists \Not R.\{\indiv'\}}{\indiv:\Not\exists R.\{\indiv'\}}[$\exists\Not$]%
\label{rule: exists not} 
&&
\tableaurule{\indiv:\Not\exists\Not R.C\tand\indiv':\{\indiv'\}}{\indiv:\exists R.\{\indiv'\}\tor\indiv':\Not C}[$\Not\exists\Not$]%
\label{rule: not exists not} 
\\
&\tableaurule{\indiv:\exists \idRole.\{\indiv'\}}{\indiv:\{\indiv'\}}[$\exists\idRole$]\label{rule: exists Id}
&&
\tableaurule{\indiv:\Not\exists \idRole.C}{\indiv:\Not C}[$\Not\exists\idRole$]%
\label{rule: not exists Id}
\end{align*}
\caption{Tableau calculus $T_\ALBOid$ for \ALBOid.}\label{table: T}
\end{figure}

Let $T_\ALBOid$ be the tableau calculus consisting of the rules listed
in Figure~\ref{table: T}.

The first group of rules are the rules for decomposing concept expressions. 
They are in fact the rules for testing satisfiability of concepts
for the description logic \ALCO, that is, \ALC with individuals.
The \eqref{rule: clash}~rule is the closure rule.
The \eqref{rule: not not}~rule removes occurrences of double negation on concepts. 
(The rule is superfluous if double negations are eliminated using
on-the-fly rewrite rules, but is included to 
simplify the completeness proof somewhat.)
The \eqref{rule: or} and \eqref{rule: not or}~rules are
standard rules for handling concept disjunctions.
As usual, and in accordance with the semantics of the existential
restriction operator, for any existentially restricted concept the
\eqref{rule: exists}~rule creates a new individual with this concept and
adds a link to the new individual.
It is the only rule in the calculus that generates new individuals.
An additional side-condition is that the new individual $\indiv'$
is uniquely associated with the premise $\indiv:\exists R.C$.
As usual we assume that the \eqref{rule: exists}~rule is applied only to expressions of the form
$\indiv:\exists R.C$, when $C$ is not a singleton, that is, $C\neq\{\indiv''\}$
for any individual~$\indiv''$.
This condition prevents the application of the
$\eqref{rule: exists}$~rule to expressions of the form $\indiv:\exists R.\{\indiv''\}$
because this would just cause another witness of an
$R$-successor of $\indiv$ to be created when $\indiv''$ is already such a witness.
The \eqref{rule: not exists}~rule is equivalent  to the standard %
rule for universally restricted concept expressions.
The \eqref{rule: sym}, \eqref{rule: mon}, and \eqref{rule: id}~rules
are the equality rules for individuals,
familiar from hybrid logic tableau systems. They can be viewed
as versions of standard rules for first-order equality.
The \eqref{rule: id}~rule is formulated a bit unusually, because
it adds a validity $\indiv:\{\indiv\}$ to the branch. In our
presentation expressions of the form $\indiv:\{\indiv\}$ are used to
represent the individuals occurring on
a branch, which is exploited by the \eqref{rule: not exists not}~rule. 
See also remarks in Section~\ref{section_discussion}.
The \eqref{rule: nsym}~rule is needed 
to ensure that any negated singleton concept
eventually appears as a label in a concept assertion.

The rules in the second group are the rules for decomposing
complex role expressions.
They can be divided into two subgroups:
rules for positive existential role occurrences and rules for
negated existential role occurrences.
These are listed in the left and right columns, respectively.
Due to the presence of the \eqref{rule: exists}~rule, the rules for positive
existential roles are restricted to role assertions, that is, the
concept expressions immediately below the $\exists$ operator are
singleton concepts.

Among the rules for negated existential roles,
the \eqref{rule: not exists inverse}~rule and
the \eqref{rule: not exists not}~rule
are special.
The \eqref{rule: not exists inverse}~rule
allows the backward propagation of concept expressions
along inverted links (ancestor links).
The \eqref{rule: not exists not}~rule is the rule for the sufficiency
operator.
It expands a universally restricted concept in which the role is
negated according to the semantics: 
\[
x \in (\Not\exists\Not R.C)^\I \iff
\forall y \left( (x,y) \in R^\I \vee y \in (\Not C)^\I\right).
\]
That is, $\indiv'$ in the rule is implicitly quantified by a universal quantifier.
The effect of the second premise, $\indiv':\{\indiv'\}$, is to
instantiate~$\indiv'$ with individuals that occur in the branch.
The remaining rules in this subgroup are based on obvious logical
equivalences in \ALBOid.

Tableau rules that do not produce new individuals
are called \emph{type-completing} rules.
In the case of the calculus~$T_\ALBOid$, with the exception of
the \eqref{rule: exists}~rule, all rules are type-completing.

Now, given an input set of concepts $\mathcal{S}$, a tableau derivation is constructed
as follows.
First, preprocessing is performed. This pushes
occurrences of the role inverse operator in every concept in~$\mathcal{S}$
inward toward atomic concepts by exhaustively
applying the following role equivalences from left to right:
\begin{align*}
(R\Or S)^{-1} & =R^{-1}\Or S^{-1},&
(\Not R)^{-1} & =\Not(R^{-1}),\\
(R^{-1})^{-1} & =R,&
\idRole^{-1} & = \idRole.
\end{align*}
This preprocessing is required to allow the tableau algorithm to
handle the role inverse operator correctly. Pushing inverse inwards is the only
preprocessing required and can be done in linear time.
Transformation to negation normal form is optional
but for practical
purposes not a good choice because obvious properties of expressions and
their negation require unnecessary inference steps to reveal.

Suppose $\mathcal{S}'$ is the set of concepts resulting from
preprocessing.
Then we build a complete tableau $T_\ALBOid(\mathcal{S}')$  
as described above.

In the rest of the paper when we refer to the calculus~$T_\ALBOid$ we assume that
the described preprocessing with respect to role inverse has been
applied to the input concept (set) before the
rules of the calculus are applied.
It is also important to note that $\indiv:C$ and all labelled
expressions and assertions really denote concept expressions.

An example of a finite derivation in $T_\ALBOid$ is shown in Figure~\ref{figure: derivation} for the concept
\[\Not\left(\Not\exists(\rname\Or\Not \rname).\cname\Or\exists \rname.\cname\right).\]
In the figure each line in the derivation is numbered on the left.
The rule applied and the number of the premise(s) to which it was
applied to produce the labelled concept expression (assertion) in each
line is specified on the right.
The black triangles denote branching points in the derivation.
A branch expansion after a branching point is indicated
by appropriate indentation.
The derivation presents a finished tableau with two branches; the
left branch is closed and the right branch is open.
Hence, because the tableau calculus is complete, which is shown in
the next section, the input concept $\Not\left(\Not\exists(\rname\Or\Not \rname).\cname\Or\exists
\rname.\cname\right)$ is satisfiable.

\begin{figure}[!tu]
\begin{center}
\begin{minipage}{.65\textwidth}
  \begin{enumerate}[1.]
    \item\label{ex: drv: a}
        \titem{\indiv_0:\Not(\Not\exists(\rname\Or\Not \rname).\cname\Or\exists \rname.\cname)}{given}
    \item\label{ex: drv: b}
        \titem{\indiv_0:\Not\Not\exists(\rname\Or\Not \rname).\cname)}{\eqref{rule: not or},\ref{ex: drv: a}}
    \item\label{ex: drv: c}
        \titem{\indiv_0:\Not\exists \rname.\cname}{\eqref{rule: not or},\ref{ex: drv: a}}
    \item\label{ex: drv: 0}
        \titem{\indiv_0:\exists(\rname\Or\Not \rname).\cname}{\eqref{rule: not not},\ref{ex: drv: b}}
    \item\label{ex: drv: 1}
        \titem{\indiv_0:\exists(\rname\Or\Not \rname).\{\indiv_1\}}{\eqref{rule: exists},\ref{ex: drv: 0}}
    \item\label{ex: drv: 2}
        \titem{\indiv_1:\cname}{\eqref{rule: exists},\ref{ex: drv: 0}}
    \item\label{ex: drv: 3}
        \titem{\indiv_0:\{\indiv_0\}}{\eqref{rule: id},\ref{ex: drv: a}}
    \item\label{ex: drv: 4}
        \titem{\indiv_1:\{\indiv_1\}}{\eqref{rule: id},\ref{ex: drv: 2}}
    \item\label{ex: drv: 5}
        \titem{\tbranch\indiv_0:\exists \rname.\{\indiv_1\}}{\eqref{rule: exists or},\ref{ex: drv: 1}}
    \item\label{ex: drv: 5.0}
        \titem{\tskip\indiv_1:\Not \cname}{\eqref{rule: not exists},\ref{ex: drv: c}}
    \item\label{ex: drv: 5.1}
        \titem{\tskip\unsat}{\eqref{rule: clash},\ref{ex: drv: 2},\ref{ex: drv: 5.0}}
    \item\label{ex: drv: 6}
        \titem{\tbranch\indiv_0:\exists \Not \rname.\{\indiv_1\}}{\eqref{rule: exists or},\ref{ex: drv: 1}}
    \item\label{ex: drv: 6.0}
        \titem{\tskip\indiv_0:\Not \exists \rname.\{\indiv_1\}}{\eqref{rule: exists or},\ref{ex: drv: 1}}
    \item\titem{\tskip\sat}{(no more rules are applicable)}
  \end{enumerate}
\end{minipage}
\end{center}
\caption{A derivation in $T_\ALBOid$}\label{figure: derivation}
\end{figure}

\section{Soundness and Completeness}
\label{section: Soundness and completeness}

We turn to proving soundness and completeness of the calculus.
It is easy to see that every rule preserves the satisfiability of concept assertions.
More precisely, given an \ALBOid-model, for any rule $\inlinetableaurule{X_0}{X_1\tor\cdots\tor X_m}$
of the calculus $T_\ALBOid$ 
if the set $X_0\sigma$ of instantiations of premises of the rule  
under a substitution $\sigma$
are all satisfiable in the model then 
one of the sets $X_1\sigma,\ldots, X_m\sigma$ is also satisfiable in the model.  
Any rule with this property is said to be \emph{sound}.
Since all the rules of the calculus are sound this implies the calculus
$T_\ALBOid$ is sound.
\begin{theorem}[Soundness]
$T_\ALBOid$ is a sound tableau calculus for satisfiability in
\ALBOid.
\end{theorem}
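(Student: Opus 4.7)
The plan is to follow exactly the strategy sketched just before the theorem: first verify rule-by-rule that each rule of $T_\ALBOid$ is sound in the local sense defined there (whenever a ground instance of the premise set is satisfied in an \ALBOid-model $\I$, at least one of the ground instances of its conclusion sets is satisfied in $\I$, possibly after extending $\I$ on fresh individuals), and then conclude globally by induction on the construction of the tableau, at each step picking a successor node that preserves satisfiability. Because the root node $\{\indiv:C\mid C\in\mathcal{S}\}$ is satisfied in any model of $\mathcal{S}$ (interpreting the fresh label $\indiv$ as a common witness), and because the closure rule \eqref{rule: clash} has an unsatisfiable premise, the satisfiable branch produced by the induction is never closed, and hence the whole tableau is open.

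The case-by-case verification is essentially mechanical. The propositional rules \eqref{rule: not not}, \eqref{rule: or}, \eqref{rule: not or} are immediate from the Boolean clauses of the semantics. The equality rules \eqref{rule: sym}, \eqref{rule: nsym}, \eqref{rule: mon}, \eqref{rule: id} reduce to the fact that $\indiv:\{\indiv'\}$ means $\indiv^\I=\indiv'^\I$, combined with reflexivity and substitutivity of equals. The role-decomposition rules \eqref{rule: exists or}, \eqref{rule: not exists or}, \eqref{rule: exists inverse}, \eqref{rule: not exists inverse}, \eqref{rule: exists not}, \eqref{rule: exists Id}, \eqref{rule: not exists Id} and the universal-restriction rule \eqref{rule: not exists} all come directly from the definitions of $R\Or S$, $R^{-1}$, $\Not R$ and $\idRole$ combined with the clause for $\exists R.C$.

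The only rules that require more than a one-line check are \eqref{rule: exists} and \eqref{rule: not exists not}. For \eqref{rule: exists}, the premise $\indiv:\exists R.C$ provides some $b\in\Delta^\I$ with $(\indiv^\I,b)\in R^\I$ and $b\in C^\I$; since $\indiv'$ is by side-condition fresh and uniquely associated with this premise, $\I$ can be extended by $\indiv'^\I:=b$ without disturbing any earlier truth-value, and both conclusions then hold in the extended model. For \eqref{rule: not exists not}, unfolding the premise yields that for every $y\in\Delta^\I$ either $(\indiv^\I,y)\in R^\I$ or $y\notin C^\I$; the second premise $\indiv':\{\indiv'\}$ ensures that $\indiv'$ already denotes an element of $\Delta^\I$, and instantiating $y$ by $\indiv'^\I$ produces exactly the two conclusion branches. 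The main obstacle is thus purely bookkeeping around \eqref{rule: exists}: one has to be careful about how the interpretation is extended to the fresh individuals generated along a branch, recording that each extension preserves the truth of everything already derived. Once this is formalised (for example by associating to each branch a monotonically growing interpretation parametrised by the fresh individuals introduced so far), the induction closes immediately and delivers soundness of $T_\ALBOid$.
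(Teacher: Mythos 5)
Your proposal is correct and follows essentially the same route as the paper: the paper's proof simply observes that every rule of $T_\ALBOid$ is sound in the local sense (satisfiability of the instantiated premises in a model implies satisfiability of one of the instantiated conclusion sets) and concludes soundness of the calculus from this, which is exactly your rule-by-rule check lifted by induction along a branch. The only difference is that you spell out the bookkeeping the paper leaves implicit, namely that for the \eqref{rule: exists}~rule the model must be extended by interpreting the fresh individual as the existential witness, and that the \eqref{rule: not exists not}~rule is an instantiation of the implicit universal quantifier at the individual supplied by the second premise.
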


For proving completeness, suppose that a tableau $T_\ALBOid(\mathcal{S})$ for
a given set of concepts $\mathcal{S}$
is open, that is, 
it contains an open branch~$\branch{B}$.
From $\branch{B}$ we construct a model~$\IB$ for the satisfiability of $\mathcal{S}$
as follows.
By definition, let
\begin{align*}
\indiv\simB\indiv'& \defiff\indiv:\{\indiv'\}\in\branch{B}.
\intertext{%
The rules~\eqref{rule: sym}, \eqref{rule: mon}, and~\eqref{rule: id}
ensure that $\simB$ is an equivalence relation on individuals.
Define the equivalence class $\|\indiv\|$ of an individual~$\indiv$ 
by:
}
\|\indiv\| & \define\{\indiv'\mid\indiv\simB\indiv'\}.
\intertext{%
We set 
$\Delta^\IB \define\{\|\indiv\|\mid\indiv:\{\indiv\}\in\branch{B}\}$,
and for every $\rname\in\Roles$, $\cname\in\Concepts$ and $\indiv\in\Objects$
we define
}
(\|\indiv\|,\|\indiv'\|)\in \rname^\IB&\defiff%
\indiv:\exists \rname.\{\indiv''\}\in\branch{B}\ \text{for some}\ \indiv''\simB\indiv',\\
 \|\indiv\|\in \cname^\IB &\defiff \indiv:\cname\in\branch{B},\\
 \indiv^\IB&\ \,\define\begin{cases}
                \|\indiv\|,& \text{if}\ \indiv:\{\indiv\}\in\branch{B},\\
                \|\indiv'\|& \text{for some}\ \|\indiv'\|\in\Delta^\IB,\ \text{otherwise.}
               \end{cases} 
\end{align*}
The rule~\eqref{rule: mon}
ensures that
the definitions of $\rname^\IB$ and $\cname^\IB$ are correct and 
do not depend on the choice of the representative $\indiv$ of the equivalence
class $\ecl{\indiv}$. 
Finally, we expand the interpretation $\cdot^\IB$ to all concepts and roles in the expected way
using induction on their lengths.
This completes the definition of the model~$\IB$ extracted from a branch $\branch{B}$.

Let $\prec$ be the ordering on expressions (concepts and roles) of \ALBOid induced by the rules of $T_\ALBOid$.
That is, $\prec$ is the smallest transitive ordering
on the set of all \ALBOid expressions 
satisfying:
\begin{align*}
 C & \prec \Not C & R &\prec \Not R&
 R & \prec R^{-1}\\
 C & \prec C\Or D & D & \prec C\Or D\\
 \Not C & \prec \Not(C\Or D) & \Not D &\prec \Not(C\Or D)\\
 C &\prec \exists R.C & R & \prec \exists R.C\\
 \Not C &\prec \Not\exists R.C & R & \prec \Not\exists R.C\\
 R & \prec R\Or S & S & \prec R\Or S
\end{align*}
Note that $\prec$ does not coincide with the direct subexpression ordering. %
$\prec$ is a well-founded ordering, 
since the length of the expression to the left in a clause is always strictly less than the 
length of the expression to the right.
\begin{lemma}\label{lemma: reflection}
\begin{enumerate}[(1)]
\item\label{prop: C}
 If $\indiv:D\in\branch{B}$ then $\|\indiv\|\in D^\IB$ for any concept $D$. 
\item\label{prop: R} For every role $R$ 
                     and every concept $D$
    \begin{enumerate}[(\ref{prop: R}a)]
        \item\label{prop: R: 1} $\indiv:\exists R.\{\indiv'\}\in\branch{B}$ implies $(\|\indiv\|,\|\indiv'\|)\in R^\IB$,
        \item\label{prop: R: 2} if $(\|\indiv\|,\|\indiv'\|)\in R^\IB$ and $\indiv:\Not\exists R.D\in\branch{B}$
              then $\indiv':\Not D\in\branch{B}$.
    \end{enumerate}
\end{enumerate}
\end{lemma}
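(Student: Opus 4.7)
The plan is to prove the two parts simultaneously by well-founded induction on the expression ordering $\prec$ defined immediately before the lemma. The key observation is that $\prec$ places every subcomponent strictly below any compound in which it occurs, both positively and negatively: in particular, the case $\indiv:\exists R.C$ in~(1) appeals to~(2a) for the $\prec$-smaller role~$R$; the case $\indiv:\Not\exists R.C$ in~(1) appeals to~(2b) for~$R$; and the $\Not R$ cases in~(2) appeal to (2a) and (2b) at the strictly smaller role~$R$. Throughout I freely use that \eqref{rule: id} makes $\indiv:\{\indiv\}$ present on~$\branch{B}$ for every individual appearing on $\branch{B}$, that $\simB$ is an equivalence relation by \eqref{rule: sym}, \eqref{rule: mon}, \eqref{rule: id}, and that the definitions of $\cname^\IB$ and $\rname^\IB$ are independent of the chosen representative (again a consequence of \eqref{rule: mon} and \eqref{rule: sym}).

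For part~(1), the atomic cases $D=\cname$ and $D=\{\indiv'\}$ follow from the definition of $\cname^\IB$ and from the identification $\indiv'^\IB=\|\indiv'\|$ obtained via \eqref{rule: sym} and \eqref{rule: id}. The negated atomic cases $D=\Not\cname$ and $D=\Not\{\indiv'\}$ use openness of $\branch{B}$ together with \eqref{rule: clash}, \eqref{rule: sym}, and \eqref{rule: nsym}. The propositional compounds $\Not\Not C$, $C\Or D$, and $\Not(C\Or D)$ are immediate from the matching tableau rule together with the induction hypothesis applied to the $\prec$-smaller subconcepts. The existential case $D=\exists R.C$ applies \eqref{rule: exists} to obtain a witness $\indiv'$ with $\indiv:\exists R.\{\indiv'\}\in\branch{B}$ and $\indiv':C\in\branch{B}$, then invokes (2a) at $R$ and the induction hypothesis of~(1) at~$C$. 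The case $D=\Not\exists R.C$ proceeds by contradiction: if $(\|\indiv\|,\|\indiv'\|)\in R^\IB$ with $\|\indiv'\|\in C^\IB$, then (2b) at~$R$ yields $\indiv':\Not C\in\branch{B}$, and~(1) at the $\prec$-smaller $\Not C$ gives $\|\indiv'\|\in(\Not C)^\IB$, contradicting $\|\indiv'\|\in C^\IB$.

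For part~(2), I use the preprocessing to assume $\cdot^{-1}$ occurs only on atomic role symbols. For~(2a) the atomic case is by definition of $\rname^\IB$ (take $\indiv''=\indiv'$); the $\idRole$ case uses \eqref{rule: exists Id}; the inverse case uses \eqref{rule: exists inverse} together with the induction hypothesis; the union case uses \eqref{rule: exists or} with branching; and the $\Not R$ case combines \eqref{rule: exists not} with (2b) at~$R$, deriving a clash against $\indiv':\{\indiv'\}\in\branch{B}$. For~(2b), the atomic case aligns a representative $\indiv''\simB\indiv'$ with $\indiv'$ using \eqref{rule: mon} and then applies \eqref{rule: not exists} followed by \eqref{rule: mon}; the $\idRole$ case uses \eqref{rule: not exists Id} with \eqref{rule: mon}; the inverse case uses \eqref{rule: not exists inverse} after rewriting the representative by \eqref{rule: mon}; the union case uses \eqref{rule: not exists or} with the induction hypothesis on whichever disjunct supplies the edge; and the $\Not R$ case branches via \eqref{rule: not exists not} with second premise $\indiv':\{\indiv'\}$, where the first branch $\indiv:\exists R.\{\indiv'\}\in\branch{B}$ is ruled out by (2a) at~$R$ (which would force $(\|\indiv\|,\|\indiv'\|)\in R^\IB$, contradicting the hypothesis $(\|\indiv\|,\|\indiv'\|)\in(\Not R)^\IB$), leaving the required $\indiv':\Not D\in\branch{B}$.

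The main obstacle is the bookkeeping around equivalence-class representatives. Whenever $(\|\indiv\|,\|\indiv'\|)\in R^\IB$ is obtained from, or fed back into, a labelled formula on the branch, that formula is typically only available for some specific representative $\indiv''\simB\indiv'$, so the three equality rules \eqref{rule: sym}, \eqref{rule: mon}, and \eqref{rule: id} must be applied to shift names before or after invoking the corresponding rule schema. Making this shift rigorously is precisely what makes every subcase of~(2b) go through, and it is also what justifies the definitions of $\rname^\IB$ and $\cname^\IB$ on equivalence classes rather than on raw individuals.
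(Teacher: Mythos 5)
Your proposal is correct and follows essentially the same route as the paper: a simultaneous well-founded induction on $\prec$, with the identical case analysis, the same appeals to (2a)/(2b) at the $\prec$-smaller role in the $\exists R.C$, $\Not\exists R.C$ and $\Not R$ cases (including the contrapositive arguments and clash with $\indiv':\{\indiv'\}$), the same use of preprocessing to reduce the inverse case to atomic roles, and the same representative-shifting via \eqref{rule: sym}, \eqref{rule: mon}, \eqref{rule: nsym} and \eqref{rule: id}.
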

\begin{proof}
We prove both properties simultaneously by induction on the 
ordering $\prec$.
The induction hypothesis is:
for an arbitrary \ALBOid expression $E$
and each expression~$F$ such that $F\prec E$,
if $F$ is a concept then property~\eqref{prop: C} holds with $D=F$.
Otherwise (that is, if $F$ is a role), property~\eqref{prop: R} holds with $R=F$ (and $D$ is arbitrary).

To prove property~\eqref{prop: C} we consider the following cases.
\begin{description}
 \item[$D=\cname$] We have $\indiv: \cname\in\branch{B}\iff\|\indiv\|\in \cname^\IB$ by definition of $\cname^\IB$.
 \item[$D=\{\indiv'\}$] If $\indiv:\{\indiv'\}\in\branch{B}$ then, using the definition of $\indiv^\IB$,
      $\indiv^\IB=\|\indiv\|=\|\indiv'\|=\indiv'^\IB$ and,
      consequently, $\indiv^\IB\in\{\indiv'^\IB\}=\{\indiv'\}^\IB$.
 \item[$D=\Not D_0$] If $\indiv:\Not D_0\in\branch{B}$ then $\indiv:D_0\notin\branch{B}$ because
    otherwise $\branch{B}$ would have been closed by the \eqref{rule: clash}~rule.
    We have the following subcases.
    \begin{description}
     \item[$D_0=\cname$] We have $\indiv: \cname\in\branch{B}\iff\|\indiv\|\in \cname^\IB$ by definition of $\cname^\IB$.
     \item[$D_0=\{\indiv'\}$] As the rules~\eqref{rule: nsym} and~\eqref{rule: id} have been applied in $\branch{B}$
                             it is clear
                             that $\indiv':\{\indiv'\}$ is in $\branch{B}$ and $\indiv'^\IB=\|\indiv'\|$.
                             Similarly, $\indiv^\IB=\|\indiv\|$.
                             Furthermore, because $\indiv:\{\indiv'\}\notin\branch{B}$ we have $\indiv\notsimB\indiv'$,
                             that is $\indiv\notin\|\indiv'\|=\indiv'^\IB$. 
                             Consequently, $\|\indiv\|\notin\{\|\indiv'\|\}=\{\indiv'\}^\IB$.
     \item[$D_0=\Not D_1$] 
        If $\indiv:\Not\Not D_1\in\branch{B}$ then $\indiv:D_1\in\branch{B}$ 
        by the \eqref{rule: not not}~rule.
        By the induction hypothesis, the property~\eqref{prop: C} holds for $D_1\prec\Not\Not D_1$.
        Thus, $\|\indiv\|\in D_1^\IB=(\Not\Not D_1)^\IB=D^\IB$.
     \item[$D_0=D_1\Or D_2$] In this case both $\indiv:\Not D_1$ and $\indiv:\Not D_2$ are in the branch $\branch{B}$
        by the \eqref{rule: not or}~rule. We also have $\Not D_1\prec\Not(D_1\Or D_2)$ and 
        $\Not D_2\prec\Not(D_1\Or D_2)$.
        Hence, by the induction hypothesis for the property~\eqref{prop: C}, 
        $\|\indiv\|\in(\Not D_1)^\IB\cap(\Not D_2)^\IB=(\Not(D_1\Or D_2))^\IB$.
     \item[$D_0=\exists R.D_1$] Let $\|\indiv'\|$ be an arbitrary element of $\Delta^\IB$ such that
        $(\|\indiv\|,\|\indiv'\|)\in R^\IB$ (if there is no such element then there is nothing to prove).
        By the induction hypothesis the property~\eqref{prop: R: 2} holds
        for $R\prec D_0$. Thus, $\indiv':\Not D_1\in\branch{B}$. The induction hypothesis
        for the property~\eqref{prop: C} gives us $\|\indiv'\|\notin D_1^\IB$. Finally,
        we obtain $\|\indiv\|\in(\Not\exists R.D_1)^\IB$ because $\indiv'$ was chosen arbitrarily.
    \end{description}

 \item[$D=D_0\Or D_1$]
    If $\indiv:D_0\Or D_1\in\branch{B}$ then either $\indiv:D_0\in\branch{B}$, or $\indiv:D_1\in\branch{B}$
    by the \eqref{rule: or}~rule.
    Hence, either $\|\indiv\|\in D_0^\IB$ or $\|\indiv\|\in D_1^\IB$ by the induction hypothesis for
    $D_0\prec D$ and $D_1\prec D$.
    Thus, $\|\indiv\|\in D_0^\IB\cup D_1^\IB=D^\IB$.
 \item[$D=\exists R.D_0$]
    If $\indiv:\exists R.D_0\in\branch{B}$ then $\indiv':D_0\in\branch{B}$ and $\indiv:\exists R.\{\indiv'\}\in\branch{B}$ 
    for some individual $\indiv'$ by the \eqref{rule: exists}~rule. 
    By the induction hypothesis the properties~\eqref{prop: C} and~\eqref{prop: R: 1} hold for $D_0\prec D$
    and $R\prec D$ respectively. 
    Hence,
    $\|\indiv'\|\in D_0^\IB$ and $(\|\indiv\|,\|\indiv'\|)\in R^\IB$. That is, $\|\indiv\|\in(\exists R.D_0)^\IB$.
\end{description}
To prove property~\eqref{prop: R} we consider all cases corresponding to the possible forms that a role $R$ can have.
\begin{description}
 \item[$R=\rname$]
    Since $\indiv:\exists \rname.\{\indiv'\}\in\branch{B}$ implies $(\|\indiv\|,\|\indiv'\|)\in \rname^\IB$
    using the definition of $\rname^\IB$, property~\eqref{prop: R: 1} holds trivially.
    For~\eqref{prop: R: 2} let $(\|\indiv\|,\|\indiv'\|)\in \rname^\IB$ and $\indiv:\Not\exists \rname.D\in\branch{B}$.
    Using the definition of $\rname^\IB$, there is an $\indiv''\simB\indiv'$ such that $\indiv:\exists \rname.\{\indiv''\}\in\branch{B}$.
    Therefore, by the \eqref{rule: not exists}~rule, $\indiv'':\Not D$ is in the branch $\branch{B}$, too.
    Finally, $\indiv':\Not D\in\branch{B}$ by the \eqref{rule: mon}~rule.
 \item[$R=\idRole$] For property~\eqref{prop: R: 1} let $\indiv:\exists \idRole.\{\indiv'\}\in\branch{B}$.
                    Then $\indiv:\{\indiv'\}\in\branch{B}$ by the \eqref{rule: exists Id}~rule and, hence, $\ecl{\indiv}=\ecl{\indiv'}$.
                    Consequently, $(\|\indiv\|,\|\indiv'\|)\in\idRole^\IB$.
                    For~\eqref{prop: R: 2} suppose that 
                    $(\|\indiv\|,\|\indiv'\|)\in \idRole^\IB$ and $\indiv:\Not\exists \idRole.D\in\branch{B}$.
                    Hence, $\ecl{\indiv}=\ecl{\indiv'}$ and, therefore, $\indiv:\{\indiv'\}$ is in $\branch{B}$.
                    Further, by the \eqref{rule: not exists Id}~rule,
                     $\indiv:\Not D\in\branch{B}$.
                    Finally, by the \eqref{rule: mon}~rule, 
                    $\indiv':\Not D$ is in the branch $\branch{B}$.
 \item[$R=S^{-1}$] For property~\eqref{prop: R: 1} let $\indiv:\exists S^{-1}.\{\indiv'\}\in\branch{B}$.
                    Then $\indiv':\exists S.\{\indiv\}\in\branch{B}$ by the rule~\eqref{rule: exists inverse}.
                    By the induction hypothesis for $S\prec R$ we have $(\|\indiv'\|,\|\indiv\|)\in S^\IB$.
                    Consequently, $(\|\indiv\|,\|\indiv'\|)\in (S^{-1})^\IB$.
                    For~\eqref{prop: R: 2} suppose that 
                    $(\|\indiv\|,\|\indiv'\|)\in (S^{-1})^\IB$ and $\indiv:\Not\exists S^{-1}.D\in\branch{B}$.
                    As all the occurrences of the inverse operator have been pushed through other role connectives
                    and double occurrences
                    of ${}^{-1}$ have been removed\footnote{%
                        Removing double occurrences of the inverse operator in front of atoms
                        is not essential for the proof but it simplifies matters a bit.}
                    we can assume that $S=\rname$ for some role name $\rname$.
                    Hence, $(\|\indiv'\|,\|\indiv\|)\in \rname^\IB$ and, by the definition of $\rname^\IB$,
                    there is $\indiv''\simB\indiv$ such that $\indiv':\exists \rname.\{\indiv''\}\in\branch{B}$.
                    By the \eqref{rule: mon}~rule, $\indiv'':\Not\exists \rname^{-1}.D\in\branch{B}$.
                    Finally, by the \eqref{rule: not exists inverse}~rule, 
                    $\indiv':\Not D$ is in the branch $\branch{B}$.
 \item[$R=S_0\Or S_1$] 
    For~\eqref{prop: R: 1} suppose $\indiv:\exists (S_0\Or S_1).\{\indiv'\}\in\branch{B}$.
    Hence, $\indiv:\exists S_0.\{\indiv'\}\in\branch{B}$ or $\indiv:\exists S_1.\{\indiv'\}\in\branch{B}$
    by the \eqref{rule: exists or}~rule.
    Thus, by the induction hypothesis for $S_0\prec R$ and $S_1\prec R$ we have
    either $(\|\indiv\|,\|\indiv'\|)\in S_0^\IB$ or $(\|\indiv\|,\|\indiv'\|)\in S_1^\IB$.
    Finally, by the semantics of the relational $\Or$ connective we obtain
    $(\|\indiv\|,\|\indiv'\|)\in (S_0\Or S_1)^\IB$.
    For~\eqref{prop: R: 2} let $(\|\indiv\|,\|\indiv'\|)\in (S_0\Or S_1)^\IB=S_0^\IB\cup S_1^\IB$
        and $\indiv:\Not\exists(S_0\Or S_1).D\in\branch{B}$.
    By the \eqref{rule: not exists or}~rule we obtain that both
        $\indiv:\Not\exists S_0.D$ and $\indiv:\Not\exists S_1.D$ are in $\branch{B}$.
    Therefore, by the induction hypothesis for $S_0\prec R$ and $S_1\prec R$ the property~\eqref{prop: R: 2} holds
    and, thus, we have $\indiv':\Not D\in\branch{B}$.
 \item[$R=\Not S$] For~\eqref{prop: R: 1} suppose $\indiv:\exists\Not S.\{\indiv'\}\in\branch{B}$. 
    Then $\indiv:\Not\exists S.\{\indiv'\}\in\branch{B}$ is obtained with the \eqref{rule: exists not}~rule. 
    If $(\|\indiv\|,\|\indiv'\|)\notin(\Not S)^\IB$
    then $(\|\indiv\|,\|\indiv'\|)\in S^\IB$ and by property~\eqref{prop: R: 2} which holds
    by the induction hypothesis for $S\prec R$ we have that
    $\indiv':\Not\{\indiv'\}$ is in $\branch{B}$. This concept together with $\indiv':\{\indiv'\}$ implies
    the branch is closed. We reach a contradiction, so $(\|\indiv\|,\|\indiv'\|)\in(\Not S)^\IB$.

    For property~\eqref{prop: R: 2} suppose that $(\|\indiv\|,\|\indiv'\|)\in(\Not S)^\IB$
    and $\indiv:\Not\exists\Not S.D$ are in the branch $\branch{B}$.
    Then we have $(\|\indiv\|,\|\indiv'\|)\notin  S^\IB$ and, hence,
    by the contrapositive of property~\eqref{prop: R: 1} for $S\prec R$,
    $\indiv:\exists S.\{\indiv'\}$ is not in $\branch{B}$.
    Applying the 
    \eqref{rule: not exists not}~rule to $\indiv:\Not\exists\Not S.D$ we
get
    $\indiv':\Not D\in\branch{B}$.
\end{description}
\end{proof}
A direct consequence of this lemma is a stronger form of completeness
of the tableau calculus which states existence of a model for an open
branch.
From this it follows that if the input set is unsatisfiable a closed tableau
can be constructed for it.
\begin{theorem}
[Completeness]\label{theorem: completeness} For any set of concepts
$\mathcal{S}$ and any tableau $T_\ALBOid(\mathcal{S})$, if there
is an open branch in $T_\ALBOid(\mathcal{S})$ then $\mathcal{S}$
is satisfiable in an \ALBOid-model.
This implies
$T_\ALBOid$ is a complete tableau calculus for testing satisfiability
of concept expressions in \ALBOid.
\end{theorem}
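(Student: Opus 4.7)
The plan is to read off the theorem directly from Lemma~\ref{lemma: reflection}, which already carries all the technical weight. Let $\branch{B}$ be an open branch in a fully expanded tableau $T_\ALBOid(\mathcal{S})$, and let $\IB$ denote the interpretation constructed from $\branch{B}$ exactly as defined just before Lemma~\ref{lemma: reflection}.

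First I would verify that $\IB$ is a well-defined \ALBOid-model. By the construction of the root of a tableau, there is a fresh individual~$\indiv_0$ such that $\indiv_0:C\in\branch{B}$ for every $C\in\mathcal{S}$. Because $\branch{B}$ is fully expanded and open, the \eqref{rule: id}~rule has been applied to yield $\indiv_0:\{\indiv_0\}\in\branch{B}$, so $\|\indiv_0\|\in\Delta^\IB$ and in particular $\Delta^\IB$ is nonempty. The rules \eqref{rule: sym}, \eqref{rule: mon} and \eqref{rule: id} guarantee that $\simB$ is an equivalence relation on the individuals named in $\branch{B}$, and the \eqref{rule: mon}~rule ensures that the definitions of $\rname^\IB$ and $\cname^\IB$ are independent of the chosen representative of an equivalence class, so $\IB$ is a legitimate \ALBOid-interpretation.

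Next I would apply property~\eqref{prop: C} of Lemma~\ref{lemma: reflection} to each $C\in\mathcal{S}$: since $\indiv_0:C\in\branch{B}$, the lemma yields $\|\indiv_0\|\in C^\IB$, so $C^\IB\neq\emptyset$. Hence every concept in $\mathcal{S}$ is satisfied in $\IB$, and therefore $\mathcal{S}$ is satisfiable in an \ALBOid-model, as required.

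Finally, the second assertion of the theorem follows by contraposition: if $\mathcal{S}$ is unsatisfiable, then by the statement just proved no tableau $T_\ALBOid(\mathcal{S})$ can contain an open branch, so every such tableau is closed; this is exactly completeness of $T_\ALBOid$ in the sense defined in Section~\ref{section_tableau}. There is no real obstacle at this stage; the only small point that needs care is confirming that the initial individual actually lies in $\Delta^\IB$, which is why the perhaps unusual \eqref{rule: id}~rule is essential to the construction.
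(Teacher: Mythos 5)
Your proposal is correct and follows exactly the paper's route: the paper also obtains the theorem as a direct consequence of Lemma~\ref{lemma: reflection}, reading off satisfiability of $\mathcal{S}$ from the model $\IB$ built from an open branch (with the \eqref{rule: id}~rule guaranteeing the root individual inhabits $\Delta^\IB$), and then deriving completeness in the official sense by contraposition. Your additional checks that $\simB$ is an equivalence relation and that the interpretation is representative-independent are just the well-definedness remarks the paper makes when constructing $\IB$, so there is no substantive difference.
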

Thus, the calculus $T_\ALBOid$ is sound and complete for reasoning in \ALBOid.

\section{Unrestricted Blocking}
\label{section_blocking}

There are satisfiable concepts which can result in an infinite
$T_\ALBOid$-tableau, where all open branches are infinite.
The concept \[\Not\exists(Q'\Or\Not Q').\Not\exists \rname.\cname\] is such an
example. 
Since the prefix 
$\Not\exists(Q'\Or\Not Q').\Not$ is equivalent to the universal modality,
the concept $\indiv:\exists \rname.\cname$ is propagated to every
individual $\indiv$ in every branch of the tableau.
The concept $\indiv:\exists \rname.\cname$ itself, each time triggers the creation
of a new individual with the \eqref{rule: exists}~rule.
Thus, any branch of the tableau contains infinitely
many individuals. The branches
have however a regular structure that can be detected with loop
detection or blocking mechanisms.

\begin{figure}[!tu]
\begin{center}
\begin{minipage}{.65\textwidth}
  \begin{enumerate}[1.]
   \item\label{ex: std.l.c.:0}\titem{\indiv_0:\Box\exists \rname.\cname}{given}
   \item\label{ex: std.l.c.:0b}\titem{\indiv_0:\{\indiv_0\}}{\eqref{rule: id},\ref{ex: std.l.c.:0}}
   \item\label{ex: std.l.c.:1}\titem{\indiv_0:\exists \rname.\cname}{\eqref{rule: universal modality},\ref{ex: std.l.c.:0},\ref{ex: std.l.c.:0b}}
   \item\label{ex: std.l.c.:2}\titem{\indiv_1:\cname}{\eqref{rule: exists},\ref{ex: std.l.c.:1}}
   \item\label{ex: std.l.c.:3}\titem{\indiv_0:\exists \rname.\{\indiv_1\}}{\eqref{rule: exists},\ref{ex: std.l.c.:1}}
   \item\label{ex: std.l.c.:2a}\titem{\indiv_1:\{\indiv_1\}}{\eqref{rule: id},\ref{ex: std.l.c.:2}}
   \item\label{ex: std.l.c.:4}\titem{\indiv_1:\exists \rname.\cname}{\eqref{rule: universal modality},\ref{ex: std.l.c.:0},\ref{ex: std.l.c.:2a}}
   \item\label{ex: std.l.c.:4a}\titem{\indiv_0\not\sim\indiv_1}%
    {Loop checking: $\cname\in\mathcal{L}(\indiv_1)\setminus\mathcal{L}(\indiv_0)$}
   \item\label{ex: std.l.c.:5}\titem{\indiv_2:\cname}{\eqref{rule: exists},\ref{ex: std.l.c.:4}}
   \item\label{ex: std.l.c.:6}\titem{\indiv_1:\exists \rname.\{\indiv_2\}}{\eqref{rule: exists},\ref{ex: std.l.c.:4}}
   \item\label{ex: std.l.c.:5a}\titem{\indiv_2:\{\indiv_2\}}{\eqref{rule: id},\ref{ex: std.l.c.:5}}
   \item\label{ex: std.l.c.:7}\titem{\indiv_2:\exists \rname.\cname}{\eqref{rule: universal modality},\ref{ex: std.l.c.:0},\ref{ex: std.l.c.:5a}}
   \item\label{ex: std.l.c.:8}\titem{\indiv_1\sim\indiv_2}%
    {Loop checking: $\mathcal{L}(\indiv_1)\supseteq\mathcal{L}(\indiv_2)$}
  \end{enumerate}
\end{minipage}
\end{center}
\caption{Standard loop checking mechanism in $\ALCO^\Box$.}\label{figure: standard loop checking}
\end{figure}

Let us demonstrate how \emph{standard loop checking} (in this case,
subset ancestor blocking, see for example~\citeN{BaaderCalvaneseEtal03}) detects a loop.
As we already observed, satisfiability of the concept $\Not\exists(Q'\Or\Not
Q').\Not\exists \rname.\cname$ corresponds to satisfiability of the 
concept $\Box \exists \rname.\cname$
in the description logic $\ALCO^\Box$, which is \ALCO with the universal modality~$\Box$.
As a calculus for $\ALCO^\Box$ 
we use the first group of rules from Figure~\ref{table: T} plus 
the following two rules (where $\indiv'$ in the left rule is uniquely
associated with $\indiv:\Not\Box C$).
\begin{align*}
 \tableaurule{\indiv:\Not\Box C}{\indiv':\Not C}[$\Not\Box$]\label{rule: existential modality}\ \text{($\indiv'$ is new)}
&&
 \tableaurule{\indiv:\Box C\tand \indiv':\{\indiv'\}}{\indiv':C}[$\Box$]\label{rule: universal modality}
\end{align*}
Similar to the proofs in Section~\ref{section: Soundness and completeness}
it can be proved that
this calculus is sound and complete for satisfiability problem in $\ALCO^\Box$.

Figure~\ref{figure: standard loop checking} gives a derivation in this
calculus using loop checking based on standard subset ancestor blocking.
For any individual $\indiv$,
let $\mathcal{L}(\indiv)$ be a set of concepts associated with $\indiv$
in the current branch. More precisely,
it is defined by
\[
 \mathcal{L}(\indiv)\define\{C\mid\indiv:C\ \text{is in the current branch, and $C\neq\{\indiv\}$}\}.%
\]
After all the type-completing rules have been applied to all concept
expressions labelled with a specific individual, loop checking tests
are performed relative to an ancestor individual.
Two loop checking tests are performed, namely in step~\ref{ex: std.l.c.:4a}
and step~\ref{ex: std.l.c.:8}.
Consider step~\ref{ex: std.l.c.:8}.
All the type-completing rules have been applied to all concept expressions of the
form $\indiv_2:C$ and $\indiv_1:C$.
Comparison of the sets of concepts $\mathcal{L}(\indiv_1)$ and  $\mathcal{L}(\indiv_2)$ 
associated with $\indiv_1$ and $\indiv_2$ in the loop checking test shows that 
$\mathcal{L}(\indiv_1)$ subsumes $\mathcal{L}(\indiv_2)$.
Thus they are in a subset relationship as indicated, and
consequently the individuals $\indiv_1$ and $\indiv_2$
can be identified.
At step~\ref{ex: std.l.c.:4a} the sets 
$\mathcal{L}(\indiv_0)$ and $\mathcal{L}(\indiv_1)$
are not in a subset
relationship
because $\indiv_0:\cname$ is not present in the branch.
The derivation therefore cannot yet stop, but does in step~\ref{ex: std.l.c.:8}.

This example illustrates one of the simplest forms of standard loop checking
used in description and modal logic tableau procedures.
Other forms of loop-checking have been devised for different logics.
One can classify the different existing loop-checking mechanisms as using a
combination of these blocking techniques:
subset or equality blocking, non-pairwise or pairwise blocking,
ancestor or anywhere blocking, and static or dynamic blocking, see for example~\cite{BaaderSattler-OTADL-2001}.
These techniques are all based on comparing sets of concepts labelled by some
individuals (and in the case of pairwise blocking also sets of roles
associated to predecessors).
It is not clear whether these forms of loop checking are sufficient to handle role
negation though.

Suppose that, at some node of a branch, all the type-completing
rules have been applied to concepts labelled with individuals $\indiv_0$
and $\indiv_1$.
Following the standard non-pairwise blocking procedure
we can identify $\indiv_0$ and $\indiv_1$
if they both label the same set of concepts.
However, it is not correct to start identifying the individuals
at this point in the derivation as there could be, for example, a concept $\exists
S.\Not\exists\Not R.C$ involving role negation, where $R$ and $C$ are expressions that are suitably complex.
At a subsequent point the 
\eqref{rule: exists}~rule is applied to this concept and, hence,
the expression $\indiv_2:\Not\exists\Not R.C$,
where $\indiv_2$ is new,
appears in the branch. This triggers the application of
the \eqref{rule: not exists not} rule. Because of the form of $R$
and $\neg C$ in the two branches it also triggers application of the
\eqref{rule: not exists}~rule and
possibly other type-completing rules.
After such applications, it can happen that the labels are not
identifiable (anymore)
because their types have become distinguishable by some concept. 
That is, any introduction of a new individual in a tableau 
has, in general, a global effect on the provisional model
constructed so far.

The example in Figure~\ref{figure: global effect} illustrates this
global effect.
Here we are interested in the satisfiability of 
\[
   (\exists \rname.\cname)\And
   (\exists Q'.\cname)\And
   (\exists Q''.\forall Q''.\forall Q'^{-1}.\cbot)\And
   (\forall Q''.\overline\forall Q''.\forall Q'^{-1}.\cbot).
\]
(Recall that $\forall Q''.C=\Not\exists Q''.\Not C$ and $\overline\forall Q''.C=\Not\exists\Not Q''.C$.)
At step~\ref{ex: gl.eff.:6a}
none of the type-completing rules need to be applied to
concepts labelled with $\indiv_1$ and $\indiv_2$.
Although at this point $\indiv_1$ and $\indiv_2$ are labels of the same subconcepts of the
given concepts,
we cannot make them equal (using equality anywhere blocking).
The reason is that
in step~\ref{ex: gl.eff.:8}
a new individual is introduced 
which causes a few applications of the \eqref{rule: not exists}~rule,
and
as a result, at step~\ref{ex: gl.eff.:13.0}, the types of
$\indiv_1$ and $\indiv_2$ are now distinguished by the concept $\exists Q'^{-1}.(\cname\Or\Not \cname)$.
The example is designed to show that it is not possible
to use standard blocking techniques
to identify individuals ($\indiv_1$ and $\indiv_2$)
although they are locally complete (that is, none of the type completing rules can be applied to them),
because other individuals ($\indiv_0$) can influence $\indiv_1$ and~$\indiv_2$
via the right branch of the \eqref{rule: not exists not}~rule.

\begin{figure}[tbu]
\begin{center}
\begin{minipage}{.65\textwidth}
  \begin{enumerate}[1.]
   \item\label{ex: gl.eff.:0}\titem{\indiv_0:\exists \rname.\cname}{given}
   \item\label{ex: gl.eff.:1}\titem{\indiv_0:\exists Q'.\cname}{given}
   \item\label{ex: gl.eff.:2}\titem{\indiv_0:\exists Q''.\Not\exists Q''.\exists Q'^{-1}.(\cname\Or\Not \cname)}{given}
   \item\label{ex: gl.eff.:3}\titem{\indiv_0:\Not\exists Q''.\exists\Not Q''.\Not\exists Q'^{-1}.(\cname\Or\Not \cname)}{given}
   \item\label{ex: gl.eff.:4}\titem{\indiv_1:\cname}{\eqref{rule: exists},\ref{ex: gl.eff.:0}}
   \item\label{ex: gl.eff.:5}\titem{\indiv_0:\exists \rname.\{\indiv_1\}}{\eqref{rule: exists},\ref{ex: gl.eff.:0}}
   \item\label{ex: gl.eff.:4a}\titem{\indiv_1:\{\indiv_1\}}{\eqref{rule: id},\ref{ex: gl.eff.:4}}
   \item\label{ex: gl.eff.:6}\titem{\indiv_2:\cname}{\eqref{rule: exists},\ref{ex: gl.eff.:1}}
   \item\label{ex: gl.eff.:7}\titem{\indiv_0:\exists Q'.\{\indiv_2\}}{\eqref{rule: exists},\ref{ex: gl.eff.:1}}
   \item\label{ex: gl.eff.:6a}\titem{\indiv_2:\{\indiv_2\}}{\eqref{rule: id},\ref{ex: gl.eff.:6}}
   \item\label{ex: gl.eff.:8}\titem{\indiv_3:\Not\exists Q''.\exists Q'^{-1}.(\cname\Or\Not \cname)}{\eqref{rule: exists},\ref{ex: gl.eff.:2}}
   \item\label{ex: gl.eff.:9}\titem{\indiv_0:\exists Q''.\{\indiv_3\}}{\eqref{rule: exists},\ref{ex: gl.eff.:2}}
   \item\label{ex: gl.eff.:10}\titem{\indiv_3:\Not\exists\Not Q''.\Not\exists Q'^{-1}.(\cname\Or\Not \cname)}%
    {\eqref{rule: not exists},\ref{ex: gl.eff.:3},\ref{ex: gl.eff.:9}}
   \item\label{ex: gl.eff.:11}\titem{\tbranch\indiv_3:\exists Q''.\{\indiv_2\}}%
    {\eqref{rule: not exists not},\ref{ex: gl.eff.:10},\ref{ex: gl.eff.:6a}}

   \item\label{ex: gl.eff.:11.0}\titem{\tskip\indiv_2:\Not\exists Q'^{-1}.(\cname\Or\Not \cname)}%
    {\eqref{rule: not exists},\ref{ex: gl.eff.:8},\ref{ex: gl.eff.:11}}
   \item\label{ex: gl.eff.:11.1}\titem{\tskip\indiv_0:\Not(\cname\Or\Not \cname)}%
    {\eqref{rule: not exists inverse},\ref{ex: gl.eff.:11.0},\ref{ex: gl.eff.:7}}
   \item\label{ex: gl.eff.:11.2}\titem{\tskip\unsat}{after a few steps}
   \item\label{ex: gl.eff.:12.0}\titem{\tbranch\indiv_2:\Not\Not\exists Q'^{-1}.(\cname\Or\Not \cname)}%
    {\eqref{rule: not exists not},\ref{ex: gl.eff.:10},\ref{ex: gl.eff.:6a}}
   \item\label{ex: gl.eff.:12.1}\titem{\tskip\indiv_2:\exists Q'^{-1}.(\cname\Or\Not \cname)}%
    {\eqref{rule: not not},\ref{ex: gl.eff.:12.0}}
   \item\label{ex: gl.eff.:13}\titem{\tskip\tbranch\indiv_3:\exists Q''.\{\indiv_1\}}%
    {\eqref{rule: not exists not},\ref{ex: gl.eff.:10},\ref{ex: gl.eff.:4a}}
   \item\label{ex: gl.eff.:13.0}\titem{\tskip[2]\indiv_1:\Not\exists Q'^{-1}.(\cname\Or\Not \cname)}%
    {\eqref{rule: not exists},\ref{ex: gl.eff.:8},\ref{ex: gl.eff.:13}}
   \item[\ldots]\titem{\tskip[2]{}}{}
   \item\label{ex: gl.eff.:14}\titem{\tskip\tbranch\indiv_1:\Not\Not\exists Q'^{-1}.(\cname\Or\Not \cname)}%
    {\eqref{rule: not exists not},\ref{ex: gl.eff.:10},\ref{ex: gl.eff.:4a}}

   \item[\ldots]\titem{\tskip[2]{}}{}
  \end{enumerate}
\end{minipage}
\end{center}
 \caption{Global effect of the introduction of a new individual}\label{figure: global effect}
\end{figure}

The examples illustrate that a reason for non-termination of
$T_\ALBOid$ is the possible infinite generation of labels.
Although the rules respect the well-founded ordering~$\prec$
they do it only with respect to expressions, not labelled
expressions (assertions).
Furthermore, it is easy to see that only applications of the \eqref{rule: exists}~rule generate
new individuals in the branch. Thus, 
a reason that a branch can be infinite 
is the unlimited application of the \eqref{rule: exists}~rule; it is in
fact the only possible reason.
It is crucial therefore to have a blocking mechanism that avoids
infinite derivations by restricting the application
of the \eqref{rule: exists}~rule.

In order to turn the calculus $T_\ALBOid$ into a terminating calculus
for \ALBOid, we introduce a new approach to blocking.

Let $<$ be an ordering on individuals in the branch which is a linear
extension of the order of introduction of the individuals during
the derivation.
That is, let $\indiv<\indiv'$, whenever
the first appearance of individual $\indiv'$ in
the branch is strictly later than the first appearance of individual
$\indiv$.

We add the following rule, called the \emph{unrestricted blocking}
rule, to the calculus. 
\[
\tableaurule{\indiv:\{\indiv\}\tand\indiv':\{\indiv'\}}{\indiv:\{\indiv'\}\tor\indiv:\Not\{\indiv'\}}[ub]
\label{rule: unrestricted blocking}
\]
Moreover, we require that the following conditions both hold.
\begin{enumerate}[(c1)]
\item\label{req: blocking}
If $\indiv:\{\indiv'\}$ appears in a branch and $\indiv<\indiv'$ then
possible applications of the \eqref{rule: exists}~rule to expressions
of the form $\indiv':\exists R.C$
are not performed within the branch.
\item\label{req: ub before exists}
In every open branch there is some node from which point onwards
before any application of the \eqref{rule: exists}~rule
all possible applications of the \eqref{rule: unrestricted blocking}~rule
have been performed. 
\end{enumerate}

The intuition of the \eqref{rule: unrestricted blocking}~rule is that
it conjectures whether two labels are equal or not.
In the left branch two labels are set to be equal. If this does not 
lead to finding a model then the labels cannot be equal. This is the
information carried by the right branch.
Conditions (c\ref{req: blocking}) and (c\ref{req: ub before exists})
are important to restrict the application of the \eqref{rule: exists}~rule.
Condition~(c\ref{req: blocking}) specifies that it may only be applied to labelled expressions
where the label is the smallest representative of an equivalence class.
Condition~(c\ref{req: ub before exists}) says that from some point
onwards in a branch blocking has been applied exhaustively before
the application of the \eqref{rule: exists}~rule.

We use the notation \TALBOidub for the extension of $T_\ALBOid$ with 
this blocking mechanism using the \eqref{rule: unrestricted blocking}~rule.

\begin{theorem}
 \TALBOidub is a sound and complete tableau calculus for \ALBOid.
\end{theorem}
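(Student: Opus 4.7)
\emph{Plan.} I would prove soundness and completeness separately, reducing as much as possible to the arguments already developed in Section~\ref{section: Soundness and completeness}. Soundness of the extended calculus amounts to checking that the new rule and the new restrictions preserve satisfiability, which is almost immediate. Completeness requires rerunning the branch--to--model construction of Section~\ref{section: Soundness and completeness} and verifying that the reflection lemma still holds in the presence of blocking; essentially all cases of the lemma transfer without change, so the real work lies in a single case.

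\emph{Soundness.} The \eqref{rule: unrestricted blocking}~rule is sound, since in any model and under any substitution making the premises true, the two denoted elements are either equal or distinct, so one of the two conclusion branches is satisfied. Conditions~(c\ref{req: blocking}) and~(c\ref{req: ub before exists}) merely forbid certain rule applications and do not add any new conclusions to the calculus, so soundness of $\TALBOidub$ follows from soundness of $T_\ALBOid$ together with soundness of the new rule.

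\emph{Completeness.} I would reuse verbatim the construction of $\simB$, the equivalence classes $\|\indiv\|$, and the model $\IB$ from an open branch $\branch{B}$ of $\TALBOidub(\mathcal{S})$, and then revisit Lemma~\ref{lemma: reflection} by induction on $\prec$. Every rule of $T_\ALBOid$ is still present in $\TALBOidub$, and with the exception of \eqref{rule: exists} all rules are type-completing and remain applied exhaustively, so every case other than $D = \exists R.D_0$ of property~\eqref{prop: C} goes through unchanged. For this remaining case, suppose $\indiv:\exists R.D_0 \in \branch{B}$ and let $\indiv^{*}$ be the $<$-least element of $\|\indiv\|$. Since membership $\indiv'\in\|\indiv^{*}\|$ would be witnessed by some $\indiv':\{\indiv^{*}\}\in\branch{B}$ and any such $\indiv'<\indiv^{*}$ would contradict minimality, $\indiv^{*}$ is not blocked in the sense of~(c\ref{req: blocking}). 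Applying \eqref{rule: sym} and \eqref{rule: mon} to $\indiv\simB\indiv^{*}$ and $\indiv:\exists R.D_0$ yields $\indiv^{*}:\exists R.D_0\in\branch{B}$, and by condition~(c\ref{req: ub before exists}) the equivalence classes have stabilised by the time \eqref{rule: exists} is applied, so the rule is in fact applied at $\indiv^{*}$, producing a fresh $\indiv^{**}$ with $\indiv^{*}:\exists R.\{\indiv^{**}\}$ and $\indiv^{**}:D_0$ both in $\branch{B}$. The induction hypothesis then gives $(\|\indiv^{*}\|,\|\indiv^{**}\|)\in R^\IB$ and $\|\indiv^{**}\|\in D_0^\IB$, and using $\|\indiv\|=\|\indiv^{*}\|$ I conclude $\|\indiv\|\in(\exists R.D_0)^\IB$, as required.

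\emph{Main obstacle.} The delicate point is justifying that the smallest representative~$\indiv^{*}$ really is unblocked \emph{at the moment when} the \eqref{rule: exists}~rule becomes applicable, and that the rule is therefore actually applied to $\indiv^{*}:\exists R.D_0$ somewhere in the branch. Without condition~(c\ref{req: ub before exists}) one could imagine a scenario where the existential is expanded at some individual which is later merged by \eqref{rule: unrestricted blocking} with a strictly smaller one, or conversely where no expansion ever occurs because equality with a smaller individual is only decided too late. Condition~(c\ref{req: ub before exists}) is exactly what rules these pathologies out, and this is the crux of the argument; once it is established the remaining cases of Lemma~\ref{lemma: reflection} are identical to those in Section~\ref{section: Soundness and completeness} and completeness follows as before.
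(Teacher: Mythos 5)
Your proposal is correct, but it proves the theorem by a genuinely different and more detailed route than the paper. The paper's own proof is a two-sentence generic argument: the \eqref{rule: unrestricted blocking}~rule is sound, the conditions (c\ref{req: blocking}) and (c\ref{req: ub before exists}) only forbid inferences and ``cannot cause an open branch to become closed'', hence the blocking mechanism can be added to \emph{any} sound and complete calculus, in particular to $T_\ALBOid$. You instead re-run the branch-to-model construction and re-prove Lemma~\ref{lemma: reflection} for open branches of $\TALBOidub$, isolating the one case affected by blocking ($D=\exists R.D_0$) and resolving it by passing to the $<$-least representative $\indiv^{*}$ of $\|\indiv\|$: minimality shows $\indiv^{*}$ can never be blocked under~(c\ref{req: blocking}), and \eqref{rule: sym}/\eqref{rule: mon} transfer the existential to $\indiv^{*}$, where the \eqref{rule: exists}~rule is applied in a fully expanded branch. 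This is exactly the content the paper leaves implicit, and it is the substantive point needed when Theorem~\ref{theorem: completeness} is later invoked for open branches of $\TALBOidub$ tableaux (as in the termination proof), so your version buys rigour where the paper buys brevity and generality (its argument is logic-independent and matches the philosophy that unrestricted blocking can be bolted onto any calculus). One small correction: the work in your $\exists$-case is done by condition~(c\ref{req: blocking}) together with full expansion (fairness) of the branch, not by~(c\ref{req: ub before exists}); the latter neither guarantees that equivalence classes ``stabilise'' nor is it needed for completeness --- it is a termination condition. Your minimality argument already suffices, so this misattribution does not damage the proof, but the ``main obstacle'' paragraph overstates the role of~(c\ref{req: ub before exists}).
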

\begin{proof}
The \eqref{rule: unrestricted blocking}~rule is sound in the usual
sense (see remarks at the beginning of Section~\ref{section: Soundness and completeness}).
The blocking conditions (c\ref{req: blocking}) and (c\ref{req: ub before exists}) are sound in the sense that
they cannot cause an open branch to become closed.
Hence, the blocking rule and the blocking conditions can be safely
added to \emph{any} tableau calculus without endangering soundness
or completeness.
As $T_\ALBOid$ is sound and complete, it follows that $\TALBOidub$
is sound and complete.
\end{proof}

\section{Termination Through Unrestricted Blocking}
\label{section_termination}

Next we prove termination.
Since every finite set of concepts can be replaced by the conjunction
of its elements, without loss of generality in this section and the
next section, we restrict ourselves to a single input concept.

For every set $X$, let $\Card(X)$ denote the \emph{cardinality} of $X$. 
Let $\preceq$ be reflexive closure of the ordering $\prec$ which has been defined in Section~\ref{section: Soundness and completeness}.
For every concept $D$ we define $\sub(D)\define\{D'\mid D'\preceq D\}$.

Let $C$ be the given input concept and $n$ be the length of the concept $C$. %
Let $k$ be the number of individuals occurring in the input concept assertion $\indiv:C$.
Suppose $\branch{B}$ is an arbitrary open branch in a $\TALBOidub$ tableau
for $C$ and $\IB$ be the model constructed
from~$\branch{B}$ as described in Section~\ref{section: Soundness and completeness}.

For every $\|\indiv\|\in\Delta^\IB$, let $\napplE(\|\indiv\|)$
denote the number of applications in $\branch{B}$ of the \eqref{rule: exists}~rule
to concepts of the form $\indiv':\exists R.D$
with $\indiv'\in\|\indiv\|$.

\begin{lemma}\label{lemma: napplE(ell)}
$\napplE(\|\indiv\|)$ is finite for every $\|\indiv\|\in\Delta^\IB$.
\end{lemma}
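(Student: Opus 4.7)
The plan is to split the count at the node $N$ in $\branch{B}$ whose existence is guaranteed by condition~(c\ref{req: ub before exists}). The prefix of $\branch{B}$ up to $N$ contains only finitely many rule applications, so it contributes only finitely many \eqref{rule: exists}-applications to $\napplE(\|\indiv\|)$; everything then reduces to bounding the applications after $N$.

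For that, I would first establish, by a routine induction along the derivation, that every concept of the form $\exists R.D$ with $D$ not a singleton that labels some individual in any $\TALBOidub$-derivation starting from~$C$ belongs to $\sub(C)$. Inspection of the calculus shows that every ``fresh'' positive-existential conclusion (from \eqref{rule: exists}, \eqref{rule: exists or}, \eqref{rule: exists inverse}, and the left branch of \eqref{rule: not exists not}) has the singleton shape $\exists R.\{\indiv'\}$, while the remaining rules merely propagate labels or strip off outer connectives, so all non-singleton-range existentials stay within the $\sub$-closure of concepts already in the branch. Combined with the standard one-per-premise restriction on the \eqref{rule: exists}-rule, this bounds the total number of \eqref{rule: exists}-applications targeting a fixed individual by $|\sub(C)|$, which is finite.

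The delicate last step, and the one I expect to be the main obstacle, is to argue that after $N$ no individual in $\|\indiv\|$ other than the $<$-minimum $\indiv_\text{min}$ is ever the subject of an \eqref{rule: exists}-application. For any $\indiv_j\in\|\indiv\|$ with $\indiv_\text{min}<\indiv_j$ the definition of $\simB$ together with the equivalence-closure rules \eqref{rule: sym}, \eqref{rule: mon}, \eqref{rule: id} yields $\indiv_\text{min}:\{\indiv_j\}\in\branch{B}$. Moreover, by~(c\ref{req: ub before exists}) the necessary \eqref{rule: unrestricted blocking} is applied between the creation of $\indiv_j$ and the next \eqref{rule: exists}-application: openness of $\branch{B}$ combined with $\indiv_\text{min}\simB\indiv_j$ forces the equality branch there (otherwise $\indiv_\text{min}:\Not\{\indiv_j\}$ would sit in $\branch{B}$ alongside $\indiv_\text{min}:\{\indiv_j\}$ and close it). Condition~(c\ref{req: blocking}) then bars any \eqref{rule: exists}-application on labels of $\indiv_j$ afterwards. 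Hence the only post-$N$ contribution to $\napplE(\|\indiv\|)$ comes from $\indiv_\text{min}$, which is bounded by $|\sub(C)|$, and adding the finite prefix count gives the claimed finiteness. What requires care is the timing: one has to coordinate the \eqref{rule: unrestricted blocking}-applications demanded by~(c\ref{req: ub before exists}) with the subsequent propagation of equalities through \eqref{rule: sym} and \eqref{rule: mon} so that (c\ref{req: blocking}) takes effect on $\indiv_j$ before any further \eqref{rule: exists}-application on its labels can fire.
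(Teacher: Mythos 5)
Your proposal is correct and takes essentially the same route as the paper's proof: the paper argues by contradiction that infinitely many \eqref{rule: exists}-applications within one class $\|\indiv\|$ would require infinitely many class members each receiving such an application for one of the finitely many non-singleton existentials in $\sub(C)$, which conditions~(c\ref{req: ub before exists}) and~(c\ref{req: blocking}) rule out from the (c\ref{req: ub before exists})-node onwards --- exactly the mechanism you use, merely packaged as a direct bound (split at $N$, only the $<$-minimal representative active afterwards) instead of a contradiction. Your explicit openness argument forcing the equality branch at \eqref{rule: unrestricted blocking} choice points, and the timing caveat you flag, simply spell out what the paper states tersely as ``it is detected that $\indiv_i\in\|\indiv\|$ before any next application of the \eqref{rule: exists}~rule'', so there is no substantive divergence.
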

\begin{proof}
Suppose not, that is, $\napplE(\|\indiv\|)$ is infinite.
The number of concepts of the form $\exists R.D$
where $D$ is not a singleton under labels in the branch is finitely bounded.
Because the \eqref{rule: exists}~rule
is not allowed to be applied to
role assertions (that is, concepts of the form $\indiv:\exists R.\{\indiv'\}$)
there is a sequence of individuals
$\indiv_0,\indiv_1,\ldots$ such that every
$\indiv_i\in\|\indiv\|$ and
the \eqref{rule: exists}~rule has been applied
to concepts $\indiv_0:\exists R.D, \indiv_1:\exists R.D,\ldots$
for some $\exists R.D\preceq C$.
However, such a situation is impossible because of
conditions~(c\ref{req: ub before exists}) and~(c\ref{req: blocking}).
Indeed, without loss of generality we can assume that $\indiv<\indiv_0<\indiv_1<\cdots$.
Then, 
by Condition~(c\ref{req: ub before exists}),
starting from some node of $\branch{B}$,
as soon as $\indiv_i$ appears in $\branch{B}$,
it is detected that $\indiv_i\in\|\indiv\|$
before any next
application of the \eqref{rule: exists}~rule
and, hence, $\indiv_i$ is immediately blocked for any application
of the rule~\eqref{rule: exists}, due to Condition~(c\ref{req: blocking}).
\end{proof}

Let $\napplE(\branch{B})$ denote the number of applications
of the \eqref{rule: exists}~rule
in a branch $\branch{B}$.
Since the \eqref{rule: exists}~rule is the
only rule that generates new individuals, 
$\napplE(\branch{B})$ is in fact the number of generated
individuals in the branch $\branch{B}$
and the number of all individuals occurring in~$\branch{B}$ 
is $k+\napplE(\branch{B})$.
For the the number of all individuals occurring in~$\branch{B}$
we introduce the notation 
$i(\branch{B})$, that is, $i(\branch{B})\define k+\napplE(\branch{B})$.

Let $M(\branch{B})$ be the least cardinal 
which is the upper bound for all $\napplE(\|\indiv\|)$ 
where $\|\indiv\|$ ranges all the equivalence classes from $\Delta^\IB$.
That is:
\[M(\branch{B})\define\sup\{\napplE(\|\indiv\|)\mid\|\indiv\|\in\Delta^\IB\}.\]
The following lemma which establishes the upper bound for $\napplE(\branch{B})$ is easy to prove.

\begin{lemma}\label{lemma: finite number of nominals}
$\napplE(\branch{B})\leq M(\branch{B})\cdot\Card(\Delta^\IB)$.
\end{lemma}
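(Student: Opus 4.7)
The plan is to observe that the quantity $\napplE(\branch{B})$ admits a natural decomposition as a sum indexed by the equivalence classes of $\simB$, and then to bound each term uniformly by $M(\branch{B})$.

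First I would establish the partitioning. Every application of the \eqref{rule: exists}~rule in $\branch{B}$ has a uniquely determined premise of the form $\indiv':\exists R.D$ (with $D$ not a singleton), so each application is associated with exactly one individual~$\indiv'$, and hence with exactly one equivalence class $\|\indiv'\|\in\Delta^\IB$. I would note here that $\|\indiv'\|$ really does belong to $\Delta^\IB$: whenever $\indiv'$ appears as a label in $\branch{B}$, the \eqref{rule: id}~rule forces $\indiv':\{\indiv'\}\in\branch{B}$, so $\|\indiv'\|\in\Delta^\IB$ by definition of $\Delta^\IB$. Conversely, by the definition of $\napplE(\|\indiv\|)$, every application whose premise is labelled by some $\indiv'\in\|\indiv\|$ is counted once in $\napplE(\|\indiv\|)$.

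Putting these observations together yields the decomposition
\[
\napplE(\branch{B}) \;=\; \sum_{\|\indiv\|\in\Delta^\IB}\napplE(\|\indiv\|).
\]
By the definition of $M(\branch{B})$ as the supremum of the values $\napplE(\|\indiv\|)$ over $\|\indiv\|\in\Delta^\IB$, each summand is at most $M(\branch{B})$, whence
\[
\napplE(\branch{B}) \;\leq\; M(\branch{B})\cdot\Card(\Delta^\IB),
\]
which is the required bound.

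The only subtlety, and what I expect to be the main thing to check carefully, is the bookkeeping needed to justify that the sum really is finite and well-defined even if $\branch{B}$ is a priori infinite: by Lemma~\ref{lemma: napplE(ell)} each $\napplE(\|\indiv\|)$ is finite, and if either side of the inequality is infinite then so is $\Card(\Delta^\IB)$ or $M(\branch{B})$, making the inequality trivially true. Thus the argument works uniformly regardless of whether $\branch{B}$ is finite or infinite, which is exactly the form needed for the termination arguments to follow.
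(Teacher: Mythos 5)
Your proof is correct and follows exactly the argument the paper has in mind: the paper states this lemma without proof (``easy to prove''), and the intended justification is precisely your partition of the applications of the \eqref{rule: exists}~rule by the $\simB$-equivalence class of the labelling individual, with each class contributing at most $M(\branch{B})$ and the number of classes being $\Card(\Delta^\IB)$. Your extra care in checking that every such label indeed yields a class in $\Delta^\IB$ (via the \eqref{rule: id}~rule in the fully expanded branch) and in handling the infinite case is sound and only makes the argument more explicit than the paper's.
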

Furthermore, the following lemma holds.
\begin{lemma}\label{lemma: concept number in branch}
The number of concepts in $\branch{B}$ does not exceed $n\cdot i(\branch{B})+2n\cdot i(\branch{B})^2$.
In particular, if $\napplE(\branch{B})$ is finite then $\branch{B}$ is finite.
\end{lemma}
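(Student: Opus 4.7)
The plan is to bound the number of distinct labelled concept expressions $\indiv':D$ occurring in $\branch{B}$ by enumerating the possible shapes of the concept part $D$, and then counting. The first and main step is to establish, by induction on the construction of $\branch{B}$ and case analysis of the rules in Figure~\ref{table: T} plus the \eqref{rule: unrestricted blocking}~rule, the following structural claim: every concept $D$ appearing in a labelled expression $\indiv':D\in\branch{B}$ is either
\begin{enumerate}[(a)]
\item a $\prec$-subexpression of $C$, that is, $D\in\sub(C)$, or
\item of one of the four shapes $\{\indiv''\}$, $\Not\{\indiv''\}$, $\exists R.\{\indiv''\}$, $\Not\exists R.\{\indiv''\}$, where $R$ is a role occurring in $\sub(C)$ and $\indiv''$ is some individual appearing in $\branch{B}$.
\end{enumerate}

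The inductive argument inspects each rule. The decomposition rules such as \eqref{rule: not not}, \eqref{rule: not or}, \eqref{rule: or}, \eqref{rule: not exists}, \eqref{rule: exists or}, \eqref{rule: not exists or}, \eqref{rule: exists Id} and \eqref{rule: not exists Id} produce conclusions whose concept parts are strictly $\prec$-below the relevant premise, hence remain in $\sub(C)$. The rules \eqref{rule: id}, \eqref{rule: sym}, \eqref{rule: nsym} and \eqref{rule: unrestricted blocking} produce conclusions of shape (b) with the trivial role $\idRole$ (or directly with $\{\indiv''\}$, $\Not\{\indiv''\}$). The rule \eqref{rule: exists} introduces a fresh individual $\indiv''$ and produces $\indiv:\exists R.\{\indiv''\}$ (shape (b)) together with $\indiv'':D_0$ where $D_0\in\sub(C)$. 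The rules \eqref{rule: exists inverse}, \eqref{rule: exists not} and \eqref{rule: not exists not} likewise yield conclusions fitting (b), and in each such case one verifies that the role $R$ is a sub-expression of the role part of the premise, hence of $C$. This rule-by-rule verification, and in particular keeping track of the fact that roles appearing in shape-(b) concepts are always in $\sub(C)$, is the main (though routine) technical obstacle.

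Once the classification is in place, the counting is elementary. The number of $\prec$-subexpressions of $C$ is at most $n$, and likewise $\sub(C)$ contains at most $n$ roles. For each individual $\indiv''$ in $\branch{B}$, the number of distinct concepts of shape (b) involving $\{\indiv''\}$ is therefore at most $2n$ (absorbing the two degenerate cases $\{\indiv''\},\Not\{\indiv''\}$ into the count using the role $\idRole$). With $i(\branch{B})$ individuals in the branch, the total number of distinct concepts that can appear as labels is at most $n+2n\cdot i(\branch{B})$. Each such concept may be paired with any of the $i(\branch{B})$ individuals as a label, giving the bound
\[i(\branch{B})\cdot\bigl(n+2n\cdot i(\branch{B})\bigr) \;=\; n\cdot i(\branch{B})+2n\cdot i(\branch{B})^2,\]
as required.

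For the final claim, suppose $\napplE(\branch{B})$ is finite. Then $i(\branch{B})=k+\napplE(\branch{B})$ is finite, so by the bound just obtained the set of labelled concept expressions in $\branch{B}$ is finite. Since by assumption no rule is applied twice to the same tuple of premises, and the premises of every rule application lie in this finite set, the total number of rule applications along $\branch{B}$ is finite, and hence $\branch{B}$ itself has finitely many nodes.
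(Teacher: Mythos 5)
Your proof is correct and takes essentially the same route as the paper's: classify every labelled concept occurring in $\branch{B}$ as either $\indiv:D'$ with $D'\in\sub(C)$ or one of the four singleton-involving shapes $\{\indiv'\}$, $\Not\{\indiv'\}$, $\exists R.\{\indiv'\}$, $\Not\exists R.\{\indiv'\}$ with $R$ a role in $\sub(C)$, and then count over labels and individuals, exactly as in the paper (which merely asserts the classification you verify rule by rule). The only looseness is the ``absorption'' step claiming $2n$ shape-(b) concepts per individual from ``at most $n$'' roles; the paper instead observes that $\sub(C)$ contains strictly fewer than $n$ roles, which gives $2(n-1)+2=2n$ directly and is the cleaner way to justify that constant.
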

\begin{proof}
Let $D\in\branch{B}$. Then $D$
has one of the following forms:
\begin{align*}
D&=\indiv:D',\quad \text{where $D'$ is a concept in $\sub(C)$,}\\
D&=\indiv:\{\indiv'\},\\
D&=\indiv:\Not\{\indiv'\},\\
D&=\indiv:\exists R.\{\indiv'\},\quad \text{where $R$ is a role in $\sub(C)$,}\\
D&=\indiv:\Not\exists R.\{\indiv'\},\quad \text{where $R$ is a role in $\sub(C)$.}
\end{align*}
The number of individuals and concepts in $\sub(C)$ is bounded by $n$
and the number of roles in $\sub(C)$ is strictly less than $n$.
Hence, an upper bound for the number of concepts in a branch~$\branch{B}$
can be given as:
$n\cdot i(\branch{B}) + i(\branch{B}) ^2+
 i(\branch{B}) ^2+(n-1)\cdot i(\branch{B}) ^2+(n-1)\cdot i(\branch{B}) ^2
= n\cdot i(\branch{B}) +2n\cdot i(\branch{B}) ^2.$
\end{proof}
Clearly if $\Delta^\IB$ is finite then 
$M(\branch{B})=\max\{\napplE(\|\indiv\|)\mid\|\indiv\|\in\Delta^\IB\}$.
Therefore, as a consequence of Lemmas~\ref{lemma: finite number of nominals} and~\ref{lemma: concept number in branch},
we obtain the following statement.
\begin{corollary}\label{corollary: branch finiteness criterion}
Let $\branch{B}$ be an open branch in a $\TALBOidub$ tableau for a concept $C$.
Then the branch~$\branch{B}$ is finite iff $\Delta^\IB$ is finite.
\end{corollary}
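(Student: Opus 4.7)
The plan is to prove the two directions separately, with the reverse direction being a direct chaining of the three preceding lemmas and the forward direction being essentially immediate from the construction of $\IB$.

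For the forward direction, suppose $\branch{B}$ is finite. Since $\Delta^\IB$ is defined as $\{\|\indiv\|\mid\indiv:\{\indiv\}\in\branch{B}\}$, every representative $\indiv$ of an equivalence class in $\Delta^\IB$ must occur in some concept of the form $\indiv:\{\indiv\}$ in $\branch{B}$. A finite branch contains only finitely many concept expressions, so only finitely many individuals appear in it as labels, and therefore $\Delta^\IB$ is finite. This direction should only take one or two sentences.

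For the reverse direction, suppose $\Delta^\IB$ is finite. By Lemma~\ref{lemma: napplE(ell)}, $\napplE(\|\indiv\|)$ is finite for every $\|\indiv\|\in\Delta^\IB$. Because $\Delta^\IB$ is finite, the supremum in the definition of $M(\branch{B})$ is attained, so $M(\branch{B}) = \max\{\napplE(\|\indiv\|)\mid\|\indiv\|\in\Delta^\IB\}$ is finite, as noted in the sentence immediately preceding the statement. Then Lemma~\ref{lemma: finite number of nominals} gives $\napplE(\branch{B})\leq M(\branch{B})\cdot\Card(\Delta^\IB)<\infty$, so $i(\branch{B})=k+\napplE(\branch{B})$ is finite. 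Finally, Lemma~\ref{lemma: concept number in branch} bounds the number of concepts in $\branch{B}$ by $n\cdot i(\branch{B})+2n\cdot i(\branch{B})^2$, which is finite, so $\branch{B}$ itself is finite.

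There is no real obstacle here: the corollary is essentially a bookkeeping consequence of the three preceding lemmas. The only subtle point to state carefully is the use of the observation that, on a finite index set, the supremum defining $M(\branch{B})$ becomes a maximum (and so is finite), which is exactly what bridges the finiteness of $\Delta^\IB$ to the bound in Lemma~\ref{lemma: finite number of nominals}.
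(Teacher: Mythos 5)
Your proposal is correct and follows essentially the same route as the paper: the paper obtains the corollary as a direct consequence of Lemmas~\ref{lemma: finite number of nominals} and~\ref{lemma: concept number in branch}, after noting (exactly as you do) that finiteness of $\Delta^\IB$ turns the supremum defining $M(\branch{B})$ into a finite maximum via Lemma~\ref{lemma: napplE(ell)}. The forward direction is, as you say, immediate from the definition of $\Delta^\IB$ and is left implicit in the paper.
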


\begin{lemma}\label{lemma: some magic}
Assume that the input concept $C$ is satisfiable in a model $\I$.
Then there is an open branch $\branch{B}$ in $\TALBOidub(C)$ such that
$\Card(\Delta^\IB)\leq\Card(\Delta^\I)$.
\end{lemma}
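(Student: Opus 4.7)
The plan is to construct an open branch of $\TALBOidub(C)$ by letting the given model $\I$ guide every non-deterministic choice, while simultaneously maintaining a partial map $\pi$ from the individuals appearing along the branch into $\Delta^\I$. Pick $a_0 \in C^\I$, extend $\I$ so that $\indiv_0^\I := a_0$ where $\indiv_0$ is the fresh root individual, and set $\pi(\indiv_0) := a_0$; set $\pi(\indiv) := \indiv^\I$ for each nominal occurring in $C$. The invariants to keep are: (I1) $\indiv:D\in\branch{B}$ implies $\pi(\indiv)\in D^\I$; (I2) $\indiv:\exists R.\{\indiv'\}\in\branch{B}$ implies $(\pi(\indiv),\pi(\indiv'))\in R^\I$; (I3) $\pi(\indiv)=\pi(\indiv')$ implies $\indiv\simB\indiv'$.

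The non-deterministic choices are resolved as follows. For any disjunctive rule (\eqref{rule: or}, \eqref{rule: exists or}, \eqref{rule: not exists not}) pick a conclusion whose instance holds in $\I$ under $\pi$; (I1) and (I2) together with the semantics of the decomposed connective guarantee that one exists. For \eqref{rule: unrestricted blocking}~on a pair $\indiv,\indiv'$, take the left branch when $\pi(\indiv)=\pi(\indiv')$ and the right branch otherwise. For \eqref{rule: exists}~fired on $\indiv:\exists R.D$, invariant (I1) supplies a witness $b\in\Delta^\I$ with $(\pi(\indiv),b)\in R^\I$ and $b\in D^\I$; set $\pi(\indiv'):=b$ and (by convention) $\indiv'^\I:=b$ for the freshly generated $\indiv'$. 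Finally, choose a fair schedule in which, before any firing of \eqref{rule: exists}, every applicable instance of \eqref{rule: unrestricted blocking}~on the individuals currently present has been performed. This automatically satisfies condition~(c\ref{req: ub before exists}), and (c\ref{req: blocking}) is respected by simply not firing \eqref{rule: exists}~on blocked expressions.

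A routine case analysis on the rules of $T_\ALBOid$ shows that (I1)-(I3) are preserved. The branch stays open because a clash $\indiv:D, \indiv:\Not D\in\branch{B}$ would force, via (I1), $\pi(\indiv)\in D^\I\cap(\Delta^\I\setminus D^\I)=\emptyset$. Define $\varphi\colon\Delta^\IB\to\Delta^\I$ by $\varphi(\|\indiv\|):=\pi(\indiv)$. By the $\indiv:\{\indiv'\}$-case of (I1), the map is well-defined on equivalence classes, and by (I3) it is injective: if $\pi(\indiv)=\pi(\indiv')$ then $\indiv\simB\indiv'$, whence $\|\indiv\|=\|\indiv'\|$. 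Hence $\Card(\Delta^\IB)\leq\Card(\Delta^\I)$.

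The main obstacle is the backward direction of (I3): for it to hold one needs the strategy to genuinely enforce \eqref{rule: unrestricted blocking}~between every pair of individuals with equal $\pi$-images, so both the fairness of the schedule and the priority given to \eqref{rule: unrestricted blocking}~over \eqref{rule: exists}~are essential. The \eqref{rule: not exists not}~rule also needs care because it instantiates its second premise over every individual currently on the branch, so the argument that the guided choice of conclusion preserves (I1)-(I2) must explicitly exploit the universal reading of that premise; the role-inverse preprocessing and the \eqref{rule: mon}~rule handle the remaining propagation of equalities through the construction.
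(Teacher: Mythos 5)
Your proposal is correct and takes essentially the same route as the paper's proof: there, too, every non-deterministic choice is resolved by the satisfying model (which is extended to a model $\J$ interpreting the fresh individuals, playing exactly the role of your $\pi$ together with your convention $\indiv'^\I:=b$), openness follows because all labelled concepts of the branch hold in the extended model, and the cardinality bound is obtained via a surjection from $\Delta^\I$ onto $\Delta^\IB$, which is just the other direction of your injection $\varphi$. The issue you flag about the backward direction of (I3) is handled in the paper by the same observation you make: since the branch is fully expanded and Condition~(c\ref{req: ub before exists}) is part of the calculus, the \eqref{rule: unrestricted blocking}~rule is eventually applied to every pair of individuals on the branch, so equal images force $\indiv:\{\indiv'\}$ to appear, and no separate choice of schedule is actually needed.
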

\begin{proof}
By induction on application of the rules in $\TALBOidub(C)$
we choose an appropriate branch~$\branch{B}$
and construct a model $\J$ which 
differs from $\I$
only in the interpretations of individuals occurring in~$\branch{B}$
but not occurring in $C$,
and so that
$\J\models\indiv:D$ for every concept $\indiv:D$ from $\branch{B}$.

More precisely, 
let $\Delta^\J\define\Delta^\I$ and
$\cname^\J\define \cname^\I$ and $\rname^\J\define \rname^\I$
for every concept symbol $\cname$ and role symbol~$\rname$.
Additionally $\indiv^\J\define\indiv^\I$ for every individual $\indiv$
which occurs in $C$.
In $\TALBOidub(C)$,
we find a branch $\branch{B}$ 
as a sequence of nodes $\seg{S}_0,\ldots,\seg{S}_n,\ldots$
and, for each~$n$,  we
extend interpretation $\J$ to individuals in $\seg{S}_n$
such that $\J\models\indiv:D$ for every concept $\indiv:D$ from $\seg{S}_n$.
It will follow immediately by the induction argument that
$\J\models\indiv:D$ for every concept $\indiv:D$ from $\branch{B}$.
\begin{itemize}
 \item For the base case of the induction argument
        we choose the root node of $\TALBOidub(C)$ as the initial node $\seg{S}_0$ of the branch $\branch{B}$.
	That is, $\seg{S}_0 = \{\indiv_0:C\}$ where $\indiv_0$ is a new individual.
        Since $C$ is satisfiable in $\I$, there is $w$ in $C^\J$.
        We let $\indiv_0^\J\define w$.
 \item The induction step involves considering cases for each of the rules in the $\TALBOidub$ calculus.
       We consider only two rules, namely \eqref{rule: exists} and~\eqref{rule: unrestricted blocking} rules.
       \begin{description}
        \item[\eqref{rule: exists}] Suppose $\indiv:\exists R.D$ is in the node $\seg{S}_n$
              and $\indiv:\exists R.\{\indiv'\}$ and $\indiv':D$ are the conclusions of the rule. 
              We set $\seg{S}_{n+1}\define\seg{S}_n\cup\{\indiv:\exists R.\{\indiv'\}, \indiv':D\}$.
              By the induction hypothesis we have that $\J\models\indiv:\exists R.D$.
              That is, there is an element $w$ in $\Delta^\J$ such that 
              $(\indiv^\J,w)\in R^\J$ and $w\in D^\J$.
              We set ${\indiv'}^\J\define w$.
        \item[\eqref{rule: unrestricted blocking}]
              Suppose $\indiv:\{\indiv\}$ and $\indiv':\{\indiv'\}$ are in the node $\seg{S}_n$.
              By the induction hypothesis $\indiv^{\J}$ and ${\indiv'}^{\J}$ are both defined.
              We have two cases: either $\indiv^{\J}={\indiv'}^{\J}$ or $\indiv^\J\neq {\indiv'}^{\J}$.
              We let $\seg{S}_{n+1}\define\seg{S}_n\cup\{\indiv:\{\indiv'\}\}$ in the first case and
                $\seg{S}_{n+1}\define\seg{S}_n\cup\{\indiv:\Not\{\indiv'\}\}$ in the second case.
       \end{description}
\end{itemize}
Thus, we constructed an appropriate branch $\branch{B}$.
In order to complete the construction of~$\J$ we set 
$\indiv^\J\define\indiv^\I$ for every individual $\indiv$
which does not occur in the constructed branch~$\branch{B}$.
Clearly, the model $\J$ satisfies the required properties.
$\branch{B}$ is open since  the set of all concepts from $\branch{B}$ is satisfiable in the model $\J$.
Now let $f$ be a function mapping $\Delta^{\J}$ ($=\Delta^\I$) to $\Delta^\IB$
and be defined by:
\[
f(w)\define%
 \begin{cases}
  \ecl{\indiv}, &\text{if $w=\indiv^{\J}$ for some individual $\indiv$ in $\branch{B}$},\\
  \text{arbitrary in $\Delta^\IB$}, & \text{otherwise}.
 \end{cases}
\]
By construction of $\J$
if $w=\indiv^{\J}={\indiv'}^{\J}$ for two different individuals $\indiv$ and $\indiv'$ 
from the branch $\branch{B}$
then $\indiv:\{\indiv'\}\in\branch{B}$ and, hence,
$\ecl{\indiv}=\ecl{\indiv'}$.
The function $f$ is thus defined correctly.
Furthermore, %
$f$ is onto~$\Delta^\IB$ because 
$\Delta^\IB=\{\ecl{\indiv}\mid \indiv:\{\indiv'\}\in\branch{B}\}$.
\end{proof}

Recall the model bounding function $\mu(n)$ obtained 
for $\ALBOid$ in Theorem~\ref{theorem: EFMP}: $\mu(n)=3(n\lfloor\log(n+1)\rfloor)\cdot 2^n$.
Combining Lemma~\ref{lemma: some magic} with Theorem~\ref{theorem: EFMP} %
we obtain the following theorem.
\begin{theorem}\label{theorem: finite branch existence}
Let $C$ be a satisfiable concept and $n$ be the length of $C$. %
Then
in every $\TALBOidub$ tableau for $C$,
there exists an open branch $\branch{B}$
such that $\Card(\Delta^\IB)\leq\mu(n)$.
\end{theorem}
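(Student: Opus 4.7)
The plan is to combine the Effective Finite Model Property of $\ALBOid$ with Lemma~\ref{lemma: some magic} in the most direct way possible; no further machinery is required.

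First I would invoke Theorem~\ref{theorem: EFMP}. Since $C$ is satisfiable and has length~$n$, there exists a (finite) $\ALBOid$-model $\I$ such that $\I\models\indiv_0:C$ (for the root individual $\indiv_0$) and $\Card(\Delta^\I)\leq\mu(n)$, where $\mu(n)=3(n\lfloor\log(n+1)\rfloor)\cdot 2^n$ is the model bounding function given by \citeN{GraedelKolaitisVardi97} and carried over to $\ALBOid$ via the standard translation $\ST$ and Theorem~\ref{theorem: embedding into FO2}.

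Next I would apply Lemma~\ref{lemma: some magic} to this particular model~$\I$. The lemma guarantees that, in any $\TALBOidub$ tableau for~$C$, one can pick an open branch $\branch{B}$ and a function $f\colon\Delta^\I\to\Delta^\IB$ that is onto. Being a surjection, $f$ gives $\Card(\Delta^\IB)\leq\Card(\Delta^\I)$.

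Chaining the two inequalities, I obtain
\[
\Card(\Delta^\IB)\ \leq\ \Card(\Delta^\I)\ \leq\ \mu(n),
\]
which is the desired bound. No real obstacle arises: the two ingredients are already established, and the only thing to check is that the branch~$\branch{B}$ produced by Lemma~\ref{lemma: some magic} is genuinely a branch of the \emph{given} tableau $\TALBOidub(C)$ (and not of some specially chosen derivation). But this is exactly the content of that lemma, whose inductive construction proceeds stepwise through the actual derivation and chooses, at each branching point of the \eqref{rule: or}, \eqref{rule: not exists not}, and \eqref{rule: unrestricted blocking} rules, the disjunct validated by the reference model $\I$. Hence the theorem follows immediately.
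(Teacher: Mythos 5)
Your proposal is correct and is essentially the paper's own argument: the paper proves this theorem exactly by combining Theorem~\ref{theorem: EFMP} (which yields a model of $C$ of size at most $\mu(n)$) with Lemma~\ref{lemma: some magic} (which, for any given $\TALBOidub$ tableau, extracts an open branch $\branch{B}$ with $\Card(\Delta^\IB)\leq\Card(\Delta^\I)$ via the surjection~$f$). Your closing observation that the lemma's inductive construction walks through the given tableau, choosing at each branching point the conclusion validated by the reference model, is precisely why the bound holds in every tableau for~$C$.
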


\begin{theorem}[Termination]
 \TALBOidub is a terminating tableau calculus for satisfiability in \ALBOid.
\end{theorem}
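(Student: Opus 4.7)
The plan is to establish each of the two clauses of the termination definition by assembling the preceding results; no further combinatorial work is needed beyond what has already been done. Fix a concept $C$ of length $n$ and let $\TALBOidub(C)$ denote an arbitrary tableau for $C$ produced by the calculus.

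I would first handle the open case. Suppose $\TALBOidub(C)$ contains an open branch. By the reflection lemma (Lemma~\ref{lemma: reflection}) the concept $C$ is satisfiable in some $\ALBOid$-model, so Theorem~\ref{theorem: finite branch existence} applies and supplies an open branch $\branch{B}$ of $\TALBOidub(C)$ with $\Card(\Delta^\IB)\le\mu(n)$. Since $\mu(n)$ is a finite number, Corollary~\ref{corollary: branch finiteness criterion} immediately yields that $\branch{B}$ itself is finite, which is exactly the finite open branch required by the second clause of the termination definition. For the closed case I would argue by König's lemma: if $\TALBOidub(C)$ is closed then every branch ends in an application of the \eqref{rule: clash}~rule, after which no further rule is applied, so every branch has finite length; since each rule of $\TALBOidub$ has at most two conclusions the derivation tree is finitely branching, and König's lemma forces the whole tableau to be finite.

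The genuine obstacle does not lie in this assembly but in the open case, and has already been absorbed into Theorem~\ref{theorem: finite branch existence}. It rests on Lemma~\ref{lemma: napplE(ell)}, which bounds the number of \eqref{rule: exists} applications per equivalence class and is precisely where the blocking conditions~(c\ref{req: blocking}) and~(c\ref{req: ub before exists}) do their real work; on Lemma~\ref{lemma: finite number of nominals}, which converts this into a bound on the total number of individuals in the branch; on Lemma~\ref{lemma: concept number in branch}, which bounds the number of concepts on the branch in terms of that count; and on Lemma~\ref{lemma: some magic} together with the effective finite model property (Theorem~\ref{theorem: EFMP}), which bounds $\Card(\Delta^\IB)$ by $\mu(n)$. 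Once these are in hand the termination statement becomes a two-line corollary.
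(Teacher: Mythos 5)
Your proposal is correct and follows essentially the same route as the paper's proof: for a closed tableau, closed branches are finite (you merely make the implicit finite-branching/K\"onig argument explicit), and for an open tableau, completeness (the reflection lemma behind Theorem~\ref{theorem: completeness}) gives satisfiability, after which Theorem~\ref{theorem: finite branch existence} and Corollary~\ref{corollary: branch finiteness criterion} yield a finite open branch. Your attribution of the real work to Lemmas~\ref{lemma: napplE(ell)}--\ref{lemma: some magic} and Theorem~\ref{theorem: EFMP} matches the paper's structure exactly.
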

\begin{proof}
In any $\TALBOidub$-tableau, every closed branch is
finite.
Suppose there is an open branch in the tableau.
By the Completeness Theorem~\ref{theorem: completeness} this means that the input concept is satisfiable.
By Theorem~\ref{theorem: finite branch existence} and Corollary~\ref{corollary: branch finiteness criterion}
there is a finite open branch in the tableau.
\end{proof}

Notice that
Condition~(c\ref{req: ub before exists}) is essential for ensuring termination
of a \TALBOidub derivation.
Figure~\ref{figure: req: ub before exists}
shows that 
without~(c\ref{req: ub before exists}) the 
\TALBOidub tableau for the concept 
\[\Not\exists(Q'\Or\Not Q').\Not\exists \rname.\cname\]
would not terminate because
new individuals are generated more often than the equality conjectures
made via the~\eqref{rule: unrestricted blocking}
rule.

\begin{figure}[tbu]
 \begin{center}
  \begin{minipage}{.75\textwidth}
  \begin{enumerate}[1.]
   \item\label{ex: c3: 0}
        \titem{\indiv_0:\Not\exists(Q'\Or\Not Q').\Not\exists \rname.\cname}{given}
   \item\label{ex: c3: 0a}
        \titem{\indiv_0:\{\indiv_0\}}{\eqref{rule: id},\ref{ex: c3: 0}}
   \item\label{ex: c3: 1}
        \titem{\indiv_0:\Not\exists Q'.\Not\exists \rname.\cname}{\eqref{rule: not exists or},\ref{ex: c3: 0}}
   \item\label{ex: c3: 2}
        \titem{\indiv_0:\Not\exists\Not Q'.\Not\exists \rname.\cname}{\eqref{rule: not exists or},\ref{ex: c3: 0}}
   \item\label{ex: c3: 3}
        \titem{\tbranch\indiv_0:\exists Q'.\{\indiv_0\}}{\eqref{rule: not exists not},\ref{ex: c3: 2},\ref{ex: c3: 0a}}
   \item\label{ex: c3: 3.1}
        \titem{\tskip\indiv_0:\Not\Not\exists \rname.\cname}{\eqref{rule: not exists},\ref{ex: c3: 1},\ref{ex: c3: 3}}
   \item\label{ex: c3: 3.2}
        \titem{\tskip\indiv_0:\exists \rname.\cname}{\eqref{rule: not not},\ref{ex: c3: 3.1}}
   \item\label{ex: c3: 3.3}
        \titem{\tskip\indiv_0:\exists \rname.\{\indiv_1\}}{\eqref{rule: exists},\ref{ex: c3: 3.2}}
   \item\label{ex: c3: 3.4}
        \titem{\tskip\indiv_1:\cname}{\eqref{rule: exists},\ref{ex: c3: 3.2}}
   \item\label{ex: c3: 3.5}
        \titem{\tskip\indiv_1:\{\indiv_1\}}{\eqref{rule: id},\ref{ex: c3: 3.4}}
   \item\label{ex: c3: 3.6}
        \titem{\tskip\tbranch\indiv_0:\exists Q'.\{\indiv_1\}}{\eqref{rule: not exists not},\ref{ex: c3: 2},\ref{ex: c3: 3.5}}
   \item\label{ex: c3: 3.6.1}
        \titem{\tskip[2]\indiv_1:\Not\Not\exists \rname.\cname}{\eqref{rule: not exists},\ref{ex: c3: 1},\ref{ex: c3: 3.6}}
   \item\label{ex: c3: 3.6.2}
        \titem{\tskip[2]\indiv_1:\exists \rname.\cname}{\eqref{rule: not not},\ref{ex: c3: 3.6.1}}
   \item\label{ex: c3: 3.6.3}
        \titem{\tskip[2]\indiv_1:\exists \rname.\{\indiv_2\}}{\eqref{rule: exists},\ref{ex: c3: 3.6.2}}
   \item\label{ex: c3: 3.6.4}
        \titem{\tskip[2]\indiv_2:\cname}{\eqref{rule: exists},\ref{ex: c3: 3.6.2}}
   \item\label{ex: c3: 3.6.5}
        \titem{\tskip[2]\indiv_2:\{\indiv_2\}}{\eqref{rule: id},\ref{ex: c3: 3.4}}
   \item\label{ex: c3: 3.6.6}   
        \titem{\tskip[2]\tbranch\indiv_0:\{\indiv_1\}}{\eqref{rule: unrestricted blocking},\ref{ex: c3: 0a},\ref{ex: c3: 3.5}}
   \item[\ldots]\titem{\tskip[3]\textbf{Non-terminating}}{Similarly to~\ref{ex: c3: 3.6}--\ref{ex: c3: 3.6.6}} 
   \item[]\titem{\tskip[3]{}}{with $\indiv_2$ in place of $\indiv_1$} 
   \item\label{ex: c3: 3.6.7}   
        \titem{\tskip[2]\tbranch\indiv_0:\Not\{\indiv_1\}}{\eqref{rule: unrestricted blocking},\ref{ex: c3: 0a},\ref{ex: c3: 3.5}}
   \item[\ldots]\titem{\tskip[3]{}}{} 
   \item\label{ex: c3: 3.7}
        \titem{\tskip\tbranch\indiv_1:\Not\Not\exists \rname.\cname}{\eqref{rule: not exists not},\ref{ex: c3: 2},\ref{ex: c3: 3.5}} 
   \item[\ldots]\titem{\tskip[2]{}}{} 
   \item\label{ex: c3: 4}
        \titem{\tbranch\indiv_0:\Not\Not\exists \rname.\cname}{\eqref{rule: not exists not},\ref{ex: c3: 2},\ref{ex: c3: 0a}} 
   \item[\ldots]\titem{\tskip{}}{}
  \end{enumerate}
  \end{minipage}
 \end{center}
\caption{Effect of omitting Condition~(c\ref{req: ub before exists})}\label{figure: req: ub before exists}
\end{figure}
In addition, omitting Condition~(c\ref{req: blocking}) %
causes any tableau for the concept $\indiv:\exists \rname.\cname$ not to terminate.
Thus, the conditions~(c\ref{req: blocking}) and~(c\ref{req: ub before exists}) are both necessary
conditions for termination of the calculus \TALBOidub. 

\section{A Complexity-Optimal Tableau Decision Procedure}
\label{section: Complexity}

Our estimations, in this section, for the complexity of tableau procedures based on the $\TALBOidub$ calculus
rely on the following two observations.
First, from Theorem~\ref{theorem: finite branch existence} 
it follows that, in a decision procedure based on $\TALBOidub$,
we can ignore any branch~$\branch{B}$ where the cardinality of $\Delta^\IB$ is larger than $\mu(n)$.
Any such branch is either closed or there is an open branch
with a smaller number of classes of equal individuals.

Second, Lemmas~\ref{lemma: concept number in branch} and~\ref{lemma: finite number of nominals} provide
a clue for estimating the number of derivation steps (that is, the number of rule applications)
in a shortest %
open branch $\branch{B}$.
Since the maximal number of premises of a rule is two,
and none of the rules are applied to the same set of premises more than once,
if $N$ is the number of concepts in $\branch{B}$, then the number of derivation steps in $\branch{B}$ is less than or equal to $N^2$.
Thus, the following lemma holds.
\begin{lemma}\label{lemma: number of derivation steps}
Let $\branch{B}$ be a finite open branch
in $\TALBOidub(C)$.
Then the number of derivation steps in $\branch{B}$ does not exceed
\begin{align}
\tag{$*$}
\left(n\cdot\left(k+M(\branch{B})\cdot\mu(n)\right)+2n\cdot\left(k+M(\branch{B})\cdot\mu(n)\right)^2\right)^2.
\end{align}
\end{lemma}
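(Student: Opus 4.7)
My plan is to chain together the earlier bounds from Theorem~\ref{theorem: finite branch existence}, Lemma~\ref{lemma: finite number of nominals}, and Lemma~\ref{lemma: concept number in branch}, and then square the resulting bound on the number of labelled concept expressions in $\branch{B}$. Following the first observation of this section, I would work under the standing assumption $\Card(\Delta^\IB) \leq \mu(n)$, which by Theorem~\ref{theorem: finite branch existence} is the regime of interest for a decision procedure based on $\TALBOidub$.

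First, I would bound the number of new individuals introduced in the branch. By Lemma~\ref{lemma: finite number of nominals}, $\napplE(\branch{B}) \leq M(\branch{B}) \cdot \Card(\Delta^\IB)$, and combined with $\Card(\Delta^\IB) \leq \mu(n)$ this yields $i(\branch{B}) = k + \napplE(\branch{B}) \leq k + M(\branch{B}) \cdot \mu(n)$. Next, I would apply Lemma~\ref{lemma: concept number in branch} to bound the total number $N$ of labelled concept expressions occurring in $\branch{B}$ by $N \leq n \cdot i(\branch{B}) + 2n \cdot i(\branch{B})^2$. Substituting the previous bound produces
\[
N \leq n\cdot\bigl(k + M(\branch{B})\cdot\mu(n)\bigr) + 2n\cdot\bigl(k + M(\branch{B})\cdot\mu(n)\bigr)^2,
\]
which is the expression being squared on the right-hand side of $(*)$.

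The remaining step is to relate derivation steps directly to $N$. Since every rule of $\TALBOidub$ has at most two premises, and by the standing convention no rule is applied to the same set of premises twice, the total number of rule applications is dominated by the number of ordered pairs of labelled expressions in $\branch{B}$, and hence by $N^2$. Squaring the bound from the previous paragraph then delivers $(*)$.

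The main obstacle is this last bookkeeping step: strictly speaking, one must verify that the fixed finite collection of rule schemata of $\TALBOidub$ can be folded into the quadratic $N^2$ factor without a multiplicative blow-up, since a priori two different rules might be triggered by the same pair of premises. Because the calculus has only a constant number of rule schemata, each determined by the syntactic shape of its premises, this contributes only an overall constant that is absorbed into the estimate; everything else in the proof is direct substitution.
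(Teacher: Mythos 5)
Your proposal follows essentially the same route as the paper: it chains Lemma~\ref{lemma: finite number of nominals} with the bound $\Card(\Delta^\IB)\leq\mu(n)$ coming from Theorem~\ref{theorem: finite branch existence} to bound $i(\branch{B})$, feeds that into Lemma~\ref{lemma: concept number in branch} to bound the number $N$ of labelled expressions, and then uses the fact that rules have at most two premises and are never reapplied to the same premise set to bound the number of derivation steps by $N^2$. The paper's own argument is exactly this (stated as the two ``observations'' preceding the lemma), and if anything you are slightly more explicit than the paper both about restricting to branches with $\Card(\Delta^\IB)\leq\mu(n)$ and about the constant-factor issue of several rule schemata sharing a premise set, which the paper silently elides.
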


In order to define a complexity-optimal tableau decision procedure, let
us define the following derivation strategy.
\begin{quote}
\emph{Avoid huge branch strategy}:
Limit the number of derivation steps in every branch 
of a tableau derivation %
to the bound~($*$) in Lemma~\ref{lemma: number of derivation steps}
and do not expand branches that become longer.
\end{quote}

Whereas $n$ and $k$ depend only on the given concept $C$, the value
$M(\branch{B})$ %
depends
also on the particular strategy of applying the~\eqref{rule:
unrestricted blocking} rule.
In fact, the following lemma holds.

\begin{lemma}
In a branch $\branch{B}$
let $M'$ be the number of individuals occurring
strictly before the node from which point onwards, before any application of
the \eqref{rule: exists}~rule,
all applications of the \eqref{rule: unrestricted blocking}~rule
have been performed in accordance with 
Condition~(c\ref{req: ub before exists}).
Let $n'$ be the number of concepts of the form $\exists R.D$ in $\sub(C)$.
Then $M(\branch{B})\leq M'+n'$.
\end{lemma}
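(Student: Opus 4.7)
The plan is to fix an arbitrary equivalence class $\ecl{\indiv}\in\Delta^\IB$, show $\napplE(\ecl{\indiv})\le M'+n'$, and then take the supremum to obtain the lemma. Let $N$ denote the node of $\branch{B}$ from which Condition~(c\ref{req: ub before exists}) is in force, and decompose
\[
\napplE(\ecl{\indiv}) \;=\; A + B,
\]
where $A$ counts applications of the \eqref{rule: exists}~rule in $\branch{B}$ with label in $\ecl{\indiv}$ occurring strictly before $N$, and $B$ counts those occurring at $N$ or later.

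The bound $A\le M'$ is the easy half. The \eqref{rule: exists}~rule is the only rule generating fresh individuals and each application introduces exactly one. Hence the total number of \eqref{rule: exists} applications in $\branch{B}$ before $N$ (across all labels, not just those in $\ecl{\indiv}$) is at most $M'$, and a fortiori $A\le M'$.

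For the bound $B\le n'$, let $\indiv_0$ be the $<$-least element of $\ecl{\indiv}$. The key step is to show that every \eqref{rule: exists} application after $N$ with label in $\ecl{\indiv}$ must in fact have label $\indiv_0$. Suppose towards a contradiction there is such an application to $\indiv':\exists R.D$ with $\indiv'\in\ecl{\indiv}$ and $\indiv_0<\indiv'$. By Condition~(c\ref{req: ub before exists}), the \eqref{rule: unrestricted blocking}~rule with premises $\indiv_0:\{\indiv_0\}$ and $\indiv':\{\indiv'\}$ has already been discharged in $\branch{B}$ before this step. Since $\indiv_0\simB\indiv'$ (they share the class $\ecl{\indiv}$), the definition of $\simB$ forces $\indiv_0:\{\indiv'\}\in\branch{B}$, and then Condition~(c\ref{req: blocking}), together with $\indiv_0<\indiv'$, prohibits any \eqref{rule: exists} application to $\indiv'$, giving the desired contradiction. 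The remaining applications after $N$, which are all to $\indiv_0$, are indexed by pairwise distinct premises $\indiv_0:\exists R.D$ with $\exists R.D\in\sub(C)$ and $D$ non-singleton (using the standard convention that no rule is applied twice to the same set of premises). This yields $B\le n'$, and combining the two bounds concludes the proof.

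The main obstacle I expect is the argument for $B$: it requires the simultaneous use of both blocking conditions and the observation that on the open branch $\branch{B}$ the equality side $\indiv_0:\{\indiv'\}$ of the \eqref{rule: unrestricted blocking}~rule, rather than its negation, is the one present — a fact that follows directly from unpacking the definition of $\simB$ together with $\indiv_0,\indiv'\in\ecl{\indiv}$. Once this observation is in place the rest is essentially bookkeeping.
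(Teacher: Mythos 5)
Your proposal is correct and takes essentially the same route as the paper's proof sketch: it splits the applications of the \eqref{rule: exists}~rule at the node where Condition~(c\ref{req: ub before exists}) takes effect, bounds the earlier ones by $M'$ because each such application creates a fresh individual, and bounds the later ones within an equivalence class by $n'$ because blocking (as in the proof of Lemma~\ref{lemma: napplE(ell)}) confines them to the $<$-least representative, whose possible premises $\indiv_0:\exists R.D$ with $\exists R.D\in\sub(C)$ are pairwise distinct. The only difference is that you spell out details the paper leaves implicit, in particular that openness of $\branch{B}$ forces the equality conclusion of the \eqref{rule: unrestricted blocking}~rule for individuals in the same class.
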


\begin{proof}[(sketch)]
It can be seen from the proof of Lemma~\ref{lemma: napplE(ell)}
that, after the node where exhaustive
applications of the \eqref{rule: unrestricted blocking}~rule started,
the number of applications of the \eqref{rule: exists}~rule
to concepts labelled by individuals from the same equivalence class
in the branch $\branch{B}$ 
is restricted by $n'$.
The number of applications of the \eqref{rule: exists}~rule to concepts
before this node is restricted by the number of individuals which
are present before that node, that is, by~$M'$.
\end{proof}
We note that $n'\leq n$, but, depending on the strategy used, $M'$ can have any value. 
However, if a tableau strategy ensures that $M'$ is at most exponential
in the length of the given concept $C$
then the number of derivation steps in the branch $\branch{B}$
is also at most exponential in the length of $C$.
This is in particular true
if Condition~(c\ref{req: ub before exists})
is satisfied for the \emph{first} node of the tableau.
In this case, $M'=0$. 
This proves the following theorem.

\begin{theorem}
The calculus \TALBOidub provides a 
non-deterministic tableau procedure running in \NExpTime,
when it uses 
the `avoid huge branch strategy' together with
the tableau expansion strategy ensuring that before
any application of the \eqref{rule: exists}~rule
all possible applications of the \eqref{rule: unrestricted blocking}~rule
have been performed.
\end{theorem}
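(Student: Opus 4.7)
The plan is to combine the two specified strategies and read off the \NExpTime\ bound from Lemma~\ref{lemma: number of derivation steps}. First I would verify that the strategy in the theorem forces the hypothesis of the preceding auxiliary lemma with $M' = 0$: insisting that every application of the \eqref{rule: unrestricted blocking}~rule precedes the first application of the \eqref{rule: exists}~rule places the relevant ``blocking-completed'' node at the root, so no new individuals have yet been generated. The lemma then gives $M(\branch{B}) \leq n'$, and since $n' \leq n$ this is a polynomial bound that holds uniformly over every branch of the tableau.

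Next I would substitute $M(\branch{B}) \leq n$ (together with $k \leq n$) into the expression ($*$) of Lemma~\ref{lemma: number of derivation steps}. Since $\mu(n) = 3n\lfloor\log(n+1)\rfloor \cdot 2^n$ is of order $2^{O(n)}$, the inner term $k + M(\branch{B})\cdot \mu(n)$ is already $2^{O(n)}$; the outer polynomial wrapping and the outer square raise the exponent only by a constant factor. Hence every branch has length at most $2^{O(n)}$, and the `avoid huge branch strategy' safely truncates all longer branches.

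Soundness is inherited from the underlying calculus \TALBOidub, and for completeness under the strategy I would appeal to Theorem~\ref{theorem: finite branch existence} together with Corollary~\ref{corollary: branch finiteness criterion}: if $C$ is satisfiable, then some open branch has $\Card(\Delta^\IB) \leq \mu(n)$ and is therefore finite, with its length bounded by ($*$), so the truncation does not remove it. Non-determinism resides in the selection of a disjunct at each application of a branching rule such as \eqref{rule: or}, \eqref{rule: not exists not} and \eqref{rule: unrestricted blocking}. Each elementary rule application is polynomial-time in the current branch size, so a single non-deterministic computation takes at most $2^{O(n)} \cdot 2^{O(n)} = 2^{O(n)}$ steps, placing the procedure in \NExpTime.

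The main obstacle is bookkeeping rather than genuine difficulty. One must check that the telescoped arithmetic in ($*$) really remains single-exponential once $M(\branch{B})$ is treated as $O(n)$ and $\mu(n)$ as $2^{O(n)}$, and one must confirm that performing blocking exhaustively at the root uniformly controls $M(\branch{B})$ across all descendant branches, because no rule applied below the root introduces a new individual before the first application of the \eqref{rule: exists}~rule. A minor additional subtlety is that the `avoid huge branch' cut-off uses the branch-dependent quantity $M(\branch{B})$; replacing it by the uniform upper bound $n'$ obtained above yields a bound that is computable directly from the input and does not depend on how the derivation unfolds.
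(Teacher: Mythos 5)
Your proposal is correct and follows essentially the same route as the paper: the eager-blocking strategy makes $M'=0$ in the auxiliary lemma, so $M(\branch{B})\leq n'\leq n$, and substituting this (with $k\leq n$ and $\mu(n)=2^{O(n)}$) into the bound~($*$) shows every retained branch is at most exponentially long, while Theorem~\ref{theorem: finite branch existence} guarantees the `avoid huge branch strategy' never discards the needed open branch. Your additional remarks (uniformity of the bound across branches, per-step polynomial cost, replacing $M(\branch{B})$ by the input-computable bound $n'$) only make explicit what the paper leaves implicit.
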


This gives a complexity optimal tableau decision procedure for
\ALBOid.

\section{Deterministic Decision Procedures}
\label{section_implementation}

We have presented a sound, complete and terminating tableau calculus for \ALBOid.
The calculus defines the rules, we have defined how a tableau
derivation may be constructed and how rules may be applied, but it has
not been specified in concrete terms how the tableau derivation must
be constructed and how the the rules must be applied.
The presented tableau calculi define only non-deterministic decision procedures.

When implementing a calculus as a deterministic
decision procedure an important issue therefore is to decide how to perform the
search without loosing the crucial properties of soundness, completeness
and termination. As all the rules in the calculus \TALBOidub are sound, preserving
soundness is not problematic.
Care is needed for preserving completeness and termination.
In particular, it is crucial that the search is performed in a fair way.
A procedure is \emph{fair} if, when an inference is possible forever,
then it is performed eventually.
This means a deterministic tableau procedure
based on \TALBOidub
may not defer the use of an applicable rule indefinitely.
In our context fairness must be understood in a \emph{global} sense.
That is, a tableau procedure has to be fair not only to expressions in
a particular branch but to expressions in \emph{all} branches of a tableau.
In other words, a procedure is fair if it makes both the branch
selection, and the selection of expressions to which to apply a rule,
in a fair way.
Then soundness, completeness and termination carry over from the
calculus and we have:

\begin{theorem}
Any fair tableau procedure based on \TALBOidub is a decision procedure for
\ALBOid and all its sublogics.
\end{theorem}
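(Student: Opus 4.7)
The plan is to reduce the statement to the three calculus-level results already proved: soundness, completeness (Theorem~\ref{theorem: completeness}) and termination of $\TALBOidub$, together with the fact that every rule of $\TALBOidub$ is individually sound. Since a procedure is just a particular way of scheduling rule applications and branch explorations, what needs verification is that fairness does not destroy any of these three properties when we fix a concrete order of inferences.

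First I would argue soundness of the procedure. Any tableau that a fair procedure constructs is a legal $\TALBOidub$ derivation, so if the input concept $C$ is satisfiable then by the calculus-level soundness result at least one branch of the constructed tableau is open; the procedure therefore cannot answer ``unsatisfiable''. Next I would address completeness. Suppose $C$ is unsatisfiable. By completeness of $\TALBOidub$ there exists a closed tableau for $C$; equivalently, every fully expanded tableau for $C$ is closed. Fairness of the procedure, in the global sense stipulated in the paragraph above the theorem, guarantees that on every branch $\branch{B}$ explored, every rule whose premises appear simultaneously in $\branch{B}$ is eventually instantiated on that premise set (the standard ``applied at most once to the same premises'' convention prevents trivial repetition). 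Hence each branch the procedure produces is fully expanded, and thus closed, so the procedure answers ``unsatisfiable''.

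The main obstacle, and the point that merits the most care, is termination, because branching rules such as \eqref{rule: or} and \eqref{rule: unrestricted blocking} mean that we must treat all branches, not just one. Here I would invoke the Termination Theorem of Section~\ref{section_termination}, which says that every closed $\TALBOidub$-tableau is finite and every open $\TALBOidub$-tableau has a finite open branch, provided the blocking conditions (c\ref{req: blocking}) and (c\ref{req: ub before exists}) are respected; these conditions are part of the calculus and must be honoured by any procedure calling itself $\TALBOidub$-based. If $C$ is unsatisfiable, every branch is closed and finite, so in particular the whole tableau is finite and a fair procedure finishes it in finitely many steps. If $C$ is satisfiable, there exists at least one finite open branch; fairness of branch selection, in the global sense, ensures that such a branch is not postponed indefinitely in favour of branches that keep producing new expressions, so the procedure reaches a state in which a branch is detected to be saturated and open, and it answers ``satisfiable''.

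Combining the three observations yields that any fair tableau procedure based on $\TALBOidub$ halts on every input and returns the correct answer. For sublogics of \ALBOid the argument is identical: the input concept lies in the appropriate fragment, the calculus rules apply unchanged, and the bounds from Sections~\ref{section: efmp} and~\ref{section_termination} still hold, so fairness together with the calculus-level properties again delivers a decision procedure.
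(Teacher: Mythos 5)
Your proposal is correct and follows essentially the same route as the paper: the paper's own justification is precisely that all rules are individually sound, that (strong) completeness and the Termination Theorem hold for $\TALBOidub$, and that global fairness (to expressions and to branches) is exactly what lets these calculus-level properties carry over to any concrete procedure. Your write-up merely spells out the three transfer arguments (soundness, full expansion of branches via fairness for completeness, and finiteness of closed tableaux plus eventual saturation of a finite open branch for termination) in more detail than the paper does.
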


\begin{figure}
\begin{center}
\begin{minipage}{.8\textwidth}
 \begin{enumerate}[1.]%
  \item\label{ex: unsat: 0}
    \titem{\indiv_0:\Not\left(\exists (Q'\Or\Not Q').\Not\exists \rname.\cname\Or\Not\exists Q''.\Not\exists \rname.\cname\right)}{given}
  \item\label{ex: unsat: 0a}
    \titem{\indiv_0:\{\indiv_0\}}{\eqref{rule: id},\ref{ex: unsat: 0}}
  \item\label{ex: unsat: 1}\titem{\indiv_0:\Not\exists (Q'\Or\Not Q').\Not\exists \rname.\cname}{\eqref{rule: not or},\ref{ex: unsat: 0}}
  \item\label{ex: unsat: 2a}
    \titem{\indiv_0:\Not\Not\exists Q''.\Not\exists \rname.\cname}{\eqref{rule: not or},\ref{ex: unsat: 0}}
  \item\label{ex: unsat: 2}
    \titem{\indiv_0:\exists Q'' .\Not\exists \rname.\cname}{\eqref{rule: not not},\ref{ex: unsat: 2a}}
  \item\label{ex: unsat: 3a}
    \titem{\indiv_0:\Not\exists Q'.\Not\exists \rname.\cname}{\eqref{rule: not exists or},\ref{ex: unsat: 1}}
  \item\label{ex: unsat: 3b}
    \titem{\indiv_0:\Not\exists\Not Q'.\Not\exists \rname.\cname}{\eqref{rule: not exists or},\ref{ex: unsat: 1}}
  \item\label{ex: unsat: 3b.1}
    \titem{\tbranch\indiv_0:\exists Q'.\{\indiv_0\}}{\eqref{rule: not exists not},\ref{ex: unsat: 3b},\ref{ex: unsat: 0a}}
  \item\label{ex: unsat: 3.2}
    \titem{\tskip\indiv_0:\Not\Not\exists \rname.\cname}{\eqref{rule: not exists},\ref{ex: unsat: 3b.1},\ref{ex: unsat: 3a}}
  \item\label{ex: unsat: 3.3}
    \titem{\tskip\indiv_0:\exists \rname.\cname}{\eqref{rule: not not},\ref{ex: unsat: 3.2}}
  \item\label{ex: unsat: 3.4}\titem{\tskip\indiv_1:\cname}{\eqref{rule: exists},\ref{ex: unsat: 3.3}}
  \item\label{ex: unsat: 3.5}\titem{\tskip\indiv_0:\exists \rname.\{\indiv_1\}}{\eqref{rule: exists},\ref{ex: unsat: 3.3}}
  \item\label{ex: unsat: 3.4a}
    \titem{\tskip\indiv_1:\{\indiv_1\}}{\eqref{rule: id},\ref{ex: unsat: 3.4}}

  \item\label{ex: unsat: 3.6}
    \titem{\tskip\tbranch\indiv_0:\exists Q'.\{\indiv_1\}}{\eqref{rule: not exists not},\ref{ex: unsat: 3b},\ref{ex: unsat: 3.4a}}
  \item\label{ex: unsat: 3.6.1}
    \titem{\tskip[2]\indiv_1:\Not\Not\exists \rname.\cname}{\eqref{rule: not exists},\ref{ex: unsat: 3.6},\ref{ex: unsat: 3a}}
  \item\label{ex: unsat: 3.6.2}
    \titem{\tskip[2]\indiv_1:\exists \rname.\cname}{\eqref{rule: not not},\ref{ex: unsat: 3.6.1}}

  \item\label{ex: unsat: 3.6.3}\titem{\tskip[2]\tbranch \indiv_0:\{\indiv_1\}}{\eqref{rule: unrestricted blocking},\ref{ex: unsat: 0a},\ref{ex: unsat: 3.4a}}

  \item\label{ex: unsat: 3.6.3.1}
    \titem{\tskip[3]\indiv_2:\Not\exists \rname.\cname}{\eqref{rule: exists},\ref{ex: unsat: 2}}
  \item\label{ex: unsat: 3.6.3.2}
    \titem{\tskip[3]\indiv_0:\exists Q''.\{\indiv_2\}}{\eqref{rule: exists},\ref{ex: unsat: 2}}
  \item\label{ex: unsat: 3.6.3.1a}
    \titem{\tskip[3]\indiv_2:\{\indiv_2\}}{\eqref{rule: id},\ref{ex: unsat: 3.6.3.1}}
  
  \item\label{ex: unsat: 3.6.3.3}
    \titem{\tskip[3]\tbranch\indiv_0:\exists Q'.\{\indiv_2\}}%
    {\eqref{rule: not exists not},\ref{ex: unsat: 3b},\ref{ex: unsat: 3.6.3.1a}}

  \item\label{ex: unsat: 3.6.3.3.1}
    \titem{\tskip[4]\indiv_2:\Not\Not\exists \rname.\cname}%
    {\eqref{rule: not exists},\ref{ex: unsat: 3a},\ref{ex: unsat: 3.6.3.3}}
  \item\label{ex: unsat: 3.6.3.3.2}
    \titem{\tskip[4]\unsat}{\eqref{rule: clash},\ref{ex: unsat: 3.6.3.1},\ref{ex: unsat: 3.6.3.3.1}} 

  \item\label{ex: unsat: 3.6.3.4.1}
    \titem{\tskip[3]\tbranch\indiv_2:\Not\Not\exists \rname.\cname}%
    {\eqref{rule: not exists not},\ref{ex: unsat: 3b},\ref{ex: unsat: 3.6.3.1a}}

  \item\label{ex: unsat: 3.6.3.4.2}
    \titem{\tskip[4]\unsat}{\eqref{rule: clash},\ref{ex: unsat: 3.6.3.1},\ref{ex: unsat: 3.6.3.4.1}} 

  \item\label{ex: unsat: 3.6.4}\titem{\tskip[2]\tbranch \indiv_0:\Not\{\indiv_1\}}{\eqref{rule: unrestricted blocking},\ref{ex: unsat: 0a},\ref{ex: unsat: 3.4a}}
  \item\label{ex: unsat: 3.6.4.1}\titem{\tskip[3] \indiv_2:\cname}%
    {\eqref{rule: exists},\ref{ex: unsat: 3.6.2}}  
  \item\label{ex: unsat: 3.6.4.2}\titem{\tskip[3] \indiv_1:\exists \rname.\{\indiv_2\}}%
    {\eqref{rule: exists},\ref{ex: unsat: 3.6.2}}  
  \item\label{ex: unsat: 3.6.4.1a}
    \titem{\tskip[3]\indiv_2:\{\indiv_2\}}{\eqref{rule: id},\ref{ex: unsat: 3.6.4.1}}
  \item\label{ex: unsat: 3.6.4.3}
    \titem{\tskip[3]\tbranch\indiv_0:\exists Q'.\{\indiv_2\}}%
    {\eqref{rule: not exists not},\ref{ex: unsat: 3b},\ref{ex: unsat: 3.6.4.1a}}  
  \item\label{ex: unsat: 3.6.4.3.1}
    \titem{\tskip[4]\indiv_2:\Not\Not\exists \rname.\cname}%
    {\eqref{rule: not exists},\ref{ex: unsat: 3.6.4.3},\ref{ex: unsat: 3a}}  
  \item\label{ex: unsat: 3.6.4.3.2}
    \titem{\tskip[4]\indiv_2:\exists \rname.\cname}%
    {\eqref{rule: not not},\ref{ex: unsat: 3.6.4.3.1}}
  \item[\ldots]\label{ex: unsat: 3.6.4.3.3}
    \titem{\tskip[4]\textbf{Non-terminating}}%
    {Repetition of~\ref{ex: unsat: 3.6.2}--\ref{ex: unsat: 3.6.4.3.2}}
  
  \item\label{ex: unsat: 3.6.4.4}
    \titem{\tskip[3]\tbranch\indiv_2:\Not\Not\exists \rname.\cname}%
    {\eqref{rule: not exists not},\ref{ex: unsat: 3b},\ref{ex: unsat: 3.6.4.1a}}  
  \item\label{ex: unsat: 3.6.4.4.1}
    \titem{\tskip[4]}
    {Similarly to~\ref{ex: unsat: 3.6.4.3}--\ref{ex: unsat: 3.6.4.3.3}}
  
  \item\label{ex: unsat: 3.7}
    \titem{\tskip\tbranch\indiv_1:\Not\Not\exists \rname.\cname}%
        {\eqref{rule: not exists not},\ref{ex: unsat: 3b},\ref{ex: unsat: 3.4a}}
  \item\label{ex: unsat: 3.7.1}
    \titem{\tskip[2]}
    {Similarly to~\ref{ex: unsat: 3.6}--\ref{ex: unsat: 3.6.4.4.1}}
  
  \item\label{ex: unsat: 3b.2}
    \titem{\tbranch\indiv_0:\Not\Not\exists \rname.\cname}%
    {\eqref{rule: not exists not},\ref{ex: unsat: 3b},\ref{ex: unsat: 0a}}
  \item\label{ex: unsat: 3b.2.1}
    \titem{\tskip}
    {Similarly to~\ref{ex: unsat: 3b.1}--\ref{ex: unsat: 3.7.1}}
 \end{enumerate}
\end{minipage}
\end{center}
\caption{An infinite derivation, due to unfair selection of concepts}
\label{fig_example_fairness}
\end{figure}%

To illustrate the importance of fairness we give two examples.
The concept
\[
\Not\Bigl(\exists (Q'\Or\Not Q').\Not\exists \rname.\cname\Or\Not\exists Q''.\Not\exists \rname.\cname\Bigr),
\]
or equivalently
$(\Box\exists \rname.\cname)\And(\exists Q''.\forall \rname.\Not \cname)$,
is not satisfiable.
Figure~\ref{fig_example_fairness}
gives a depth-first left-to-right
derivation that is unfair and does not terminate.
It can be seen that the derivation is infinite because the application
of the \eqref{rule: exists}~rule to $\indiv_0:\exists Q''.\Not\exists \rname.\cname$
is deferred forever and, consequently, a contradiction is not found.
This illustrates the importance of fairness for 
completeness.

\begin{figure}[tbu]
\begin{center}
\begin{minipage}{.725\textwidth}
 \begin{enumerate}[1.]
  \item\label{ex: sat: 0}\titem{\indiv_0:\Not\exists(Q'\Or\Not Q').\Not\exists \rname.\cname}{given}
  \item\label{ex: sat: 0a}\titem{\indiv_0:\{\indiv_0\}}{\eqref{rule: id},\ref{ex: sat: 0}}
  \item\label{ex: sat: 1}\titem{\indiv_0:\Not\exists Q'.\Not\exists \rname.\cname}{\eqref{rule: not exists or},\ref{ex: sat: 0}}
  \item\label{ex: sat: 2}\titem{\indiv_0:\Not\exists\Not Q'.\Not\exists \rname.\cname}{\eqref{rule: not exists or},\ref{ex: sat: 0}}
  \item\label{ex: sat: 3.0}\titem{\tbranch\indiv_0:\Not\Not\exists \rname.\cname}%
    {\eqref{rule: not exists not},\ref{ex: sat: 2},\ref{ex: sat: 0a}}
  \item\label{ex: sat: 3.1}\titem{\tskip\indiv_0:\exists \rname.\cname}{\eqref{rule: not not},\ref{ex: sat: 3.0}}
  \item\label{ex: sat: 3.2}\titem{\tskip\indiv_1:\cname}{\eqref{rule: exists},\ref{ex: sat: 3.1}}
  \item\label{ex: sat: 3.3}\titem{\tskip\indiv_0:\exists \rname.\{\indiv_1\}}{\eqref{rule: exists},\ref{ex: sat: 3.1}}
  \item\label{ex: sat: 3.2a}\titem{\tskip\indiv_1:\{\indiv_1\}}{\eqref{rule: id},\ref{ex: sat: 3.2}}
  \item\label{ex: sat: 3.4.0}\titem{\tskip\tbranch\indiv_1:\Not\Not\exists \rname.\cname}%
    {\eqref{rule: not exists not},\ref{ex: sat: 2},\ref{ex: sat: 3.2a}}
  \item\label{ex: sat: 3.4.1}\titem{\tskip[2]\indiv_1:\exists \rname.\cname}{\eqref{rule: not not},\ref{ex: sat: 3.0}}
  \item\label{ex: sat: 3.4.2}\titem{\tskip[2]\tbranch\indiv_0:\Not\{\indiv_1\}}%
    {\eqref{rule: unrestricted blocking},\ref{ex: sat: 0a},\ref{ex: sat: 3.2a}}
  \item\label{ex: sat: 3.4.2.0}\titem{\tskip[3]\indiv_2:\cname}{\eqref{rule: exists},\ref{ex: sat: 3.4.1}}
  \item\label{ex: sat: 3.4.2.1}\titem{\tskip[3]\indiv_1:\exists \rname.\{\indiv_2\}}%
    {\eqref{rule: exists},\ref{ex: sat: 3.4.1}}
  \item[\ldots]\label{ex: sat: 3.4.2.2}
    \titem{\tskip[3]\textbf{Non-terminating}}%
    {Repetition of~\ref{ex: sat: 3.2}--\ref{ex: sat: 3.4.2.1}}
  \item\label{ex: sat: 3.4.3}\titem{\tskip[2]\tbranch\indiv_0:\{\indiv_1\}}%
    {\eqref{rule: unrestricted blocking},\ref{ex: sat: 0a},\ref{ex: sat: 3.2a}}
  \item[\ldots]\label{ex: sat: 3.4.3.0}
    \titem{\tskip[3]}
    {Never expanded}
  \item\label{ex: sat: 3.5}\titem{\tskip\tbranch\indiv_0:\exists Q'.\{\indiv_1\}}%
    {\eqref{rule: not exists not},\ref{ex: sat: 2},\ref{ex: sat: 3.2a}}
  \item[\ldots]\label{ex: sat: 3.5.0}
    \titem{\tskip[2]}
    {Never expanded}
  \item\label{ex: sat: 4}\titem{\tbranch\indiv_0:\exists Q'.\{\indiv_0\}}%
    {\eqref{rule: not exists not},\ref{ex: sat: 2},\ref{ex: sat: 0a}}
  \item[\ldots]\label{ex: sat: 4.0}
    \titem{\tskip}
    {Never expanded}
 \end{enumerate}
\end{minipage}
\end{center}
\caption{An infinite derivation, due to unfair selection of branches}
\label{figure: unfair branch choice}
\end{figure}

The next example illustrates the importance of fairness for termination.
The concept
\[
\Not\exists(Q'\Or\Not Q').\Not\exists \rname.\cname,
\]
or equivalently 
$\Box\exists \rname.\cname$,
is satisfiable.
The derivation in Figure~\ref{figure: unfair branch choice} is
obtained with a depth-first \emph{right-to-left} strategy.
However, the repeated selection of the right branch at 
\eqref{rule: unrestricted blocking}~choice points, in particular,
causes all the individuals in the branch to be pair-wise non-equal.
The concept $\indiv:\exists \rname.\cname$ re-appears repeatedly
for every individual $\indiv$ in the branch.
This triggers the repeated generation of a new individual by the 
\eqref{rule: exists}~rule, resulting in an infinite
derivation.
This strategy is unfair because all branches except for the right-most
branch get ignored.
In the branches right-most with respect to the \eqref{rule: unrestricted blocking}~rule, it is as if blocking with the \eqref{rule: unrestricted blocking}~rule
has never been applied.

Recall from Section~\ref{section_tableau}, a tableau calculus is terminating iff the tableau
constructed for any finite concept set $\mathcal{S}$ is finite whenever $T(\mathcal{S})$ is closed or
contains a finite open
branch if the tableau is open.
For a satisfiable concept set this means it is possible to construct
\emph{a} finite, fully expanded, open branch.
Not all branches in an open terminating tableau need to be finite though.
\citeN{Reker-PhD-2011} gives an example transferable to \ALBOid
for which the left-most open branch in a tableau derivation is infinite.

An immediate way of obtaining a fair deterministic decision
procedure for a terminating tableau is to use breadth-first search.
Another possibility is to use depth-first search with iterative
deepening.
The idea of iterative deepening search is that the tableau is
expanded up to a fixed depth. 
If a closed tableau is found or if an open fully expanded branch is
found then the derivation process stops.
If not, then the depth is increased and the tableau is constructed up to
this depth, and so on.
A benefit of breadth-first and depth-first iterative deepening
strategies is the generation of small models.

Additionally, depth-first search combined with the `avoid huge branch
strategy' provides a fair, deterministic decision procedure.

\begin{theorem}
The following are deterministic decision procedures for
\ALBOid and all its sublogics:
\begin{enumerate}
 \item 
Any fair tableau procedure based on \TALBOidub using a breadth-first 
search strategy.
\item 
Any fair tableau procedure based on \TALBOidub using a depth-first search strategy
with iterative deepening.
 \item 
Any fair tableau procedure based on \TALBOidub using a depth-first
left-to-right strategy and the `avoid huge branch strategy'.%
\end{enumerate}
\end{theorem}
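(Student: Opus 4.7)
The plan is to invoke the immediately preceding theorem, which asserts that any \emph{fair} tableau procedure based on \TALBOidub is a decision procedure for \ALBOid and its sublogics. Hence it suffices to verify, for each of the three strategies, that the resulting procedure is fair in the global sense defined above (fair in both branch selection and in the selection of expressions within a branch) and that the procedure halts on every input.

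For the breadth-first strategy, fairness is essentially immediate. At each ply of the search, the procedure processes every open leaf before advancing, so no applicable rule on any branch can be deferred indefinitely: if an inference becomes applicable at depth $d$, it will be performed by depth at most $d+c$ for some constant $c$ depending only on the rule shape. Termination follows from the Termination Theorem: closed branches are finite, and by Theorem~\ref{theorem: finite branch existence} together with Corollary~\ref{corollary: branch finiteness criterion}, every open tableau contains a finite open branch, which the breadth-first search is guaranteed to encounter. For depth-first search with iterative deepening, the argument is analogous: at each iteration the depth bound~$d$ is exceeded only after the current tableau prefix has been expanded exhaustively up to depth~$d$, and because $d \to \infty$, any expression that remains applicable on any branch is eventually reached and processed. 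Closed tableaux are detected at the smallest depth sufficient to derive a contradiction on every branch; open finite branches, whose existence is again guaranteed by Theorem~\ref{theorem: finite branch existence}, are detected at the smallest depth bounding their length.

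The interesting case is (3), depth-first left-to-right combined with the `avoid huge branch strategy'. Without the bound this strategy is \emph{not} fair in general, as illustrated by Figure~\ref{figure: unfair branch choice}, where infinitely many pairwise inequality conjectures on the right branch of the \eqref{rule: unrestricted blocking}~rule cause the \eqref{rule: exists}~rule to fire forever and leftward branches are never revisited. The `avoid huge branch strategy', however, caps the length of every branch by the bound~$(*)$ of Lemma~\ref{lemma: number of derivation steps}, so each branch explored depth-first is finite and the procedure must eventually backtrack to every alternative opened by a rule with more than one conclusion. Thus fairness in the branch-selection sense is recovered; fairness in the expression-selection sense within a branch is guaranteed by the assumption that the strategy is, in all other respects, fair (as hypothesised in the theorem's hypothesis `Any fair tableau procedure\ldots'). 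Termination is now automatic: the tableau is finitely branching and every branch is finite, so by K\"onig's lemma the whole tableau is finite. Soundness of the `avoid huge branch strategy' relies on the observation, made in Section~\ref{section: Complexity}, that any branch exceeding the bound~$(*)$ can be safely ignored because Theorem~\ref{theorem: finite branch existence} guarantees a witness branch within the bound whenever the input is satisfiable; hence pruning does not jeopardise completeness.

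The main obstacle I expect is (3), specifically the verification that truncating branches at the bound~$(*)$ never eliminates the last open branch of a satisfiable input. The key lever is Theorem~\ref{theorem: finite branch existence}, which supplies an open branch $\branch{B}$ with $\Card(\Delta^{\I(\branch{B})}) \leq \mu(n)$, combined with Lemmas~\ref{lemma: finite number of nominals} and~\ref{lemma: concept number in branch} which translate this cardinality bound into the bound~$(*)$ on the number of derivation steps. Once this pruning is justified, fairness and termination for (3) follow as sketched, and the final decision-procedure statement for all three items is a direct consequence of the previously established theorem on fair procedures.
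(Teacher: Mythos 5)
Your proposal is correct and follows essentially the same route as the paper: the paper justifies this theorem by the immediately preceding theorem on fair procedures together with the surrounding discussion that breadth-first and iterative-deepening search are fair and detect the finite open branch guaranteed by the Termination Theorem, and that the `avoid huge branch strategy' is safe to combine with depth-first search because Theorem~\ref{theorem: finite branch existence} and Lemmas~\ref{lemma: finite number of nominals}, \ref{lemma: concept number in branch} and~\ref{lemma: number of derivation steps} guarantee a witnessing open branch within the bound~($*$) whenever the input is satisfiable. Your elaboration of fairness and of why pruning at the bound preserves completeness matches the paper's (largely informal) argument, so no gap needs to be flagged.
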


Whether a depth-first left-to-right strategy without exploiting the model bound
gives a decision procedure for \ALBOid  is open.
It is in particular open whether there are strategies to force the left-most
open branch to be finitely bounded.

\section{Discussion}
\label{section_discussion}

An important contribution of this paper is the unrestricted blocking mechanism
based on the unrestricted blocking rule.
It is conceptually similar but more generic than the approach of
reusing domain terms in, for example, \cite{BryManthey87,BryTorge98}, where it is
used to compute domain minimal models.
The following rule
\[
 \tableaurule{\indiv:\exists R.C}{\indiv:\exists R.\{\indiv_0\}\tand\indiv_0:C\tor\cdots\tor\indiv:\exists R.\{\indiv_n\}\tand\indiv_n:C%
  \tor\indiv:\exists R.\{\indiv'\}\tand\indiv':C},
\]
where $\indiv_0,\ldots,\indiv_n$ are all the individuals occurring in the current branch 
and $\indiv'$ is a fresh individual,
is an adaptation of the $\delta^\ast$-rule
from Bry et al.\ \citeyear{BryManthey87,BryTorge98} 
to the language of description logics.
Every tableau derivation
with this rule can be rewritten to an equivalent tableau derivation
using the unrestricted blocking rule
but not the other way around.

An advantage of the unrestricted blocking mechanism is that
it allows to separate proofs of termination of tableau calculi from
proofs of their soundness and completeness.
In the description, modal and hybrid logic literature blocking mechanisms are
usually tightly integrated with the tableau rules and the tableau procedure.
Soundness and completeness of the procedure is proved taking into
account the given blocking mechanism.
This creates intricacies and conceptual dependencies
in the proofs.
In contrast, blocking based on the \eqref{rule: unrestricted blocking}~rule is not tied to a specific logic.
It can be added to any sound and complete
tableau calculus without loosing soundness and completeness.

Exploiting this we have shown that
\emph{any} tableau procedure based on a tableau calculus extended
with this blocking mechanism is guaranteed to terminate if some
additional general conditions are satisfied~\cite{SchmidtTishkovsky-GTM+-2008}.
These conditions are the blocking conditions~(c\ref{req: blocking})
and~(c\ref{req: ub before exists}) and fairness of a particular tableau
strategy used in the tableau procedure, which is essential for
the method to work.
In addition, the effective finite model property must hold and it must
be proved by a filtration argument correlating in an appropriate
way with the original tableau calculus.
The present paper provides a different kind of proof for termination
to that in~\cite{SchmidtTishkovsky-GTM+-2008}.
The proof here can be extended to show
termination of tableau calculi equipped with the unrestricted blocking mechanism
for logics for which the effective model property may not hold but
the \emph{finite model property} still holds.
Proving the finite model property can be involved but 
proofs of it 
may already be known.
This means %
that known finite model
property results can be readily exploited to define terminating tableau
calculi.
We have introduced methods for systematically developing semantic tableau calculi for
modal and related logics \cite{Schmidt09d,SchmidtTishkovsky-ASTC-lmcs-2011}.
Adding unrestricted blocking turns many of the generated tableau
calculi into terminating
ones.

The way the blocking mechanism is defined in our calculus means
that from some point onwards it needs to be used eagerly, as is required
by Condition~(c\ref{req: ub before exists}).
This restriction still leaves a lot of room for different
tableau expansion strategies.
For example, a tableau procedure can postpone applications of the
\eqref{rule: unrestricted blocking}~rule or apply the rule in a
non-eager way while the provisional model constructed in a branch is not
too large.
Subtle measures and strategies can be devised (for instance, based
on run time or available memory) for determining when to start
applying the blocking rule eagerly.

Eager application of the blocking rule produces small models.
This is beneficial because small counter-examples can be produced
whereas models generated by standard blocking mechanisms are generally
larger.
With smaller models memory consumption
can be significantly smaller leading to quicker answers and improved
performance.

On the other hand the blocking rule involves branching and excessive
branching can degrade performance.
It can thus be advantageous to apply blocking only selectively.
Existing standard blocking mechanisms only identify individuals if
certain blocking conditions are true.
In particular, they are based on inclusion or equality tests on
sets of expressions that must normally be constructed in a particular
way. Individuals are identified only implicitly through the use
of status variables that identify individuals as blocked and blockable.
The expansion is usually assumed to be depth-first and the rules are
usually assumed to be applied in a certain order.
The unrestricted blocking mechanism is less restrictive and is monotone:
there is no blocking test, any individual is
blockable and blocked individuals remain blocked in a branch.
There are conditions for the application of the unrestricted
blocking rule but they are designed to be as weak and permissive
as possible.
As is shown in this paper none of the conditions can be omitted. 

Though standard blocking mechanisms are defined significantly
differently, they can be viewed as specialisations and adaptations of
unrestricted blocking.
Standard blocking amounts to the restricted application of blocking,
sometimes without change of the status of blocked individuals
(static blocking), sometimes with
change in the status of blocked individuals (dynamic
blocking).
Restricting applications of blocking reduces the number of branching
nodes and the overall search space can be significantly smaller.
A particular blocking test may need to be realised by an algorithm
complicated enough to create considerable overhead.
Ultimately, the trade-off between the resources required for blocking
tests and excessive branching created as a result of applications of
the blocking rule is inescapable and needs to be carefully
considered.

Different standard blocking mechanisms are needed for different logics.
While one blocking mechanism might work for a particular logic it can
be incomplete for a more expressive logic.
The advantage of unrestricted blocking is automatic soundness
and completeness for any logic.
Analysis of the termination proof in~\citeN{SchmidtTishkovsky-GTM+-2008} however suggests that,
in the case of a particular logic, an appropriate
blocking test can be extracted from the definition of a filtrated model
used in the proof of the effective finite model property for the logic.
This opens a promising direction for further research into
devising new blocking tests and blocking mechanisms
in a systematic way.

Compared to the tableau calculi of the conference paper~\cite{SchmidtTishkovsky-UTD+-2007}
there are a few differences in the calculus $T_\ALBOid$.
Besides the presence of rules for the identity role in $T_\ALBOid$,
the essential difference is that the calculi in the conference version 
include
this additional congruence rule:
\[
\tableaurule{\indiv':\{\indiv''\}\tand\indiv:\exists R.\{\indiv'\}
}{\indiv:\exists R.\{\indiv''\}}[bridge].\label{rule: bridge}
\]
The results of the present paper show that this rule is superfluous for
completeness of $T_\ALBOid$.
(We remark the \eqref{rule: bridge}~rule is sound but
not derivable in $\TALBOidub$.)

Our use of the equality constraints $\indiv:\{\indiv\}$ in the tableau rules
is non-standard and employs an idea
proposed and implemented in \mettel tableau
prover~\cite{mettel_page,HustadtTishkovskyWolterZakharyaschev-ARMT-2006,TishkovskySchmidtKhodadadi-MetTeL-2011}.
Normally these labelled concepts would be deleted by standard simplification rules
since $\indiv:\{\indiv\}$ is valid.
This would not be correct in our calculus because we use these equality
constraints as \emph{domain predicates} for keeping track of the
individuals \emph{essential} for instantiation by the $\gamma$~rules
in the calculus, and ultimately, for keeping track of the individuals
essential for the construction of the model in a tableau branch.
Here, by
$\gamma$~rules we mean the rules for universally quantified expressions
creating, by repeated application, all instantiations of individuals
occurring in conclusions but not in the premise (that is, the implicit
universally quantified first-order variables).
The $\gamma$~rules of the $\TALBOidub$ calculus are the
\eqref{rule: not exists not}~rule and the \eqref{rule: unrestricted blocking}~rule.
Naively these rules can be defined to use all individuals occurring in
the input set and all individuals introduced by the \eqref{rule: exists}~rule for
instantiation but this would create unnecessarily many instances and
would mean the search space becomes unnecessarily large.
The \eqref{rule: not exists}~rule does not require domain predication because
this is already achieved by the second premise of the rule.
Domain predication can be obtained in other ways by using an explicit
concept name representing the domain as is done
in~\citeN{HustadtSchmidt99b}, \citeN{Schmidt06Dagstuhl}, and \citeN{BaumgartnerSchmidt06}.
If preferred, domain predication can be omitted and captured by
appropriate side-conditions, but then the \eqref{rule: id}~rule and the
$\gamma$~rules need to be reformulated and the proofs adapted slightly.

Although the proofs in this paper are relatively concise,
they are not trivial.
They are based on the idea
that the %
finite model property for \ALBOid
and termination property of the proposed tableau calculus
are of the same nature, which is an important contribution.
In particular, it was proved
that a finite model (if it exists) %
can be generated by applying the unrestricted 
blocking rule eagerly within the branch.

While our tableau approach is in line with existing semantic tableau
approaches described in the description, modal and hybrid logic literature,
our presentation departs in significant ways from other presentations.
To make the approach and the results as general as possible, we have
presented it in terms of an abstract deduction calculus given by a set
of inference rules without making any assumptions about how branch
selection is performed, how tableau derivations are constructed,
in which order the rules are applied, which individuals are blockable and
the need for using status variables.
Another difference is that we view a tableau to be a (tree) derivation
rather than a model as is the tendency in the description logic
literature.
In our context models are in general not tree models or even tree-like,
because \ALBOid does not have the tree-model property. The
unrestricted blocking rule can identify any pair of individuals meaning
that any structural properties, like being a tree model, are easily lost.
This shift in perspective is useful for developing tableau approaches
for description logics without a form of tree model property.

Practical considerations such as branch selection order, rule
application order, search strategies and heuristics are 
important considerations when embarking on an implementation.
We touched on practical aspects and described general notions
of fairness providing minimal conditions to guarantee sound, complete
and terminating deterministic tableau procedures.
The standard optimisations such as simplification, backjumping,
dynamic backtracking, different heuristics for branch selection and
rule selection, lemma generation, et cetera, are all compatible with
the presented calculi and procedures.
An obvious simplification, for example, is the on-the-fly
removal of double negations from concepts, and especially from roles,
as this reduces a number of applications of the 
\eqref{rule: not exists not}~rule.

The unrestricted blocking mechanism
has been implemented in \mettel tableau
prover~\cite{mettel_page,TishkovskySchmidtKhodadadi-MetTeL-2011},
its successor, the prover generator \mettelII~\cite{TishkovskySchmidtKhodadadi-MetTeL2-2012a}, 
and the \mspass first-order resolution theorem prover.
Tests with various description logic problems and 
problems in the TPTP library show that the general termination
mechanism by means of the unrestricted blocking rule does work in
practice~\cite{BaumgartnerSchmidt08}.
A more thorough discussion of these implementations and other
practical aspects, let alone performance results, is beyond the scope
of the paper.

\section{Conclusion}

We have presented a new, general tableau approach for deciding
expressive description logics with complex role operators, including 
`non-safe' occurrences of role negation.
The tableau decision procedures found in the description logic
literature, and implemented in existing tableau-based description
logic systems, can handle a large class of description logics, but
cannot currently handle description logics with full role negation
such as \ALB, \ALBO or \ALBOid.
The present paper closes this gap.
The introduced blocking mechanism opens the way for improving the OWL 2 standard
to cover larger decidable fragments of first-order logic and larger fragments of natural language
than it currently does, with obvious benefits for applications of
the semantic web.

\begin{acks}
We thank Hilverd Reker for useful discussions.
The work was financially supported by research grants~EP/D056152/1, EP/F068530/1
and
EP/H043748/1 of the UK EPSRC.
\end{acks}


\begin{thebibliography}{}

\bibitem[\protect\citeauthoryear{Baader, Calvanese, McGuinness, Nardi, and
  Patel-Schneider}{Baader et~al\mbox{.}}{2003}]{BaaderCalvaneseEtal03}
{\sc Baader, F.}, {\sc Calvanese, D.}, {\sc McGuinness, D.~L.}, {\sc Nardi,
  D.}, {\sc and} {\sc Patel-Schneider, P.~F.} 2003.
\newblock {\em Description Logic Handbook}.
\newblock Cambridge University Press.

\bibitem[\protect\citeauthoryear{Baader and Sattler}{Baader and
  Sattler}{2001}]{BaaderSattler-OTADL-2001}
{\sc Baader, F.} {\sc and} {\sc Sattler, U.} 2001.
\newblock An overview of tableau algorithms for description logics.
\newblock {\em Studia Logica\/}~{\em 69,\/}~1, 5--40.

\bibitem[\protect\citeauthoryear{Baumgartner and Schmidt}{Baumgartner and
  Schmidt}{2006}]{BaumgartnerSchmidt06}
{\sc Baumgartner, P.} {\sc and} {\sc Schmidt, R.~A.} 2006.
\newblock Blocking and other enhancements for bottom-up model generation
  methods.
\newblock In {\em Proceedings of the 3rd International Joint Conference on
  Automated Reasoning ({IJCAR}'06)}, {U.~Furbach} {and} {N.~Shankar}, Eds.
  Lecture Notes in Artificial Intelligence Series, vol. 4130. Springer,
  125--139.

\bibitem[\protect\citeauthoryear{Baumgartner and Schmidt}{Baumgartner and
  Schmidt}{2008}]{BaumgartnerSchmidt08}
{\sc Baumgartner, P.} {\sc and} {\sc Schmidt, R.~A.} 2008.
\newblock Blocking and other enhancements for bottom-up model generation
  methods.
\newblock Extended version of~\cite{BaumgartnerSchmidt06}. Manuscript.

\bibitem[\protect\citeauthoryear{Blackburn}{Blackburn}{2000}]{Blackburn-ILD-20%
00}
{\sc Blackburn, P.} 2000.
\newblock Internalizing labelled deduction.
\newblock {\em Journal of Logic and Computation\/}~{\em 10,\/}~1, 137--168.

\bibitem[\protect\citeauthoryear{Bry and Manthey}{Bry and
  Manthey}{1988}]{BryManthey87}
{\sc Bry, F.} {\sc and} {\sc Manthey, R.} 1988.
\newblock Proving finite satisfiability of deductive databases.
\newblock In {\em Proceedings of the 1st Workshop on Computer Science Logic
  ({CSL}'87)}, {E.~B{\"o}rger}, {H.~Kleine~B{\"u}ning}, {and} {M.~M. Richter},
  Eds. Lecture Notes in Computer Science Series, vol. 329. Springer, 44--55.

\bibitem[\protect\citeauthoryear{Bry and Torge}{Bry and
  Torge}{1998}]{BryTorge98}
{\sc Bry, F.} {\sc and} {\sc Torge, S.} 1998.
\newblock A deduction method complete for refutation and finite satisfiability.
\newblock In {\em Proceedings of the 6th European Conference on Logics in
  Artificial Intelligence ({JELIA}'98)}, {J.~Dix}, {L.~Fari{\~n}as~del Cerro},
  {and} {U.~Furbach}, Eds. Lecture Notes in Computer Science Series, vol. 1489.
  Springer, 1--17.

\bibitem[\protect\citeauthoryear{De~Nivelle and Pratt-Hartmann}{De~Nivelle and
  Pratt-Hartmann}{2001}]{DeNivellePrattHartmann01}
{\sc De~Nivelle, H.} {\sc and} {\sc Pratt-Hartmann, I.} 2001.
\newblock A resolution-based decision procedure for the two-variable fragment
  with equality.
\newblock In {\em Proceedings of the 1st International Joint Conference on
  Automated Reasoning ({IJCAR}'01)}, {R.~Gor{\'e}}, {A.~Leitsch}, {and}
  {T.~Nipkow}, Eds. Lecture Notes in Artificial Intelligence Series, vol. 2083.
  Springer, 211--225.

\bibitem[\protect\citeauthoryear{De~Nivelle, Schmidt, and Hustadt}{De~Nivelle
  et~al\mbox{.}}{2000}]{DeNivelleSchmidtHustadt00}
{\sc De~Nivelle, H.}, {\sc Schmidt, R.~A.}, {\sc and} {\sc Hustadt, U.} 2000.
\newblock Resolution-based methods for modal logics.
\newblock {\em Logic Journal of the {IGPL}\/}~{\em 8,\/}~3, 265--292.

\bibitem[\protect\citeauthoryear{del Cerro and Gasquet}{del Cerro and
  Gasquet}{2002}]{delCerroGasquet-GF+-2002}
{\sc del Cerro, L.~F.} {\sc and} {\sc Gasquet, O.} 2002.
\newblock A general framework for pattern-driven modal tableaux.
\newblock {\em Logic Journal of the {IGPL}\/}~{\em 1,\/}~1, 51--83.

\bibitem[\protect\citeauthoryear{Fitting}{Fitting}{1972}]{Fitting-TMF+-1972}
{\sc Fitting, M.} 1972.
\newblock Tableau methods of proof for modal logics.
\newblock {\em Notre Dame Journal of Formal Logic\/}~{\em 13,\/}~2, 237--247.

\bibitem[\protect\citeauthoryear{Gargov and Passy}{Gargov and
  Passy}{1990}]{GargovPassy90}
{\sc Gargov, G.} {\sc and} {\sc Passy, S.} 1990.
\newblock A note on {Boolean} modal logic.
\newblock In {\em Mathematical Logic: Proceedings of the 1988 {H}eyting
  summerschool}, {P.~P. Petkov}, Ed. Plenum Press, New York, 299--309.

\bibitem[\protect\citeauthoryear{Gargov, Passy, and Tinchev}{Gargov
  et~al\mbox{.}}{1987}]{GargovPassyTinchev87}
{\sc Gargov, G.}, {\sc Passy, S.}, {\sc and} {\sc Tinchev, T.} 1987.
\newblock Modal environment for {B}oolean speculations.
\newblock In {\em Mathematical Logic and its Applications: Proceedings of the
  1986 {G}{\"o}del Conference}, {D.~Skordev}, Ed. Plenum Press, 253--263.

\bibitem[\protect\citeauthoryear{Gr{\"a}del, Kolaitis, and Vardi}{Gr{\"a}del
  et~al\mbox{.}}{1997}]{GraedelKolaitisVardi97}
{\sc Gr{\"a}del, E.}, {\sc Kolaitis, P.~G.}, {\sc and} {\sc Vardi, M.~Y.} 1997.
\newblock On the decision problem for two-variable first-order logic.
\newblock {\em Bulletin of the Section of Logic\/}~{\em 3}, 53--69.

\bibitem[\protect\citeauthoryear{Hustadt and Schmidt}{Hustadt and
  Schmidt}{1999}]{HustadtSchmidt99b}
{\sc Hustadt, U.} {\sc and} {\sc Schmidt, R.~A.} 1999.
\newblock On the relation of resolution and tableaux proof systems for
  description logics.
\newblock In {\em Proceedings of the Sixteenth International Joint Conference
  on Artificial Intelligence (IJCAI'99)}, {T.~Dean}, Ed. Morgan Kaufmann,
  110--115.

\bibitem[\protect\citeauthoryear{Hustadt and Schmidt}{Hustadt and
  Schmidt}{2000}]{HustadtSchmidt00a}
{\sc Hustadt, U.} {\sc and} {\sc Schmidt, R.~A.} 2000.
\newblock Issues of decidability for description logics in the framework of
  resolution.
\newblock In {\em Automated Deduction in Classical and Non-Classical Logics},
  {R.~Caferra} {and} {G.~Salzer}, Eds. Lecture Notes in Artificial Intelligence
  Series, vol. 1761. Springer, 191--205.

\bibitem[\protect\citeauthoryear{Hustadt, Schmidt, and Weidenbach}{Hustadt
  et~al\mbox{.}}{1999}]{HustadtSchmidtWeidenbach99}
{\sc Hustadt, U.}, {\sc Schmidt, R.~A.}, {\sc and} {\sc Weidenbach, C.} 1999.
\newblock {MSPASS}: Subsumption testing with {SPASS}.
\newblock In {\em Proceedings of the International Workshop on Description
  Logics'99}, {P.~Lambrix}, {A.~Borgida}, {M.~Lenzerini}, {R.~M{\"o}ller},
  {and} {P.~Patel-Schneider}, Eds. Link{\"o}ping University, 136--137.

\bibitem[\protect\citeauthoryear{Hustadt, Tishkovsky, Wolter, and
  Zakharyaschev}{Hustadt
  et~al\mbox{.}}{2006}]{HustadtTishkovskyWolterZakharyaschev-ARMT-2006}
{\sc Hustadt, U.}, {\sc Tishkovsky, D.}, {\sc Wolter, F.}, {\sc and} {\sc
  Zakharyaschev, M.} 2006.
\newblock Automated reasoning about metric and topology ({S}ystem description).
\newblock In {\em Proceedings of the 10th European Conference on Logics in
  Artificial Intelligence ({JELIA}'06)}, {M.~Fisher}, {W.~van~der Hoek},
  {B.~Konev}, {and} {A.~Lisitsa}, Eds. Lecture Notes in Artificial Intelligence
  Series, vol. 4160. Springer, 490--493.

\bibitem[\protect\citeauthoryear{Lutz and Sattler}{Lutz and
  Sattler}{2002}]{LutzSattler02}
{\sc Lutz, C.} {\sc and} {\sc Sattler, U.} 2002.
\newblock The complexity of reasoning with {B}oolean modal logics.
\newblock In {\em Advances in Modal Logics, Vol. 3}, {F.~Wolter}, {H.~Wansing},
  {M.~de~Rijke}, {and} {M.~Zakharyaschev}, Eds. CSLI Publ.

\bibitem[\protect\citeauthoryear{Lutz and Walther}{Lutz and
  Walther}{2004}]{LutzWalther04}
{\sc Lutz, C.} {\sc and} {\sc Walther, D.} 2004.
\newblock {PDL} with negation of atomic programs.
\newblock In {\em Automated Reasoning}, {D.~A. Basin} {and} {M.~Rusinowitch},
  Eds. Lecture Notes in Computer Science Series, vol. 3097. Springer, 259--273.

\bibitem[\protect\citeauthoryear{Massacci}{Massacci}{2001}]{Massacci01}
{\sc Massacci, F.} 2001.
\newblock Decision procedures for expressive description logics with
  intersection, composition, converse of roles and role identity.
\newblock In {\em Proceedings of the 17th International Joint Conference on
  Artificial Intelligence ({IJCAI}'01)}, {B.~Nebel}, Ed. Morgan Kaufmann Publ.,
  Inc., 193--198.

\bibitem[\protect\citeauthoryear{Mortimer}{Mortimer}{1975}]{Mortimer75}
{\sc Mortimer, M.} 1975.
\newblock On languages with two variables.
\newblock {\em Zeitschrift f{\"u}r mathematische Logik und Grundlagen der
  Mathematik\/}~{\em 21}, 135--140.

\bibitem[\protect\citeauthoryear{Motik, Shearer, and Horrocks}{Motik
  et~al\mbox{.}}{2009}]{MotikShearerHorrocks09}
{\sc Motik, B.}, {\sc Shearer, R.}, {\sc and} {\sc Horrocks, I.} 2009.
\newblock Hypertableau reasoning for description logics.
\newblock {\em Journal of Artificial Intelligence Research\/}~{\em 36},
  165--228.

\bibitem[\protect\citeauthoryear{{OWL}}{{OWL}}{2004}]{owl}
{\sc {OWL}}. 2004.
\newblock {W}eb {O}ntology {L}anguage ({OWL}).
\newblock \url{http://www.w3.org/2004/OWL}.

\bibitem[\protect\citeauthoryear{{OWL~2}}{{OWL~2}}{2009}]{owl2}
{\sc {OWL~2}}. 2009.
\newblock {OWL~2} {W}eb {O}ntology {L}anguage: Structural specification and
  functional-style syntax.
\newblock \url{http://www.w3.org/TR/owl2-syntax/}.

\bibitem[\protect\citeauthoryear{Reker}{Reker}{2011}]{Reker-PhD-2011}
{\sc Reker, H.} 2011.
\newblock Tableau-based reasoning for decidable fragments of first-order logic.
\newblock Ph.D. thesis, School of Computer Science, The University of
  Manchester.

\bibitem[\protect\citeauthoryear{Riazanov and Voronkov}{Riazanov and
  Voronkov}{1999}]{RiazanovVoronkov99}
{\sc Riazanov, A.} {\sc and} {\sc Voronkov, A.} 1999.
\newblock Vampire.
\newblock In {\em Automated Deduction---CADE-16}, {H.~Ganzinger}, Ed. Lecture
  Notes in Artificial Intelligence Series, vol. 1632. Springer, 292--296.

\bibitem[\protect\citeauthoryear{Schmidt}{Schmidt}{2006a}]{Schmidt06b}
{\sc Schmidt, R.~A.} 2006a.
\newblock Developing modal tableaux and resolution methods via first-order
  resolution.
\newblock In {\em Advances in Modal Logic, Vol. 6}, {G.~Governatori}, {I.~M.
  Hodkinson}, {and} {Y.~Venema}, Eds. College Publications, London, 1--26.

\bibitem[\protect\citeauthoryear{Schmidt}{Schmidt}{2006b}]{Schmidt06Dagstuhl}
{\sc Schmidt, R.~A.} 2006b.
\newblock Solvability with resolution of problems in the
  {B}ernays-{S}ch\"onfinkel class.
\newblock In {\em Deduction and Applications}, {F.~Baader}, {P.~Baumgartner},
  {R.~Nieuwenhuis}, {and} {A.~Voronkov}, Eds. Number 05431 in Dagstuhl Seminar
  Proceedings. IBFI, Schloss Dagstuhl, Germany, 16.
\newblock Abstract of Dagstuhl presentation of joint results with U. Hustadt
  (Oct.\ 2005).

\bibitem[\protect\citeauthoryear{Schmidt}{Schmidt}{2009}]{Schmidt09d}
{\sc Schmidt, R.~A.} 2009.
\newblock A new methodology for developing deduction methods.
\newblock {\em Annals of Mathematics and Artificial Intelligence\/}~{\em
  55,\/}~1--2, 155--187.

\bibitem[\protect\citeauthoryear{Schmidt, Orlowska, and Hustadt}{Schmidt
  et~al\mbox{.}}{2004}]{SchmidtOrlowskaHustadt04a}
{\sc Schmidt, R.~A.}, {\sc Orlowska, E.}, {\sc and} {\sc Hustadt, U.} 2004.
\newblock Two proof systems for {P}eirce algebras.
\newblock In {\em Relational and Kleene-Algebraic Methods in Computer Science:
  7th International Seminar on Relational Methods in Computer Science and 2nd
  International Workshop on Applications of Kleene Algebra ({RelMiCS}'03)},
  {R.~Berghammer}, {B.~M{\"o}ller}, {and} {G.~Struth}, Eds. Lecture Notes in
  Computer Science Series, vol. 3051. Springer, 238--251.

\bibitem[\protect\citeauthoryear{Schmidt and Tishkovsky}{Schmidt and
  Tishkovsky}{2007}]{SchmidtTishkovsky-UTD+-2007}
{\sc Schmidt, R.~A.} {\sc and} {\sc Tishkovsky, D.} 2007.
\newblock Using tableau to decide expressive description logics with role
  negation.
\newblock In {\em Proceedings of the 6th International Semantic Web Conference
  and the 2nd Asian Semantic Web Conference ({ISWC}'07)}, {K.~Aberer}, {K.-S.
  Choi}, {N.~Noy}, {D.~Allemang}, {K.-I. Lee}, {L.~Nixon}, {J.~Golbeck},
  {P.~Mika}, {D.~Maynard}, {R.~Mizoguchi}, {G.~Schreiber}, {and}
  {P.~Cudr\'{e}-Mauroux}, Eds. Lecture Notes in Computer Science Series, vol.
  4825. Springer, 438--451.

\bibitem[\protect\citeauthoryear{Schmidt and Tishkovsky}{Schmidt and
  Tishkovsky}{2008}]{SchmidtTishkovsky-GTM+-2008}
{\sc Schmidt, R.~A.} {\sc and} {\sc Tishkovsky, D.} 2008.
\newblock A general tableau method for deciding description logics, modal
  logics and related first-order fragments.
\newblock In {\em Proceedings of the 4th International Joint Conference on
  Automated Reasoning ({IJCAR}'08)}, {A.~Armando}, {P.~Baumgartner}, {and}
  {G.~Dowek}, Eds. Lecture Notes in Computer Science Series, vol. 5195.
  Springer, 194--209.

\bibitem[\protect\citeauthoryear{Schmidt and Tishkovsky}{Schmidt and
  Tishkovsky}{2011}]{SchmidtTishkovsky-ASTC-lmcs-2011}
{\sc Schmidt, R.~A.} {\sc and} {\sc Tishkovsky, D.} 2011.
\newblock Automated synthesis of tableau calculi.
\newblock {\em Logical Methods in Computer Science\/}~{\em 7,\/}~2:6, 1--32.

\bibitem[\protect\citeauthoryear{Schulz}{Schulz}{2002}]{Schulz02}
{\sc Schulz, S.} 2002.
\newblock {E}: A brainiac theorem prover.
\newblock {\em AI Communications\/}~{\em 15,\/}~2--3, 111--126.

\bibitem[\protect\citeauthoryear{Tishkovsky}{Tishkovsky}{2005}]{mettel_page}
{\sc Tishkovsky, D.} 2005.
\newblock The \mettel prover.
\newblock \url{http://www.mettel-prover.org/}.

\bibitem[\protect\citeauthoryear{Tishkovsky, Schmidt, and Khodadadi}{Tishkovsky
  et~al\mbox{.}}{2011}]{TishkovskySchmidtKhodadadi-MetTeL-2011}
{\sc Tishkovsky, D.}, {\sc Schmidt, R.~A.}, {\sc and} {\sc Khodadadi, M.} 2011.
\newblock \mettel: A tableau prover with logic-independent inference engine.
\newblock In {\em Proceedings of the 20th International Conference on Automated
  Reasoning with Analytic Tableaux and Related Methods ({TABLEAUX}'11)},
  {K.~Br{\"u}nnler} {and} {G.~Metcalfe}, Eds. Lecture Notes in Computer Science
  Series, vol. 6793. Springer, 242--247.

\bibitem[\protect\citeauthoryear{Tishkovsky, Schmidt, and Khodadadi}{Tishkovsky
  et~al\mbox{.}}{2012}]{TishkovskySchmidtKhodadadi-MetTeL2-2012a}
{\sc Tishkovsky, D.}, {\sc Schmidt, R.~A.}, {\sc and} {\sc Khodadadi, M.} 2012.
\newblock The tableau prover generator {\mettelII}.
\newblock In {\em Proceedings of the 13th European Conference on Logics in
  Artificial Intelligence ({JELIA}'12)}, {L.~F. del Cerro}, {A.~Herzig}, {and}
  {J.~Mengin}, Eds. Lecture Notes in Artificial Intelligence. Springer.
\newblock To appear.

\bibitem[\protect\citeauthoryear{Tobies}{Tobies}{2001}]{Tobies01}
{\sc Tobies, S.} 2001.
\newblock Decidability results and practical algorithms for logics in knowledge
  representation.
\newblock Ph.D. thesis, RWTH Aachen, Germany.

\bibitem[\protect\citeauthoryear{Weidenbach, Schmidt, Hillenbrand, Rusev, and
  Topic}{Weidenbach et~al\mbox{.}}{2007}]{WeidenbachSchmidtEtAl07}
{\sc Weidenbach, C.}, {\sc Schmidt, R.~A.}, {\sc Hillenbrand, T.}, {\sc Rusev,
  R.}, {\sc and} {\sc Topic, D.} 2007.
\newblock System description: {SPASS} version 3.0.
\newblock In {\em Automated Deduction---CADE-21}, {F.~Pfenning}, Ed. Lecture
  Notes in Artificial Intelligence Series, vol. 4603. Springer, 514--520.

\end{thebibliography}
\end{document}